\documentclass[a4paper,11pt]{article} 

\usepackage{amsmath, amssymb, amsthm, mathtools}
\usepackage[mathscr]{euscript}     
\usepackage{dsfont} 
\usepackage{graphicx} 
\usepackage{xspace}
\usepackage{qip} 
\usepackage[section]{thmenvironments} 
\usepackage{layoutcommands} 

\usepackage{underscore} 
\usepackage[shortcuts]{extdash} 
\usepackage{authblk} 
\usepackage{appendix} 
\usepackage{microtype} 
\usepackage{enumitem} 
\usepackage[margin=10pt,font=small,labelfont=bf,labelsep=endash,hypcap=true]{caption}
\usepackage[margin=10pt,font=small,labelfont=bf,labelformat=parens,subrefformat=parens,labelsep=space,hypcap=false]{subcaption}
\usepackage{xcolor}
\usepackage{tikz} 
\usetikzlibrary{arrows,fit,positioning,shapes,decorations.pathmorphing} 

\usepackage[linktoc=all]{hyperref} 
\usepackage[open,numbered]{bookmark} 
\usepackage{cite} 
\usepackage{hyperlinks} 
\usepackage{eprint} 

\hypersetup{
  pdftitle={Quantum authentication with key recycling},
  pdfauthor={Christopher Portmann},
  pdfsubject={Quantum cryptography},
  pdfstartview={FitH}
}

\title{Quantum authentication with key recycling}

\author{Christopher Portmann\email{chportma@.ethz.ch}}

\affil{Institute for Theoretical Physics, ETH Zurich, 8093 Zurich, Switzerland.}

\date{\today}

\newcommand{\cauth}{\text{auth}}
\newcommand{\qauth}{\text{q-auth}}
\newcommand{\ecc}{\text{ecc}}
\newcommand{\key}{\text{key}}
\newcommand{\acc}{\textsf{acc}}
\newcommand{\rej}{\textsf{rej}}
\newcommand{\sym}[2]{\brs{#1,#2}_{\text{Sp}}}

\begin{document}

\maketitle

\begin{abstract}
  We show that a family of quantum authentication protocols introduced
  in [Barnum et al., FOCS 2002] can be used to construct a secure
  quantum channel and additionally recycle all of the secret key if
  the message is successfully authenticated, and recycle part of the
  key if tampering is detected. We give a full security proof that
  constructs the secure channel given only insecure noisy channels and
  a shared secret key. We also prove that the number of recycled key
  bits is optimal for this family of protocols, i.e., there exists an
  adversarial strategy to obtain all non-recycled bits. Previous works
  recycled less key and only gave partial security proofs, since they
  did not consider all possible distinguishers (environments) that may
  be used to distinguish the real setting from the ideal secure
  quantum channel and secret key resource.
\end{abstract}



\section{Introduction}
\label{sec:intro}

\subsection{Reusing a one-time pad}
\label{sec:intro.reusing}

A one-time pad can famously be used only once~\cite{Sha49}, i.e., a
secret key as long as the message is needed to encrypt it with
information\-/theoretic security. But this does not hold anymore if
the honest players can use quantum technologies to communicate. A
quantum key distribution (QKD) protocol~\cite{BB84,SBCDLP09} allows
players to expand an initial short secret key, and thus encrypt
messages that are longer than the length of the original key. Instead
of first expanding a key, and then using it for encryption, one can
also swap the order if the initial key is long enough: one first
encrypts a message, then recycles the key. This is possible due to the
same physical principles as QKD: quantum states cannot be cloned, so
if the receiver holds the exact cipher that was sent, the adversary
cannot have a copy, and thus does not have any information about the
key either, so it may be reused. This requires the receiver to verify
the authenticity of the message received, and if this process fails, a
net key loss occurs \--- the same happens in QKD: if an adversary
tampers with the communication, the players have to abort and also
lose some of the initial secret key.

\subsection{Quantum authentication and key recycling}
\label{sec:intro.auth}

Some ideas for recycling encryption keys using quantum ciphers were
already proposed in 1982~\cite{BBB82}. Many years later, Damg{\aa}rd
et al.~\cite{DPS05} (see also \cite{DPS14,FS17}) showed how to encrypt
a classical message in a quantum state and recycle the key. At roughly
the same time, the first protocol for authenticating quantum messages
was proposed by Barnum et al.~\cite{BCGST02}, who also proved that
quantum authentication necessarily encrypts the message as
well. Gottesman~\cite{Got03} then showed that after the message is
successfully authenticated by the receiver, the key can be leaked to
the adversary without compromising the confidentiality of the message.
And Oppenheim and Horodecki~\cite{OH05} adapted the protocol of
\cite{BCGST02} to recycle key.  But the security definitions in these
initial works on quantum authentication have a major flaw: they do not
consider the possibility that an adversary may hold a purification of
the quantum message that is encrypted. This was corrected by Hayden,
Leung and Mayers~\cite{HLM11}, who give a composable security
definition for quantum authentication with key recycling. They then
show that the family of protocols from~\cite{BCGST02} are secure, and
prove that one can recycle part of the key if the message is accepted.

The security proof from~\cite{HLM11} does however not consider all
possible environments. Starting in works by Simmons in the 80's and
then Stinson in the 90's (see, for example,
\cite{Sim85,Sim88,Sti90,Sti94}) the classical literature on
authentication studies two types of attacks: \emph{substitution
  attacks} \--- where the adversary obtains a valid pair of message
and cipher\footnote{Here we use the term \emph{cipher} to refer to the
  authenticated message, which is often a pair of the original message
  and a tag or message authentication code (MAC), but not
  necessarily.} and attempts to substitute the cipher with one that
will decode to a different message \--- and \emph{impersonation
  attacks} \--- where the adversary directly sends a forged cipher to
the receiver, without knowledge of a valid message\-/cipher pair. To
the best of our knowledge, there is no proof showing that security
against impersonation attacks follows from security against
substitution attacks, hence the literature analyzes both attacks
separately.\footnote{In fact, one can construct examples where the
  probability of a successful impersonation attack is higher than the
  probability of a successful substitution attack. This can occur,
  because any valid cipher generated by the adversary is considered a
  successful impersonation attack, whereas only a cipher that decrypts
  to a different message is considered a successful substitution
  attack.}  This is particularly important in the case of composable
security, which aims to prove the security of the protocol when used
in any arbitrary environment, therefore also in an environment that
first sends a forged cipher to the receiver, learns wether it is
accepted or rejected, then provides a message to the sender to be
authenticated, and finally obtains the cipher for this message. This
is all the more crucial when key recycling is involved, since the
receiver will already recycle (part of) the key upon receiving the
forged cipher, which is immediately given to the environment. The work
of Hayden et al.~\cite{HLM11} only considers environments that perform
substitution attacks \--- i.e., first provide the sender with a
message, then change the cipher, and finally learn the outcome of the
authentication as well as receive the recycled key. Hence they do not
provide a complete composable security proof of quantum
authentication, which prevents the protocol from being composed in an
arbitrary environment.\footnote{For example, QKD can be broken if the
  underlying authentication scheme is vulnerable to impersonation
  attacks, because Eve could trick Alice into believing that the
  quantum states have been received by Bob so that she releases the
  basis information.}

More recently, alternative security definitions for quantum
authentication have been proposed, both without~\cite{DNS12,BW16} and
with~\cite{GYZ16} key recycling (see also \cite{AM16}). These still
only consider substitution attacks, and furthermore, they are,
strictly speaking, not composable. While it is possible to prove that
these definitions imply security in a composable framework (if one
restricts the environment to substitution attacks), the precise way in
which the error $\eps$ carries over to the framework has not been
worked out in any of these papers. If two protocols with composable
errors $\eps$ and $\delta$ are run jointly (e.g., one is a subroutine
of the other), the error of the composed protocol is bounded by the
sum of the individual errors, $\eps+\delta$. If a security definition
does not provide a bound on the composable error, then one cannot
evaluate the new error after composition.\footnote{In an asymptotic
  setting, one generally does not care about the exact error, as long
  as it is negligible. But for any (finite) implementation, the exact
  value is crucial, since without it, it is impossible to set the
  parameters accordingly, e.g., how many qubits should one send to get
  an error $\eps \leq 10^{-18}$.}  For example, quantum authentication
with key recycling requires a backwards classical authentic channel,
so that the receiver may tell the sender that the message was
accepted, and allow her to recycle the key. The error of the complete
protocol is thus the sum of errors of the quantum authentication and
classical authentication protocols. Definitions such as those of
\cite{DNS12,BW16,GYZ16} are not sufficient to directly obtain a bound
on the error of such a composed protocol.

In the other direction, it is immediate that if a protocol is
$\eps$\=/secure according to the composable definition used in this
work, then it is secure according to \cite{DNS12,BW16,GYZ16} with the
same error $\eps$. More precisely, proving that the quantum
authentication scheme constructs a secure channel is sufficient to
satisfy \cite{DNS12,BW16} \--- i.e., the ideal functionality is a
secure channel which only allows the adversary to decide if the
message is delivered, but does not leak any information about the
message to the adversary except its length (confidentiality), nor does
it allow the adversary to modify the message (authenticity). And
proving that the scheme constructs a secure channel that additionally
generates fresh secret key is sufficient to satisfy the definition of
\emph{total authentication} from \cite{GYZ16}. Garg et
al.~\cite{GYZ16} also propose a definition of \emph{total
  authentication with key leakage}, which can be captured in a
composable framework by a secure channel that generates fresh key and
leaks some of it to the adversary. This is however a somewhat
unnatural ideal functionality, since it requires a deterministic
leakage function, which may be unknown or not exist, e.g., in concrete
protocols the bits leaked can depend on the adversary's behavior \---
this is the case for the \emph{trap code}~\cite{BGS13,BW16}, which we
discuss further in \secref{sec:conclusion}. The next natural step for
players in such a situation is to extract a secret key from the
partially leaked key, and thus the more natural ideal functionality is
what one obtains after this privacy amplification
step~\cite{BBCM95,RK05}: a secure channel that generates fresh secret
key, but where the key generated may be shorter than the key
consumed. The ideal functionality used in the current work provides
this flexibility: the amount of fresh key generated is a parameter
which may be chosen so as to produce less key than consumed, the same
amount, or even more.\footnote{One may obtain more key than consumed
  by using the constructed secure channel to share secret key between
  the players. We use this technique to compensate for key lost in a
  classical authentication subroutine, that cannot be recycled.}
Hence, with one security definition, we encompass all these different
cases \--- no key recycling, partial key recycling, total key
recycling, and even a net gain of secret key. Furthermore, having all
these notions captured by ideal functionalities makes for a
particularly simple comparison between the quite technical definitions
appearing in \cite{DNS12,BW16,GYZ16}.

\subsection{Contributions}
\label{sec:intro.contributions}

In this work we use the Abstract Cryptography (AC)
framework~\cite{MR11} to model the composable security of quantum
authentication with key recycling. AC views cryptography as a resource
theory: a protocol constructs a (strong) resource given some (weak)
resources. For example, the quantum authentication protocols that we
analyze construct two resources: a secure quantum channel \--- a
channel that provides both \emph{confidentiality} and
\emph{authenticity} \--- and a secret key resource that shares a fresh
key between both players. In order to construct these resources, we
require shared secret key, an insecure (noiseless) quantum channel and
a backwards authentic classical channel. These are all resources, that
may in turn be constructed from weaker resources, e.g., the classical
authentic channel can be constructed from a shared secret key and an
insecure channel, and noiseless channels are constructed from noisy
channels. Due to this constructive aspect of the framework, it is also
called \emph{constructive cryptography} in the
literature~\cite{Mau12,MR16}.

Although this approach is quite different from the Universal
Composability (UC) framework~\cite{Can01,Can13}, in the setting
considered in this work \--- with one dishonest player and where
recipients are denoted by classical strings\footnote{In a more general
  setting, a message may be in a superposition of ``sent'' and ``not
  sent'' or a superposition of ``sent to Alice'' and ``sent to Bob'',
  which cannot be modeled in UC, but is captured in
  AC~\cite{PMMRT17}.} \--- the two frameworks are essentially
equivalent and the same results could have been derived with a quantum
version of UC~\cite{Unr10}. In UC, the constructed resource would be
called \emph{ideal functionality}, and the resources used in the
construction are setup assumptions.

We thus first formally define the ideal resources constructed by the
quantum authentication protocol with key recycling \--- the secure
channel and key resource mentioned in this introduction \--- as well
as the resources required by this construction. We then prove that a
family of quantum authentication protocols proposed by Barnum et
al.~\cite{BCGST02} satisfy this construction, i.e., no distinguisher
(called environment in UC) can distinguish the real system from the
ideal resources and simulator except with an advantage $\eps$ that is
exponentially small in the security parameter. This proof considers
all distinguishers allowed by quantum mechanics, including those that
perform impersonation attacks.

We show that in the case where the message is accepted, every bit of
key may be recycled. And if the message is rejected, one may recycle
all the key except the bits used to one-time pad the
cipher.\footnote{Key recycling in the case of a rejected message is
  not related to any quantum advantage. A protocol does not leak more
  information about the key than (twice) the length of the cipher, so
  the rest may be reused. The same holds for classical
  authentication~\cite{Por14}.} We prove that this is optimal for the
family of protocols considered, i.e., an adversary may obtain all
non\-/recycled bits of key. This improves on previous results, which
recycled less key and only considered a subset of possible
environments. More specifically, Hayden et al.~\cite{HLM11}, while
also analyzing protocols from \cite{BCGST02}, only recycle part of the
key in case of an accept, and lose all the key in case of a
reject. Garg et al.~\cite{GYZ16} propose a new protocol, which they
prove can recycle all of the key in the case of an accept, but do not
consider key recycling in the case of a reject either. The protocols
we analyze are also more key efficient than that of~\cite{GYZ16}. We
give two instances which need $\Theta(m +\log 1/\eps)$ bits of initial
secret key, instead of the $\Theta((m +\log 1/\eps)^2)$ required by
\cite{GYZ16}, where $m$ is the length of the message and $\eps$ is the
error. Independently from this work, Alagic and Majenz~\cite{AM16}
proved that one of the instances analyzed here satisfies the weaker
security definition of \cite{GYZ16}.

Note that the family of protocols for which we provide a security
proof is a subset of the (larger) family introduced
in~\cite{BCGST02}. More precisely, Barnum et al.~\cite{BCGST02} define
quantum authentication protocols by composing a quantum one-time pad
and what they call a \emph{purity testing code} \--- which, with high
probability, will detect any noise that may modify the encoded message
\--- whereas we require a stricter notion, a \emph{strong purity
  testing code} \--- which, with high probability, will detect any
noise. This restriction on the family of protocols is necessary to
recycle all the key. In fact, there exists a quantum authentication
scheme, the \emph{trap code}~\cite{BGS13,BW16}, which is a member of
the larger class from~\cite{BCGST02} but not the stricter class
analyzed here, and which leaks part of the key to the adversary, even
upon a successful authentication of the message \--- this example is
discussed in \secref{sec:conclusion}.

We then give two explicit instantiations of this family of quantum
authentication protocols. The first is the construction used in
\cite{BCGST02}, which requires an initial key of length $2m+2n$, where
$m$ is the length of the message and $n$ is the security parameter,
and has error $\eps \leq 2^{-n/2+1} \sqrt{2m/n+2}$. The second is an
explicit unitary $2$\-/design~\cite{Dan05,DCEL09} discovered by
Chau~\cite{Cha05}, which requires $5m+4n$ bits of initial
key\footnote{The complete design would require $5m+5n$ bits of key,
  but we show that some of the unitaries are redundant when used for
  quantum authentication and can be dropped.} and has error
$\eps \leq 2^{-n/2+1}$. Both constructions have a net loss of $2m+n$
bits of key if the message fails authentication. Since several other
explicit quantum authentication protocols proposed in the literature
are instances of this family of schemes, our security proof is a proof
for these protocols as well \--- this is discussed further in
\secref{sec:conclusion}.

Finally, we show how to construct the resources used by the
protocol from nothing but insecure noisy channels and shared secret
key, and calculate the joint error of the composed protocols. We also
show how to compensate for the bits of key lost in the construction of
the backwards authentic channel, so that the composed protocol still
has a zero net key consumption if no adversary jumbles the
communication.

There is currently no work in the literature that analyzes the composable
security of quantum authentication without key recycling. And although
a security proof with key recycling is automatically a security proof
without key recycling, the parameters are not optimal \--- recycling
the key results in a larger error \--- and the proof given in this
paper is only valid for strong purity testing codes. So for
completeness, we provide a proof of security for quantum
authentication without key recycling in \appendixref{app:norecycle},
which is valid for weak purity testing codes and achieves an optimal
error.

\subsection{Structure of this paper}
\label{sec:intro.structure}

In \secref{sec:cc} we give a brief introduction to the main concepts
of AC, which are necessary to understand the notion of cryptographic
construction and corresponding security defintion. A more extended
introduction to AC is provided in \appendixref{app:ac}. In
\secref{sec:auth} we then define the resources constructed and used by
a quantum authentication scheme with key recycling. We introduce the
family of protocols from~\cite{BCGST02} that we analyze in this work,
and then prove that they construct the corresponding ideal
resources. We also prove that the number of recycled bits is
optimal. In \secref{sec:construction} we show how to construct the
various resources used by the quantum authentication protocol, and put
the pieces together to get a security statement for the joint protocol
that constructs the secure quantum channel and secret key resource
from nothing but noisy insecure channels and shared secret
key. Finally, in \secref{sec:conclusion} we discuss the relation
between some quantum authentication schemes that have appeared in the
literature and those analyzed here, as well as some open problems. An
overview of the appendices is given on \pref{app}.

\section{Constructive cryptography}
\label{sec:cc}

As already mentioned in \secref{sec:intro.contributions}, the AC
framework~\cite{MR11} models cryptography as a resource theory. In
this section we give a brief overview of how these constructive
statements are formalized. We illustrate this with an example taken
from \cite{Por14}, namely authentication of classical messages with
message authentication codes (MAC).  An expanded version of this
introduction to AC is provided in \appendixref{app:ac}.

In an $n$ player setting, a \emph{resource} is an object with $n$
interfaces, that allows every player to input messages and receive
other messages at her interface. The objects depicted in
\figref{fig:channel.dishonest} are examples of resources. The insecure
channel in \figref{fig:channel.dishonest.insecure} allows Alice to
input a message at her interface on the left and allows Bob to receive
a message at his interface on the right. Eve can intercept Alice's
message and insert a message of her choosing at her interface. The
authentic channel resource depicted in
\figref{fig:channel.dishonest.authentic} also allows Alice to send a
message and Bob to receive a message, but Eve's interface is more
limited than for the insecure channel: she can only decide if Bob
receives the message, an error symbol or nothing at all \--- by
inputing $0$, $1$, or nothing, respectively, at her interface \--- but
not tamper with the message being sent. The key resource drawn in
\figref{fig:resource.key} provides each player with a secret key when
requested. If two resources $\aK$ and $\aC$ are both available to the
players, we write $\aK \| \aC$ for the resource resulting from their
parallel composition \--- this is to be understood as the resources
being merged into one: the interfaces belonging to player $i$ are
simultaneously accessible to her as one new interface, which we depict
in \figref{fig:resource.composition}. In \appendixref{app:ac} we
provide a more detailed description of the resources from
\figref{fig:channel.dishonest} along a discussion of how to model them
mathematically.

\begin{figure}[tbp]
\subcaptionbox[Insecure channel with
adversary]{\label{fig:channel.dishonest.insecure}An insecure channel
  from Alice (on the left) to Bob (on the right) allows Eve (below) to
  intercept the message and insert a message of her own.}[.5\textwidth][c]{
\begin{tikzpicture}[
resource/.style={draw,thick,minimum width=3.2cm,minimum height=1cm},
filter/.style={draw,thick,minimum width=1.6cm,minimum height=.8cm},
sArrow/.style={->,>=stealth,thick},
sLine/.style={-,thick}]

\small

\def\t{2.2} 
\def\v{1.4} 
\def\w{1.2} 
\def\s{.4}

\node[resource] (ch) at (0,0) {};
\node[yshift=-1.5,above right] at (ch.north west) {\footnotesize
  Insecure channel $\aC$};
\node (alice) at (-\t,0) {};
\node[yshift=-1.5,above] at (alice) {\footnotesize
  Alice};
\node (bob) at (\t,0) {};
\node[yshift=-1.5,above] at (bob) {\footnotesize
  Bob};
\node (eve) at (0,-\w) {\footnotesize
  Eve};

\draw[sArrow] (alice.center) to (-\s,0) to (-\s,0 |- eve.north);
\draw[sArrow] (\s,0 |- eve.north) to (\s,0) to (bob.center);

\end{tikzpicture}}
\subcaptionbox[Authentic channel with
adversary]{\label{fig:channel.dishonest.authentic}An authentic channel
from Alice to Bob allows Eve (below) to
receive a copy of the message and choose whether Bob receives it, an
error symbol, or nothing at all.}[.5\textwidth][c]{
\begin{tikzpicture}[
resource/.style={draw,thick,minimum width=3.2cm,minimum height=1cm},
filter/.style={draw,thick,minimum width=1.6cm,minimum height=.8cm},
sArrow/.style={->,>=stealth,thick},
sLine/.style={-,thick}]

\small

\def\t{2.2} 
\def\v{1.4} 
\def\w{1.2} 
\def\s{.4}

\node[resource] (ch) at (0,0) {};
\node[yshift=-1.5,above right] at (ch.north west) {\footnotesize
  Authentic channel $\aA$};
\node (alice) at (-\t,0) {};
\node[yshift=-1.5,above] at (alice) {\footnotesize
  Alice};
\node (bob) at (\t,0) {};
\node[yshift=-1.5,above] at (bob) {\footnotesize
  Bob};
\node (eve) at (0,-\w) {\footnotesize
  Eve};

\draw[sLine] (alice.center) to (0,0) to node (handle) {} (330:2*\s);
\draw[sArrow] (2*\s,0) to (bob.center);
\draw[sArrow] (-\s,0) to (-\s,0 |- eve.north);
\draw[double,thick] (handle.center |- eve.north) to node[auto,swap,pos=.25] {$0,1$} (handle.center);

\end{tikzpicture}}

\vspace{6pt}

\subcaptionbox[Secret key resource]{\label{fig:resource.key}A secret
  key resource distributes a perfectly uniform key $k$ to the players
  when they send a request \tt{req.}}[.5\textwidth][c]{
\begin{tikzpicture}[
resource/.style={draw,thick,minimum width=3.2cm,minimum height=1cm},
sArrow/.style={->,>=stealth,thick}]

\small

\def\t{2.3} 
\def\v{.2}

\node[resource] (keyBox) at (0,0) {};
\node[draw] (key) at (0,-\v) {key};
\node[yshift=-1.5,above right] at (keyBox.north west) {\footnotesize
  Secret key $\aK$};
\node (alice) at (-\t,0) {};
\node (bob) at (\t,0) {};
\node (up) at (0,\v) {};
\node (down) at (0,-\v) {};

\draw[sArrow] (alice |- up) to node[auto] {\footnotesize req.}
(keyBox.west |- up);
\draw[sArrow] (bob |- up) to node[auto,swap] {\footnotesize req.}
(keyBox.east |- up);
\draw[sArrow] (key.west |- down) to node[auto,pos=.8] {$k$} (alice |- down);
\draw[sArrow] (key.east |- down) to node[auto,swap,pos=.8] {$k$} (bob |- down);

\end{tikzpicture}}
\subcaptionbox[Insecure channel and
key]{\label{fig:resource.composition}If two resources $\aK$ and $\aC$
  are available to the players, we denote the composition of the two
  as the new resource $\aK \| \aC$.}[.5\textwidth][c]{
\begin{tikzpicture}[
resource/.style={draw,thick,minimum width=3.2cm,minimum height=1cm},
filter/.style={draw,thick,minimum width=1.6cm,minimum height=.8cm},
sArrow/.style={->,>=stealth,thick},
sLine/.style={-,thick}]

\small

\def\t{2.5} 
\def\w{1.2} 
\def\s{.4}
\def\u{1.5}
\def\v{.2}

\node[resource] (ch) at (0,0) {};
\node[yshift=-1.5,above right] at (ch.north west) {\footnotesize
  Insecure channel $\aC$};
\node (alice) at (-\t,0) {};
\node (bob) at (\t,0) {};

\draw[sArrow] (alice.center) to (-\s,0) to (-\s,-\w);
\draw[sArrow] (\s,-\w) to (\s,0) to (bob.center);

\node[resource] (keyBox) at (0,\u) {};
\node[draw] (key) at (0,\u-\v) {key};
\node[yshift=-1.5,above right] at (keyBox.north west) {\footnotesize
  Secret key $\aK$};
\node[above] (keyLabel) at (keyBox.north) {};
\node (up) at (0,\u+\v) {};
\node (down) at (0,\u-\v) {};

\draw[sArrow] (alice |- up) to node[auto,pos=.4] {\footnotesize req.}
(keyBox.west |- up);
\draw[sArrow] (bob |- up) to node[auto,swap,pos=.4] {\footnotesize req.}
(keyBox.east |- up);
\draw[sArrow] (key.west |- down) to node[auto,pos=.8] {$k$} (alice |- down);
\draw[sArrow] (key.east |- down) to node[auto,swap,pos=.8] {$k$} (bob |- down);

\node[draw,thick,fit=(keyBox)(ch)(keyLabel),inner sep=6] (comp) {};
\node[yshift=-1.5,above right] at (comp.north west) {\footnotesize
  Composed resource $\aK \| \aC$};

\end{tikzpicture}}

\caption[Examples of resources]{\label{fig:channel.dishonest}Some examples of
  resources. The insecure channel on the top left could transmit either
  classical or quantum messages. The authentic channel on the top right is
  necessarily classical, since it clones the message.}
\end{figure}

\emph{Converters} capture operations that a player might perform
locally at her interface. For example, if the players share a key
resource and an insecure channel, Alice might decide to append a MAC
to her message. This is modeled as a converter $\pi^\cauth_A$ that
obtains the message $x$ at the outside interface, obtains a key at the
inside interface from a key resource $\aK$ and sends $(x,h_k(x))$ on
the insecure channel $\aC$, where $h_k$ is taken from a family of
strongly $2$\-/universal hash functions~\cite{WC81,Sti94}. We
illustrate this in \figref{fig:classicalauth.real}. Converters are
always drawn with rounded corners. If a converter $\alpha_i$ is
connected to the $i$ interface of a resource $\aR$, we write
$\alpha_i\aR$ or $\aR\alpha_i$ for the new resource obtained by
connecting the two.\footnote{In this work we adopt the convention of
  writing converters at the $A$ and $B$ interfaces on the left and
  converters at the $E$ interface on the right, though there is no
  mathematical difference between $\alpha_i\aR$ and $\aR\alpha_i$.}

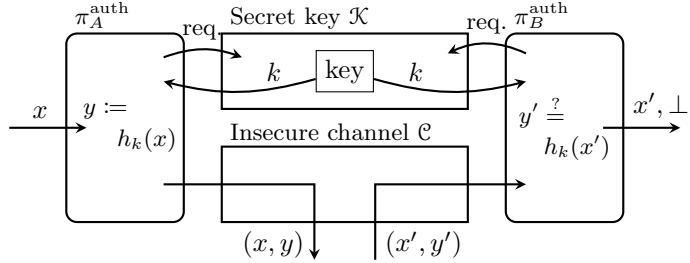
\begin{figure}[tb]
\begin{centering}

\begin{tikzpicture}[
thinResource/.style={draw,thick,minimum width=1.618*2cm,minimum height=1cm},
protocol/.style={draw,thick,minimum width=1.545cm,minimum
  height=2.5cm,rounded corners},
pnode/.style={minimum width=1cm,minimum height=.5cm},
sArrow/.style={->,>=stealth,thick}]

\small

\def\t{4.413} 
\def\u{2.89} 
\def\v{.75}
\def\i{.846} 

\node[pnode] (a1) at (-\u,\v) {};
\node[pnode] (a2) at (-\u,0) {};
\node[pnode] (a3) at (-\u,-\v) {};
\node[protocol,text width=1.1cm] (a) at (-\u,0) {\footnotesize $y
  \coloneqq $\\$\quad \ h_k(x)$};
\node[yshift=-2,above right] at (a.north west) {\footnotesize
  $\pi^{\cauth}_A$};
\node (alice) at (-\t,0) {};

\node[pnode] (b1) at (\u,\v) {};
\node[pnode] (b2) at (\u,0) {};
\node[pnode] (b3) at (\u,-\v) {};
\node[protocol,text width=1.2cm] (b) at (\u,0) {\footnotesize $y' \stackrel{?}{=}$\\$\quad h_k(x')$};
\node[yshift=-2,above right] at (b.north west) {\footnotesize $\pi^{\cauth}_B$};
\node (bob) at (\t,0) {};

\node[thinResource] (keyBox) at (0,\v) {};
\node[pnode] (keyInnerBoxLeft) at (-\i,\v) {};
\node[pnode] (keyInnerBoxRight) at (\i,\v) {};
\node[draw] (key) at (0,\v) {key};
\node[yshift=-2,above right] at (keyBox.north west) {\footnotesize
  Secret key $\aK$};
\node[thinResource] (channel) at (0,-\v) {};
\node[yshift=-1.5,above right] at (channel.north west) {\footnotesize
  Insecure channel $\aC$};
\node (eveleft) at (-.4,-1.75) {};
\node (everight) at (.4,-1.75) {};
\node (ajunc) at (eveleft |- a3) {};
\node (bjunc) at (everight |- b3) {};

\draw[sArrow,bend left=15] (key) to node[auto,swap,pos=.15,yshift=-2] {$k$} (a1);
\draw[sArrow,bend right=15] (key) to node[auto,pos=.15,yshift=-2] {$k$} (b1);
\draw[sArrow,bend left=20] (a1) to node[auto,pos=.5] {\footnotesize req.} (keyInnerBoxLeft);
\draw[sArrow,bend right=25] (b1) to node[auto,swap,pos=.5] {\footnotesize req.} (keyInnerBoxRight);

\draw[sArrow] (alice.center) to  node[auto,pos=.4] {$x$} (a2);
\draw[sArrow] (b2) to node[auto,pos=.75] {$x',\bot$} (bob.center);

\draw[sArrow] (a3) to (ajunc.center)
to node[pos=.8,auto,swap] {$(x,y)$} (eveleft.center);
\draw[sArrow] (everight.center) to node[pos=.2,auto,swap] {$(x',y')$}
(bjunc.center) to (b3);

\end{tikzpicture}

\end{centering}
\caption[Real classical authentication
system]{\label{fig:classicalauth.real}The real system for a MAC
  protocol. Alice authenticates her message by appending a MAC to
  it. Bob checks if the MAC is correct and either accepts or rejects
  the message.}
\end{figure}

A protocol is then defined by a set of converters, one for every
honest player. Another type of converter that we need is a
\emph{filter}. The resources illustrated in
\figref{fig:channel.dishonest} depict a setting with an adversary that
has some control over these resources. For a cryptographic protocol to
be useful it is not sufficient to provide guarantees on what happens
when an adversary is present, one also has to provide a guarantee on
what happens when no adversary is present, e.g., if no adversary
tampers with the message on the insecure channel, then Bob will
receive the message that Alice sent. We model this setting by covering
the adversarial interface with a filter that emulates an honest
behavior. In \figref{fig:channel.honest} we draw
an insecure and an authentic channel with filters $\sharp_E$ and
$\lozenge_E$ that transmit the message to Bob. In the case of the
insecure channel, one may want to model an honest noisy channel when
no adversary is present. This is done by having the filter $\sharp_E$
add some noise to the message. A dishonest player removes this and has
access to a noiseless channel as in
\figref{fig:channel.dishonest.insecure}.

\begin{figure}[tb]
\subcaptionbox[Insecure channel without
adversary]{\label{fig:channel.honest.insecure}When no adversary is
  present, Alice's message is delivered to Bob. In the case of a noisy
channel, this noise is introduced by the filter $\sharp_E$.}[.5\textwidth][c]{
\begin{tikzpicture}[
resource/.style={draw,thick,minimum width=3.2cm,minimum height=1cm},
filter/.style={draw,thick,minimum width=1.6cm,minimum height=.8cm,rounded corners},
sArrow/.style={->,>=stealth,thick},
sLine/.style={-,thick}]

\small

\def\t{2.2} 
\def\v{1.4} 
\def\s{.4}

\node[resource] (ch) at (0,0) {};
\node[yshift=-1.5,above right] at (ch.north west) {\footnotesize
  Insecure channel $\aC$};
\node[filter] (fil) at (0,-\v) {};
\node[xshift=2,below left] at (fil.north west) {\footnotesize $\sharp_E$};
\node (alice) at (-\t,0) {};
\node (bob) at (\t,0) {};

\draw[sLine] (alice.center) to (-\s,0) to (-\s,-\v);
\draw[sLine,decorate,decoration={snake,amplitude=1mm,segment length=1.8mm}] (-\s,-\v) to (\s,-\v);
\draw[sArrow] (\s,-\v) to (\s,0) to (bob.center);

\end{tikzpicture}}
\subcaptionbox[Authentic channel without
adversary]{\label{fig:channel.honest.authentic}When no adversary is
  present, Bob receives the message sent by Alice.}[.5\textwidth][c]{
\begin{tikzpicture}[
resource/.style={draw,thick,minimum width=3.2cm,minimum height=1cm},
filter/.style={draw,thick,minimum width=1.6cm,minimum height=.8cm,rounded corners},
sArrow/.style={->,>=stealth,thick},
sLine/.style={-,thick}]

\small

\def\t{2.2} 
\def\v{1.4} 
\def\s{.4}

\node[resource] (ch) at (0,0) {};
\node[yshift=-1.5,above right] at (ch.north west) {\footnotesize
  Authentic channel $\aA$};
\node[filter] (fil) at (0,-\v) {};
\node[xshift=2,below left] at (fil.north west) {\footnotesize $\lozenge_E$};
\node (alice) at (-\t,0) {};
\node (bob) at (\t,0) {};

\draw[sArrow] (alice.center) to (bob.center);
\draw[sArrow] (-\s,0 |- ch.center) to (-\s,0 |- fil.center);
\draw[double,thick] (\s,0 |- fil.center) to node[auto,swap,pos=.45,xshift=-1] {$0$} (\s,0 |- ch.center);

\end{tikzpicture}}
\caption[Some channels without
adversary]{\label{fig:channel.honest}Channels with filters. The two
  channels from Figures~\ref{fig:channel.dishonest.insecure} and
  \ref{fig:channel.dishonest.authentic} are represented with filters
  on Eve's interface emulating an honest behavior, i.e., when no
  adversary is present.}
\end{figure}
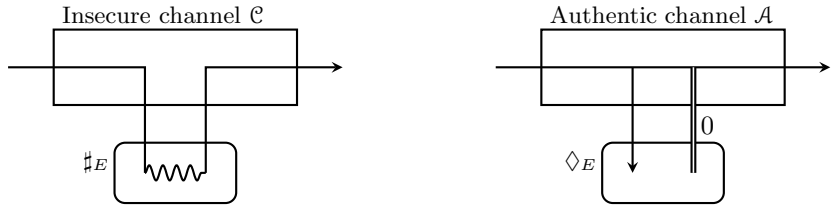

We use the term \emph{filtered resource} to refer to a pair of a
resource $\aR$ and a filter $\sharp_E$, and often write
$\aR_\sharp = (\aR,\sharp_E)$. Such an object can be thought of as
having two modes: it is characterized by the resource $\aR\sharp_E$
when no adversary is present and by the resource $\aR$ when the
adversary is present.

The final object that is required by the AC framework to define the
notion of construction and prove that it is composable, is a
(pseudo\-/)metric defined on the space of resources that measures how
close two resources are. In the following, we use a distinguisher
based metric, i.e., the maximum advantage a distinguisher has in
guessing whether it is interacting with resource $\aR$ or $\aS$, which
we write $d(\aR,\aS)$. More specifically, let $\aD$ be a
distinguisher, and le $\aD[\aR]$ and
$\aD[\aS]$ be the binary random variables corresponding to $\aD$'s
output when connected to $\aR$ and $\aS$, respectively. Then the
distinguishing advantage between $\aR$ and $\aS$ is defined as
\[ d(\aR,\aS) \coloneqq \sup_{\aD} \left| \Pr\left[ \aD[\aR] = 0
  \right] - \Pr\left[ \aD[\aS] = 0 \right] \right|\,.\]
Since we study information\-/theoretic security in this work, the
supremum is taken over the set of all possible distinguishers allowed
by quantum mechanics. This is discussed further in
\appendixref{app:ac.instantiation}.

We are now ready to define the security of a cryptographic
protocol. We do so in the three player setting, for honest Alice and
Bob, and dishonest Eve. Thus, in the following, all resources have
three interfaces, denoted $A$, $B$ and $E$, and a protocol is then
given by a pair of converters $(\pi_A,\pi_B)$ for the honest
players. We refer to \cite{MR11} for the general case, when arbitrary
players can be dishonest.

\begin{deff}[Cryptographic security~\cite{MR11}] \label{def:security}
  Let $\pi_{AB} = (\pi_A,\pi_B)$ be a protocol and $\aR_\sharp =
  (\aR,\sharp)$ and $\aS_\lozenge = (\aS,\lozenge)$ denote two
  filtered resources.  We say that \emph{$\pi_{AB}$ constructs
    $\aS_{\lozenge}$ from $\aR_\sharp$ within $\eps$}, which we write
  $\aR_\sharp \xrightarrow{\pi,\eps} \aS_\lozenge$, if the two
  following conditions hold:
\begin{enumerate}[label=\roman*), ref=\roman*]
\item \label{eq:def.cor} We have
  \[d(\pi_{AB}\aR\sharp_E,\aS\lozenge_E) \leq \eps \,.\]
\item \label{eq:def.sec} There exists a converter\footnote{For a
    protocol with information\-/theoretic security to be composable
    with a protocol that has computational security, one additionally
    requires the simulator to be efficient.} $\sigma_E$ \--- which we
  call simulator \--- such that
  \[  d(\pi_{AB}\aR,\cS\sigma_E) \leq \eps \,.\]
\end{enumerate}
If it is clear from the context what filtered resources $\aR_\sharp$
and $\aS_\lozenge$ are meant, we simply say that $\pi_{AB}$ is
$\eps$\=/secure.
\end{deff}

The first of these two conditions measures how close the constructed
resource is to the ideal resource in the case where no malicious
player is intervening, which is often called \emph{correctness} in the
literature. The second condition captures \emph{security} in the
presence of an adversary. For example, to prove that the MAC protocol
$\pi^\cauth_{AB}$ constructs an authentic channel $\aA_{\lozenge}$
from a (noiseless) insecure channel $\aC_\square$ and a secret key
$\aK$ within $\eps$, we need to prove that the real system (with
filters) $\pi^\cauth_{AB}(\aK\|\aC\square_E)$ cannot be distinguished
from the ideal system $\aA\lozenge_E$ with advantage greater than
$\eps$, and we need to find a converter $\sigma^\cauth_E$ such that
the real system (without filters) $\pi^\cauth_{AB}(\aK\|\aC)$ cannot
be distinguished from the ideal system $\aA\sigma^\cauth_E$ with
advantage greater than $\eps$. For the MAC protocol, correctness is
satisfied with error $0$ and the simulator $\sigma^\cauth_E$ drawn in
\figref{fig:classicalauth.ideal} satisfies the second requirement if
the family of hash functions $\{h_k\}_k$ is $\eps$\=/almost strongly
$2$\-/universal~\cite{Por14}.

\begin{figure}[tb]
\begin{centering}

\begin{tikzpicture}[
thinResource/.style={draw,thick,minimum width=1.618*2cm,minimum height=1cm},
simulator/.style={draw,thick,minimum width=1.618*2cm,minimum height=1.7cm,rounded corners},
snode/.style={minimum width=1.1cm,minimum height=1.2cm},
sArrow/.style={->,>=stealth,thick}]

\small

\def\t{2.368} 
\def\u{-1.1}
\def\v{.75}
\def\w{-2.45} 

\node[thinResource] (channel) at (0,\v) {};
\node[yshift=-1.5,above right] at (channel.north west) {\footnotesize
  Authentic channel $\aA$};
\node (alice) at (-\t,\v) {};
\node (bob) at (\t,\v) {};

\node[simulator] (sim) at (0,\u) {};
\node[xshift=1.5,below left] at (sim.north west) {\footnotesize
  $\sigma^{\cauth}_E$};
\node[snode,ellipse] (sleft) at (-.809,\u) {};
\node[snode] (sright) at (.809,\u) {};
\draw[dashed] (sim.north) to (sim.south);

\node (ajunc) at (sleft |- alice) {};
\node (bjunc) at (sright |- bob) {};

\draw[thick] (alice.center) to node[pos=.15,auto] {$x$} (.4,\v) to node[pos=.54] (ejunc) {} +(160:-.8);
\draw[sArrow] (ajunc.center) to node[pos=.63,auto,swap] {$x$} (sleft);
\draw[sArrow] (1.2,\v) to node[pos=.75,auto] {$x,\bot$} (bob.center);
\draw[double] (sright) to node[pos=.4,auto,swap] {$0,1$} (ejunc.center);

\node (sltext) at (-.809,\u) {\footnotesize $y = h_k(x)$};
\node[text width=1.2cm] (srtext) at (.809,\u) {\footnotesize $(x,y)
  \stackrel{?}{=}$\\$\quad \ (x',y')$};

\node (eveleft) at (sleft |- 0,\w) {};
\node (everight) at (sright |- 0,\w) {};
\draw[sArrow] (sleft) to node[pos=.75,auto,swap] {$(x,y)$} (eveleft.center);
\draw[sArrow] (everight.center) to node[pos=.25,auto,swap] {$(x',y')$}
(sright);

\node[draw,fill=white] (key) at (.15,\u+.8) {key};
\draw[sArrow] (key) to (sleft);

\end{tikzpicture}

\end{centering}
\caption[Ideal classical authentication
system]{\label{fig:classicalauth.ideal}The ideal system with simulator
  for a MAC protocol. The simulator $\sigma^\cauth_E$ picks its own
  key and generates the MAC. If the value input by Eve is different
  from the output at her interface (or is input before an output is
  generated), the simulator prevents Bob from getting Alice's
  message.}
\end{figure}
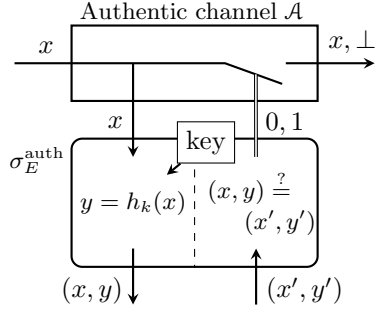

\begin{rem}
  \label{rem:efficiency} The protocols and simulators discussed in
  this work are all efficient. The protocols we consider are either
  trivially efficient or taken from other work, in which case we refer
  to these other works for proofs of efficiency. The efficiency of the
  simulator used to prove the security of quantum authentication has
  been analyzed in \cite{BW16}. All other simulators used in the
  security proofs run the corresponding honest protocols, and are thus
  efficient because the protocols are. We therefore do not discuss
  efficiency any further in this work.
\end{rem}

\section{Quantum authentication}
\label{sec:auth}

We start with some technical preliminaries in
\secref{sec:auth.technical}, where we introduce (strong) purity
testing codes, which are a key component of the family of quantum
authentication protocols of \cite{BCGST02}. In
\secref{sec:auth.secure} we give a constructive view of quantum
authentication with key recycling: we define the resources that such a
protocol is expected to construct, as well as the resources that are
required to achieve this. In \secref{sec:auth.protocol} we describe
the family of protocols that we analyze in this work, along with a
variant in which the order of the encryption and encoding operations
has been swapped, which we prove to be equivalent. In
\secref{sec:auth.proof} we give a security proof for the family of
quantum authentication protocols defined earlier. And in
\secref{sec:auth.optimality} we show that the number of recycled key
bits is optimal. Finally, in \secref{sec:auth.explicit} we give two
explicit constructions of purity testing codes and get the exact
parameters of the quantum authentication protocols with these codes.

\subsection{Technical preliminaries}
\label{sec:auth.technical}

\paragraph{Pauli operators.} To denote a Pauli operator on $n$ qubits
we write either $P_{x,z}$ or $P_\ell$, where $x$ and $z$ are $n$-bit
strings indicating in which positions bit and phase flips occur, and
$\ell = (x,z)$ is the concatenation of $x$ and $z$, which is used when
we do not need to distinguish between $x$ and $z$. Two Pauli operators
$P_{j}$ and $P_{\ell}$ with $j = (x,z)$ and $\ell = (x',z')$ commute
(anti\-/commute) if the symplectic inner product
\begin{equation}
\label{eq:symplectic}
\sym{j}{\ell} \coloneqq x \cdot z' - z \cdot x'
\end{equation} is $0$ (is $1$),
where $x \cdot z$ is the scalar product of the vectors and the arithmetic is done modulo
$2$. Hence, for any $P_j$ and $P_\ell$
\[ P_j P_\ell = (-1)^{\sym{j}{\ell}} P_\ell P_j\,.\]
We use several times the following equality
\begin{equation}
\label{eq:symplectic.trick}
\sum_{j \in \{0,1\}^n} (-1)^{\sym{j}{\ell}} = \begin{cases} 2^n & \text{if $\ell = 0$},
    \\ 0 & \text{otherwise}, \end{cases}
\end{equation}
where $\ell = 0$ means that all bits of the string $\ell$ are $0$.

\paragraph{Purity testing code.} 
An error correcting code (ECC) that encodes an $m$ qubit message in a
$m+n$ qubit code word is generally defined by an isomorphism from
$\complex^{2^m}$ to $\complex^{2^{m+n}}$. In this work we define an
ECC by a unitary $U : \complex^{2^{m+n}} \to \complex^{2^{m+n}}$.  The
code word for a state $\ket{\psi}$ is obtained by appending a $n$
qubit state $\zero$ to the message, and applying $U$, i.e., the
encoding of $\ket{\psi}$ is $U (\ket{\psi} \tensor \zero)$.  We do not
need to use the decoding properties of ECCs in this work, we only use
the them to detect errors, i.e., given a state
$\ket{\varphi} \in \complex^{2^{m+n}}$, we apply the inverse unitary
$\hconj{U}$ and measure the last $n$ qubits to see if they are $\zero$
or not.

The first property we require of our codes, is that they map any Pauli
error $P_\ell$ into another Pauli error $P_{\ell'}$, i.e., 
\begin{equation}
\label{eq:stabilizer.pauli} \hconj{U} P_\ell U = e^{i\theta_\ell} P_{\ell'} \,,
\end{equation}
for some global phase $e^{i\theta_\ell}$. This is always the case for
any $U$ that can be implemented with Clifford operators. In
particular, all stabilizer codes have this property, which are used in
\cite{BCGST02} to define purity testing codes. Note that the mapping
from $\ell$ to $\ell'$ defined by \eqnref{eq:stabilizer.pauli} is a
permutation on the set of indices $\ell \in \{0,1\}^{2m+2n}$ that
depends only on the choice of code.

A code will detect an error $P_\ell$ if
$P_{\ell'} = P_{x,z} \tensor P_{s,z'}$ for $s \neq 0$, where $P_{x,z}$
acts on the first $m$ qubits and $P_{s,z'}$ on the last $n$. Measuring
these last qubits would yield the syndrome $s$, since $P_{s,z'}$ flips
the bits in the positions corresponding to the bits of $s$. And an
error $P_\ell$ will act trivially on the message if
$P_{\ell'} = P_{0,0} \tensor P_{s,z}$. In particular, if
$P_{\ell'} = P_{0,0} \tensor P_{0,z}$, then this error will not be
detected, but not change the message either.

For a code indexed by a key $k$, we denote by $\cP_k$ the set of Pauli
errors that are not detected by this code, and by
$\cQ_k \subset \cP_k$ we denote the undetected errors which act
trivially on the message. A purity testing code is a set of codes
$\{U_k\}_{k \in \cK}$ such that when a code $U_k$ is selected
uniformly at random, it will detect with high probability all Pauli
errors which act non\-/trivially on the message.

\begin{deff}[Purity testing code~\cite{BCGST02}.] 
  A purity testing code with error $\eps$ is a set of codes
  $\{U_k\}_{k \in \cK}$, such that for all Pauli operators $P_\ell$,
  \[ \frac{\left|\left\{ k \in \cK : P_\ell \in \cP_k \setminus \cQ_k
      \right\}\right|}{|\cK|} \leq \eps \,.\]
\end{deff}

As mentioned in \secref{sec:intro.contributions}, we use a stricter
definition of purity testing code in this work. We require that all
non\-/identity Paulis get detected with high probability, even those
that act trivially on the message. Intuitively, the reason for this is
that, with the original definition of purity testing, if the adversary
introduces some noise $P_\ell$, by learning whether the message was
accepted or not, she will learn whether that error acts trivially on
the message or not, and thus learn something about the ECC used. This
means that the adversary learns something about the key used to choose
the ECC, and hence it cannot be recycled in its entirety.\footnote{We
  conjecture that in this case only $1$ bit of the key is leaked, see
  the discussion in \secref{sec:conclusion}.}

\begin{deff}[Strong purity testing code.] 
  A strong purity testing code with error $\eps$ is a set of codes
  $\{U_k\}_{k \in \cK}$, such that for all non\-/identity Pauli
  operators $P_\ell$,
  \[ \frac{\left|\left\{ k \in \cK : P_\ell \in \cP_k
      \right\}\right|}{|\cK|} \leq \eps \,.\]
\end{deff}

In \secref{sec:auth.explicit} we provide explicit constructions of
strong purity testing codes.

\subsection{Secure channel \& secret key resource}
\label{sec:auth.secure}

The main result in this paper is a proof that the family of quantum
authentication protocols of Barnum et al.~\cite{BCGST02} restricted to
strong purity testing codes can be used to construct a resource that
corresponds to the parallel composition of a secure quantum channel
$\aS^m$ and a secret key resource $\bar{\aK}^{\nu_{\rej},\nu_{\acc}}$, which
are illustrated in \figref{fig:goal.new} and explained in more detail
in the following paragraphs.

\begin{figure}[tbp]
  \subcaptionbox[Secure channel]{\label{fig:channel.secure}A secure
    channel $\aS^m$ is very similar to the authentic channel from
    \figref{fig:channel.dishonest.authentic}. It allows Alice to send
    an $m$\=/qubit message, and Eve to decide if Bob gets it. But this
    time, Eve only receives the size of the message that has been sent
    (denoted by the dashed arrow), not a copy.}[.5\textwidth][c]{
\begin{tikzpicture}[
resource/.style={draw,thick,minimum width=3.2cm,minimum height=1.6cm},
sArrow/.style={->,>=stealth,thick},
sLine/.style={-,thick},
sLeak/.style={->,>=stealth,thick,dashed}]

\small

\def\t{2.3} 
\def\w{1.5} 
\def\s{.4}

\node[resource] (ch) at (0,0) {};
\node[yshift=-1.5,above right] at (ch.north west) {\footnotesize
  Secure channel $\aS^m$};
\node (alice) at (-\t,0) {};
\node[yshift=-1.5,below] at (alice) {\footnotesize
  Alice};
\node (bob) at (\t,0) {};
\node[yshift=-1.5,below] at (bob) {\footnotesize
  Bob};
\node (eve) at (0,-\w) {\footnotesize
  Eve};

\draw[sLine] (alice) to node[auto,pos=.15] {$\rho$} (0,0) to node (handle) {} (330:2*\s);
\draw[sArrow] (2*\s,0) to node[auto,pos=.85] {$\rho,\bot$} (bob);
\draw[sLeak] (-\s,0) to node[auto,swap,pos=.82] {$m$}(-\s,0 |- eve.north);
\draw[double,thick] (handle.center |- eve.north) to node[auto,swap,pos=.15] {$0,1$} (handle.center);

\end{tikzpicture}} \subcaptionbox[Secret key resource of variable
length]{\label{fig:resource.key.switch}A slightly weaker secret key
  resource than that from \figref{fig:resource.key},
  $\bar{\aK}^{\nu_{\rej},\nu_{\acc}}$. It allows Eve to choose the
  length of the key generated, either $|k| = \nu_{\rej}$ or
  $|k| = \nu_{\acc}$. Furthermore, Eve can prevent Alice from getting
  the key at all.}[.5\textwidth][c]{
\begin{tikzpicture}[
resource/.style={draw,thick,minimum width=3.2cm,minimum height=1.6cm},
sArrow/.style={->,>=stealth,thick},
sLine/.style={-,thick}]]

\small

\def\t{2.3} 
\def\v{.3}
\def\w{1.5} 
\def\s{.4}

\node[resource] (keyBox) at (0,0) {};
\node[draw] (key) at (\s,-\v) {key};
\node[yshift=-1.5,above right] at (keyBox.north west) {\footnotesize
  Secret key $\bar{\aK}^{\nu_{\rej},\nu_{\acc}}$};
\node (alice) at (-\t,0) {};
\node (bob) at (\t,0) {};
\node (up) at (0,\v) {};
\node (down) at (0,-\v) {};
\node (bob) at (\t,0) {};
\node (alice) at (-\t,0) {};
\node[yshift=-1.5,below] at (alice |- down) {\footnotesize
  Alice};
\node (bob) at (\t,0) {};
\node[yshift=-1.5,below] at (bob |- down) {\footnotesize
  Bob};
\node (eve) at (0,-\w) {\footnotesize
  Eve};

\draw[sArrow] (alice |- up) to node[auto] {\footnotesize req.}
(keyBox.west |- up);
\draw[sArrow] (bob |- up) to node[auto,swap] {\footnotesize req.}
(keyBox.east |- up);
\draw[sLine] (key.west |- down) to (0,0 |- down) to node (handle) {} +(210:2*\s);
\draw[sArrow] (-2*\s,0 |- down) to node[auto,pos=.8,swap] {$k,\bot$} (alice |- down);
\draw[sArrow] (key.east |- down) to node[auto,pos=.78] {$k$} (bob |- down);
\draw[double,thick] (handle.center |- eve.north) to node[auto,pos=.25] {$0,1$} (handle.center);
\draw[double,thick] (key.center |- eve.north) to node[auto,swap,pos=.25] {$0,1$} (key);

\end{tikzpicture}}

\vspace{6pt}

\subcaptionbox[No adversary]{\label{fig:goal.honest}When no adversary
  is present, the filter $\flat_E$ covers Eve's interface of the
  resource $\aS^m \| \bar{\aK}^{\nu_{\acc},\nu_{\rej}}$. Once $\flat_E$ is
  notified that a message has been sent, it allows the message through
  and notifies the secret key resource to prepare a key of length
  $\nu_{\acc}$.}[\textwidth][c]{
\begin{tikzpicture}[
resource/.style={draw,thick,minimum width=3.2cm,minimum height=1.6cm},
sArrow/.style={->,>=stealth,thick},
sLine/.style={-,thick},
sLeak/.style={->,>=stealth,thick,dashed},
filter/.style={draw,thick,minimum width=7.9cm,rounded corners}]

\small

\def\t{2.3} 
\def\v{.3}
\def\w{1.5} 
\def\s{.4}
\def\x{5.5}

\node[filter] (fil) at (\x/2,-\w) {};
\node[xshift=-1.5,right] at (fil.east) {\footnotesize $\flat_E$};

\node[resource] (ch) at (0,0) {};
\node[yshift=-1.5,above right] at (ch.north west) {\footnotesize
  Secure channel $\aS^m$};
\node (alice) at (-\t,0) {};
\node (bob) at (\t,0) {};

\draw[sArrow] (alice) to node[auto,pos=.07] {$\rho$} node[auto,pos=.93] {$\rho$} (bob);
\draw[sLeak] (-\s,0) to node[auto,swap,pos=.8] {$m$} (-\s,0 |- fil.north);
\draw[double,thick] (\s,0) to node[auto,pos=.8] {$0$} (\s,0 |- fil.north);

\node[resource] (keyBox) at (\x,0) {};
\node[draw] (key) at (\s+\x,-\v) {key};
\node[yshift=-1.5,above right] at (keyBox.north west) {\footnotesize
  Secret key $\bar{\aK}^{\nu_{\rej},\nu_{\acc}}$};
\node (alice2) at (-\t+\x,0) {};
\node (bob2) at (\t+\x,0) {};
\node (up) at (0,\v) {};
\node (down) at (0,-\v) {};

\draw[sArrow] (alice2 |- up) to node[auto] {\footnotesize req.}
(keyBox.west |- up);
\draw[sArrow] (bob2 |- up) to node[auto,swap] {\footnotesize req.}
(keyBox.east |- up);
\draw[sArrow] (key.west |- down) to node[auto,pos=.85,swap] {$k$} (alice2 |- down);
\draw[sArrow] (key.east |- down) to node[auto,pos=.78] {$k$}  (bob2 |- down);
\draw[double,thick] (\x-\s,0 |- fil.north) to node[auto,swap,pos=.25] {$0$} (\x-\s,-\v);
\draw[double,thick] (\s+\x,0 |- fil.north) to node[auto,swap,pos=.32] {$0$} (key);

\end{tikzpicture}}

\caption[Resources constructed]{\label{fig:goal.new}We depict here the
  filtered resource
  $(\aS^m \| \bar{\aK}^{\nu_{\acc},\nu_{\rej}},\flat_E)$ constructed
  by the quantum authentication protocols analyzed in this work. It
  can be seen as the composition of a secure channel $\aS^m$ (drawn in
  \subref{fig:channel.secure}) and a secret key resource
  $\bar{\aK}^{\nu_{\acc},\nu_{\rej}}$ (drawn in
  \subref{fig:resource.key.switch}). The filter $\flat_E$ that
  emulates an honest behavior is drawn in \subref{fig:goal.honest}.}
\end{figure}

The secure quantum channel, $\aS^m$, drawn in
\figref{fig:channel.secure}, allows an $m$\=/qubit message $\rho$ to be
transmitted from Alice to Bob, which Alice may input at her
interface. Since in general the players cannot prevent Eve from
learning that a message has been sent, Eve's interface has one output
denoted by a dashed arrow, which notifies her that Alice has sent an
$m$\=/qubit message. But the players cannot prevent Eve from jumbling
the communication lines either, which is captured in the resource
$\aS^m$ by allowing Eve to input a bit that decides if Bob gets the
message or an error symbol $\bot$ \--- Eve may also decide not to
provide this input (Eve cuts the communication lines), in which case
the system is left waiting and Bob obtains neither the message nor an
error. Note that the order in which messages are input to the resource
$\aS^m$ is not fixed, Eve may well provide her bit before Alice inputs
a message. In this case, Bob immediately receives an error $\bot$
regardless of the value of Eve's bit.

The secret key resource, $\bar{\aK}^{\nu_{\rej},\nu_{\acc}}$, depicted
in \figref{fig:resource.key.switch} distributes a uniformly random key
to Alice and Bob. Unlike the simplified key resource from
\figref{fig:resource.key}, here the adversary has some control over
the length of the key produced. This is because in the real setting
Eve can prevent the full key from being recycled by jumbling the
message. This is reflected at Eve's interface of
$\bar{\aK}^{\nu_{\rej},\nu_{\acc}}$ allowing her to decide if the key
generated is of length $\nu_{\rej}$ or $\nu_{\acc}$.  Furthermore, if
in the real setting Alice were to recycle her key before Bob receives
the cipher, Eve could use the information from the recycled key to
modify the cipher without being detected. So Alice must wait for a
confirmation of reception from Bob, which Eve can jumble, preventing
Alice from ever recycling the key. This translates in the ideal
setting to Eve having another control bit, deciding whether Alice
receives the key or an error $\bot$. Note that if Eve provides her two
bits in the wrong order, Alice always gets an error $\bot$. This key
resource is modeled so that the honest players must request the key to
obtain its value. If Bob does this before Eve has provided the bit
deciding the key length, he gets an error instead of a key. If Alice
makes the request before Eve has provided both her bits, she also gets
an error. Otherwise they get the key $k$.

If no adversary is present, a filter $\flat_E$ covers Eve's interface
of the resources $\aS^m$ and $\bar{\aK}^{\nu_{\rej},\nu_{\acc}}$,
which is drawn in \figref{fig:goal.honest}. This filter provides the
inputs to the resources that allow Bob to get Alice's message and
generate a key of length $\nu_{\acc}$ that is made available to both
players.

To construct the filtered resource
$(\aS^m \| \bar{\aK}^{\nu_{\rej},\nu_{\acc}})_\flat$, the quantum
authentication protocol will use a shared secret key to encrypt and
authenticate the message. This means that the players must share a
secret key resource. For simplicity we assume the players have access
to a resource $\aK^\mu$ as depicted in \figref{fig:resource.key}, that
always provides them with a key of length $\mu$.\footnote{Since Eve's
  interface of $\aK^\mu$ is empty, this resource has a trivial empty
  filter, which we do not write down.} Note that the security of the
protocol is not affected if the players only have a weaker resource
which might shorten the key or not deliver it to both players \---
such as the one constructed by the protocol, namely
$\bar{\aK}^{\nu_{\rej},\nu_{\acc}}$ \--- because if either of the
players does not have enough key, they simply abort, which is an
outcome Eve could already achieve by cutting or jumbling the
communication.

They also need to share an insecure quantum channel, which is used to
send the message, and is illustrated in
\figref{fig:channel.dishonest.insecure} without a filter and in
\figref{fig:channel.honest.insecure} with a filter. The authentication
protocol we consider is designed to catch any error, so if it is used
over a noisy channel, it will always abort, even though no adversary
is tampering with the message. We thus assume that the players share a
noiseless channel, which we denote $\aC_{\square}$, i.e., $\aC$ is
controlled by the adversary as in
\figref{fig:channel.dishonest.insecure}. But if no adversary is
present, the filter $\square_E$ is noiseless. In
\secref{sec:construction.noiseless} we explain how to compose the
protocol with an error correcting code so as to run it over a noisy
channel.

Finally, the players need a backwards authentic channel, that can send
one bit of information from Bob to Alice. This is required so that
Alice may learn whether the message was accepted and recycle the
corresponding amount of key. The authentic channel and its filter
$\aA_\lozenge$ are drawn in
Figures~\ref{fig:channel.dishonest.authentic} and
\ref{fig:channel.honest.authentic}. Putting all this together in the case
of an active adversary, we get \figref{fig:auth.real}, where the
converters for Alice's and Bob's parts of the quantum authentication
protocol are labeled $\pi^{\qauth}_A$ and $\pi^{\qauth}_B$,
respectively.

\begin{figure}[tb]
\begin{centering}

\begin{tikzpicture}[
resourceLong/.style={draw,thick,minimum width=3.5cm,minimum height=1cm},
resource/.style={draw,thick,minimum width=1.8cm,minimum height=1cm},
sArrow/.style={->,>=stealth,thick},
sLine/.style={-,thick},
protocol/.style={draw,thick,minimum width=1.6cm,minimum height=4cm,rounded corners},
pnode/.style={minimum width=1cm,minimum height=1cm}]

\small

\def\t{4.6} 
\def\a{3.05} 
\def\v{1.5}
\def\w{.4}
\def\x{.85}
\def\z{2.75} 

\node[resourceLong] (auth) at (0,0) {};
\node[yshift=-1.5,above right] at (auth.north west) {\footnotesize
  Authentic channel $\aA$};
\node[resource] (ch) at (-\x,-\v) {};
\node[yshift=-1.5,above] at (ch.north) {\footnotesize
  Insecure ch.\ $\aC$};
\node[resourceLong] (key) at (0,\v) {};
\node[yshift=-1.5,above right] at (key.north west) {\footnotesize
  Secret key $\aK^\mu$};
\node[protocol] (protA) at (-\a,0) {};
\node[yshift=-1.5,above right] at (protA.north west) {\footnotesize $\pi^\qauth_A$};
\node[pnode] (a1) at (-\a,\v) {};
\node[pnode] (a2) at (-\a,0) {};
\node[pnode] (a3) at (-\a,-\v) {};
\node[protocol] (protB) at (\a,0) {};
\node[yshift=-1.5,above left] at (protB.north east) {\footnotesize $\pi^\qauth_B$};
\node[pnode] (b1) at (\a,\v) {};
\node[pnode] (b2) at (\a,0) {};
\node[pnode] (b3) at (\a,-\v) {};

\node (aliceUp) at (-\t,\v) {};
\node (aliceMiddle) at (-\t,0) {};
\node (aliceDown) at (-\t,-\v) {};
\node (bobUp) at (\t,\v) {};
\node (bobMiddle) at (\t,0) {};
\node (bobDown) at (\t,-\v) {};
\node (eveLeftL) at (-\x-\w,-\z) {};
\node (eveLeftR) at (-\x+\w,-\z) {};
\node (eveRightL) at (\x-\w,-\z) {};
\node (eveRightR) at (\x+\w,-\z) {};

\draw[sArrow] (aliceDown) to node[auto,pos=.4] {$\rho$} (a3);
\draw[sArrow] (a3) to (eveLeftL |- aliceDown) to (eveLeftL);
\draw[sArrow] (eveLeftR) to 
(eveLeftR |- bobDown) to (b3);
\draw[sArrow] (b3) to node[auto,pos=.8] {$\rho',\bot$} (bobDown);

\draw[sArrow] (aliceUp) to node[auto,pos=.3] {\footnotesize req.} (a1);
\draw[sArrow] (bobUp) to node[auto,swap,pos=.3] {\footnotesize req.} (b1);
\draw[sArrow] (a2) to node[auto,swap,pos=.8] {$k',\bot$} (aliceMiddle);
\draw[sArrow] (b2) to node[auto,pos=.6] {$k'$} (bobMiddle);

\node[draw] (key) at (0,\v) {key};
\draw[sArrow,bend left=15] (key) to node[auto,swap,pos=.2] {$k$} (a1);
\draw[sArrow,bend right=15] (key) to node[auto,pos=.2] {$k$} (b1);
\node[pnode] (kLeft) at (-1.1,\v) {};
\node[pnode] (kRight) at (1.1,\v) {};
\draw[sArrow,bend left=25] (a1) to node[auto,pos=.5] {\footnotesize req.} (kLeft);
\draw[sArrow,bend right=25] (b1) to node[auto,swap,pos=.5] {\footnotesize req.} (kRight);

\draw[sLine] (b2) to (\x,0) to node[pos=.53] (handle) {} +(200:2*\w);
\draw[sArrow] (\x-2*\w,0) to (a2);
\draw[sArrow] (eveRightR |- bobMiddle) to 
(eveRightR);
\draw[double] (eveRightL) to node[auto,swap,pos=.1,xshift=-2] {$0,1$} (handle.center);

\end{tikzpicture}

\end{centering}
\caption[Real system for quantum
authentication]{\label{fig:auth.real}The real system for quantum
  authentication with key recycling. Upon receiving a message $\rho$,
  $\pi^\qauth_A$ encrypts it with a key that it obtains from $K^\mu$
  and sends it on the insecure channel. Upon receiving a quantum state
  on the insecure channel, $\pi^\qauth_B$ checks whether it is valid,
  and outputs the corresponding message $\rho'$ or an error message
  $\bot$. It may then recycle (part of) the key, $k'$, and uses the
  authentic channel to notify $\pi^{\qauth}_A$ whether the message was
  accepted or not.  $\pi^{\qauth}_A$ then recycles the key as
  well. Concrete protocols for this are given in
  \secref{sec:auth.protocol}.}
\end{figure}
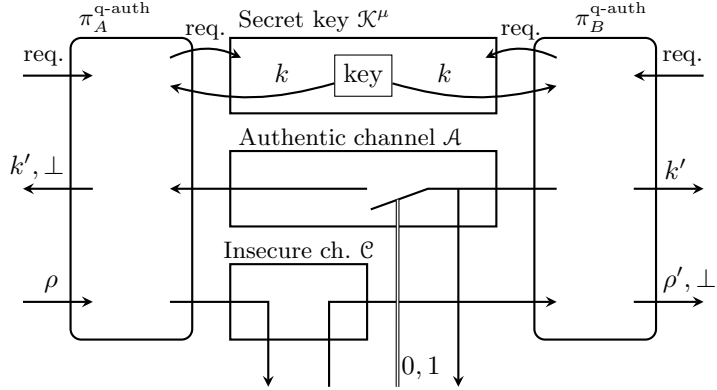

According to \defref{def:security}, a protocol
$\pi^\qauth_{AB} = (\pi^\qauth_A,\pi^\qauth_B)$ is then a quantum
authentication protocol (with key recycling) with error
$\eps^{\qauth}$ if it constructs
$(\aS^m \| \bar{\aK}^{\nu_{\rej},\nu_{\acc}})_\flat$ from
$\aC_\square \| \aA_\lozenge \| \aK^\mu$ within $\eps^{\qauth}$, i.e.,
\begin{equation}
\label{eq:security.auth}
\aC_\square \| \aA_\lozenge \| \aK^{\mu}
\xrightarrow{\pi^\qauth_{AB},\eps^{\qauth}}
(\aS^m \| \bar{\aK}^{\nu_{\rej},\nu_{\acc}})_\flat\,.
\end{equation}

In \secref{sec:auth.protocol} we describe the protocol, and in
\secref{sec:auth.proof} we prove that \eqnref{eq:security.auth} is
satisfied and provide the parameters
$\mu,\nu_\rej,\nu_\acc,\eps^\qauth$.

\subsection{Generic protocol}
\label{sec:auth.protocol}

The family of quantum authentication protocols from \cite{BCGST02}
consists in first encrypting the message to be sent with a quantum
one-time pad, then encoding it with a purity testing code and a random
syndrome. We do the same, but with a strong purity testing code. We
also extend the protocol so that the players recycle all the key if
the message is accepted, and the key used to select the strong purity
testing code if the message is rejected. So that Alice may also
recycle the key, Bob uses the backwards authentic classical channel to
notify her of the outcome. We refer to this as the
``encrypt\-/then\-/encode'' protocol, the details of which are
provided in \figref{fig:protocol.1}.

\begin{figure}[tb]
\begin{boxenvtitle}{Quantum authentication \--- encrypt\-/then\-/encode}
\begin{enumerate}
\item Alice and Bob obtain uniform keys $k$, $\ell$, and $s$ from the
  key resource, where $k$ is long enough to choose an element from a
  strong purity testing code that encodes $m$ qubits in $m+n$ qubits,
  $\ell$ is $2m$ bits and $s$ is $n$ bits.
\item Alice encrypts the message $\rho^A$ she receives with a quantum
  one-time pad using the key $\ell$. She then appends an $n$ qubit
  state $\proj{s}^S$, and encodes the whole thing with a strong purity
  testing code, obtaining the cipher
  $\sigma^{AS} = U_k (P_\ell \rho^A P_\ell \tensor \proj{s}^S)
  \hconj{U}_k$.
\item Alice sends $\sigma^{AS}$ to Bob on the insecure channel.
\item Bob receives a message $\tilde{\sigma}^{AS}$, he applies
  $\hconj{U}_k$, decrypts the $A$ part and measures the $S$ part in
  the computational basis.
\item If the result of the measurement is $s$, he accepts the message
  and recycles $k$, $\ell$ and $s$. If the result is not $s$, he
  rejects the message, and recycles $k$.
\item Bob sends Alice a bit on the backwards authentic channel to tell
  her if he accepted or rejected the message.
\item When Alice receives Bob's bit, she either recycles all the keys
  or only $k$.
\end{enumerate}
\end{boxenvtitle}

\caption[Quantum authentication protocol \---
encrypt\-/then\-/encode]{\label{fig:protocol.1}This protocol is
  identical to the scheme from \cite{BCGST02}, except that the players
  use a strong purity testing code, recycle key, and have a backwards
  authentic channel so that Alice may learn the outcome.}
\end{figure}

Alternatively, one may perform the encoding and encryption in the
opposite order: Alice first encodes her message with the strong purity
testing code with syndrome $0$, then does a quantum one-time pad on
the resulting $m+n$ qubit state. This ``encode\-/then\-/encrypt''
protocol is described in \figref{fig:protocol.2}.

\begin{figure}[tb]
\begin{boxenvtitle}{Quantum authentication \--- encode\-/then\-/encrypt}
\begin{enumerate}
\item Alice and Bob obtain uniform keys $k$ and $\ell$ from the key
  resource, where $k$ is long enough to choose an element from a
  strong purity testing code that encodes $m$ qubits in $m+n$ qubits
  and $\ell$ is $2m+2n$ bits long.
\item Alice appends a $n$ qubit state $\proj{0}$ to the message
  $\rho^A$ she receives, encodes it with a strong purity testing code
  chosen according to the key $k$, and encrypts the whole thing with a
  quantum one-time pad using the key $\ell$. She thus obtains the
  cipher
  $\sigma^{AS} = P_\ell U_k (\rho^A \tensor \proj{0}^S) \hconj{U}_k
  P_\ell$.
\item Alice sends $\sigma^{AS}$ to Bob on the insecure channel.
\item Bob receives a message $\tilde{\sigma}^{AS}$, he applies
  $P_\ell$, then
  $\hconj{U}_k$, and measures the $S$ part in
  the computational basis.
\item If the result of the measurement is $0$, he accepts the message
  and recycles $k$ and $\ell$. Otherwise, he rejects
  the message, and recycles $k$.
\item Bob sends Alice a bit on the backwards authentic channel to tell
  her if he accepted or rejected the message.
\item When Alice receives Bob's bit, she either recycles all the keys
  or only $k$.
\end{enumerate}
\end{boxenvtitle}

\caption[Quantum authentication protocol \---
encode\-/then\-/encrypt]{\label{fig:protocol.2}This protocol is
 similar to the protocol from \figref{fig:protocol.1}, except that the
 order of the encryption and encoding have been reversed. To do this,
 the players need an extra $n$ bits of key.}
\end{figure}

The pseudo\-/code described in Figures~\ref{fig:protocol.1} and
\ref{fig:protocol.2} can easily be translated into converters as used
in the AC formalism, i.e., the objects $\pi^{\qauth}_A$ and
$\pi^{\qauth}_B$ from \figref{fig:auth.real}. More precisely, if
$\pi^{\qauth}_A$ receives a message at its outer interface, it
requests a key from the key resource, encrypts the message as
described and sends the cipher on the insecure channel. It may receive
three symbols from the backwards authentic channel: an error $\bot$,
in which case it does not recycle any key, a message $0$ saying that
$\pi^{\qauth}_B$ did not receive the correct state, in which case it
recycles the part of the key used to choose the code, or a message $1$
saying that $\pi^{\qauth}_B$ did receive the correct state, in which
case it recycles all the key. If $\pi^{\qauth}_A$ first receives a
message on the backwards authentic channel before receiving a message
to send, it will not recycle any key. Similarly, when $\pi^{\qauth}_B$
receives a cipher on the insecure channel, it requests a key from the
key resource, performs the decryption, outputs either the message or
an error depending on the result of the decryption, and sends this
result back to $\pi^{\qauth}_A$ on the authentic channel.

The encode\-/then\-/encrypt protocol uses $n$ bits more key, and since
these bits are not recycled in case of a reject, it is preferable to
use the encrypt\-/then\-/encode protocol. These protocols are however
identical: no external observer can detect which of the two is being
run. This holds, because the encode\-/then\-/encrypt protocol performs
phase flips on a syndrome that is known to be in a computational basis
state $\ket{s}$. Thus, they have no effect and can be
skipped. Likewise, Bob performs phase flips on $S$ before measuring in
the computational basis \--- he might as well skip these phase flips,
since they have no effect either. We formalize this statement by
proving (in \lemref{lem:ete}) that the converters corresponding to the
two different protocols are indistinguishable. This result is similar
in spirit to proofs that some prepare\-/and\-/measure quantum key
distribution (QKD) protocols are indistinguishable from
entanglement-\/based QKD protocols, and thus security proofs for one
are security proofs for the other~\cite{SP00}.

Since these two protocols are indistinguishable, we provide a security
proof in \secref{sec:auth.proof} for the encode\-/then\-/encrypt
protocol. However, in \secref{sec:auth.explicit}, when we count the
number of bits of key consumed, we count those of the
encrypt\-/then\-/encode protocol.

\begin{lem}
\label{lem:ete}
Let $(\bar{\pi}^\qauth_A,\bar{\pi}^\qauth_B)$ and
$(\pi^\qauth_A,\pi^\qauth_B)$ denote the pairs of converters modeling
Alice's and Bob's behavior in the encrypt\-/then\-/encode and
encode\-/then\-/encrypt protocols, respectively. Then 
\[d(\bar{\pi}^\qauth_A,\pi^\qauth_A) = d(\bar{\pi}^\qauth_B,\pi^\qauth_B) = 0\,.\]
\end{lem}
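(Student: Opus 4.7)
The plan is to exhibit, for any adversarial environment (including any Pauli error inserted on the insecure channel), a deterministic relabeling of the randomness used by the encode\-/then\-/encrypt converters that produces exactly the same joint distribution of outputs as the encrypt\-/then\-/encode converters, with at most some independent uniform ``free'' bits thrown in on the encode\-/then\-/encrypt side. Since converters are compared via the distinguisher\-/based metric and these extra bits are independent of all observables, $d = 0$ follows.

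For Alice's side, I would start from the encode\-/then\-/encrypt cipher $P_\ell U_k (\rho \otimes \proj{0}) \hconj{U}_k P_\ell$ with $\ell$ uniform on $\{0,1\}^{2(m+n)}$. Using that $U_k$ is Clifford, \eqnref{eq:stabilizer.pauli} gives $\hconj{U}_k P_\ell U_k = e^{i\theta_\ell} P_{\pi_k(\ell)}$ for a fixed permutation $\pi_k$ on Pauli indices, and the global phase cancels between the left and right factors. Writing $\pi_k(\ell) = (\ell', (x,z))$, with $\ell' \in \{0,1\}^{2m}$ acting on the message block and $(x,z) \in \{0,1\}^{2n}$ acting on the syndrome block, the cipher equals
\[ U_k\bigl(P_{\ell'}\rho P_{\ell'} \otimes P_{x,z}\proj{0}P_{x,z}\bigr)\hconj{U}_k = U_k\bigl(P_{\ell'}\rho P_{\ell'} \otimes \proj{x}\bigr)\hconj{U}_k, \]
since $P_{0,z}\ket{0}=\ket{0}$ up to a global phase. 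Because $\ell$ is uniform and $\pi_k$ is a bijection for every $k$, the triple $(\ell',x,z)$ is uniform; marginalising over $z$ yields exactly the encrypt\-/then\-/encode cipher distribution with $s = x$. The bits $z$ appear only inside the recycled key of the encode\-/then\-/encrypt protocol and are independent of everything else conditioned on $k$; they can be reproduced by $\bar\pi^\qauth_A$ as an extra independent uniform string, so the joint distribution of (cipher, recycled key, inputs/outputs on the authentic channel) coincides.

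For Bob's side, the same commutation pushes $P_\ell$ past $\hconj{U}_k$: applying $P_\ell$ then $\hconj{U}_k$ then measuring the $S$\=/register in the computational basis is equivalent to applying $\hconj{U}_k$ and then $P_{\pi_k(\ell)}$ before measuring. The $X$\=/part of the syndrome Pauli merely shifts the measurement outcome by $x$, which cancels against the $s=x$ shift introduced by Alice, so accept/reject and the recovered message $\rho'$ have identical distributions in both protocols; the $Z$\=/part is invisible to a computational\-/basis measurement. The outgoing bit on the backwards authentic channel and the recycled key are therefore jointly distributed identically in both cases, up to the same free $z$ randomness as on Alice's side, which again can be simulated by $\bar\pi^\qauth_B$.

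The main obstacle is purely notational: tracking the permutation $\pi_k$ carefully across Alice and Bob so that the $z$\=/bits that appear in encode\-/then\-/encrypt but not in encrypt\-/then\-/encode are shown to be independent of every externally visible quantity, and hence that augmenting the encrypt\-/then\-/encode converters with independent uniform randomness of the appropriate length makes the two converters produce literally the same CPTP map on all inputs, giving $d=0$.
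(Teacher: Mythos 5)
Your proposal is correct and is essentially the paper's own argument: both rest on the Clifford relabeling \eqref{eq:stabilizer.pauli} turning $P_\ell\otimes P_{s,z}$ (with $z$ a free, externally invisible phase-flip string on the syndrome register) into a single uniformly distributed Pauli $P_{\ell'}$, so that the two converters realize the same map once the key is bijectively reparametrized. The only difference is that you run the relabeling from encode\-/then\-/encrypt towards encrypt\-/then\-/encode while the paper goes the other way, which is immaterial since the map $(\ell,s,z)\mapsto\ell'$ is a bijection for each $k$.
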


\begin{proof}
We start with Alice's part of the protocol. Let $\bar{\pi}^\qauth_A$
and $\pi^\qauth_A$ receive keys $k$, $\ell$ and $s$ as in the protocol
from \figref{fig:protocol.1}, as well as an extra key $z$ of length
$n$ that is needed by $\pi^\qauth_A$, since it requires more key. The
distinguisher prepares a state $\rho^{RA}$, and sends the $A$ part to
the system. $\bar{\pi}^\qauth_A$ outputs
\begin{multline*}
  U^{AS}_k P^A_\ell \left( \rho^{RA} \tensor \proj{s}^S \right)
  P^A_\ell \hconj{\left(U^{AS}_k\right)} \\ \begin{split} & = U^{AS}_k
    \left(P^A_\ell \tensor P^S_{s,0} \right) \left( \rho^{RA} \tensor
      \proj{0}^S \right) \left(P^A_\ell \tensor P^S_{s,0} \right)
    \hconj{\left(U^{AS}_k\right)} \\
    & = U^{AS}_k \left(P^A_\ell \tensor P^S_{s,z} \right) \left(
      \rho^{RA} \tensor \proj{0}^S \right) \left(P^A_\ell \tensor
      P^S_{s,z} \right)
    \hconj{\left(U^{AS}_k\right)} \\
    & = P^{AS}_{\ell'}U^{AS}_k \left( \rho^{RA} \tensor \proj{0}^S
    \right) \hconj{\left(U^{AS}_k\right)}
    P^{AS}_{\ell'}\,, \end{split}
\end{multline*}
where in the last line we used \eqnref{eq:stabilizer.pauli}. This is
exactly the state output by $\pi^\qauth_A$ if when receiving the key
$k,\ell,s,z$, the protocol uses the Pauli $P_{\ell'}$ for the quantum
one-time pad.

For Bob's part of the protocol, let the distinguisher prepare a state
$\sigma^{RAS}$ and send the $AS$ part to the system. The subnormalized
state held jointly by $\bar{\pi}^\qauth_B$ and the distinguisher after
decoding and performing the measurement is given by
\begin{multline*}
  \bra{s} P^A_\ell \hconj{\left(U^{AS}_k\right)} \sigma^{RAS} U^{AS}_k
  P^A_\ell \ket{s} \\ \begin{split} & = \bra{0} \left(P^A_\ell \tensor P^S_{s,0}\right) \hconj{\left(U^{AS}_k\right)} \sigma^{RAS} U^{AS}_k
                     \left(P^A_\ell \tensor P^S_{s,0}\right) \ket{0} \\
                   & = \bra{0} \left(P^A_\ell \tensor P^S_{s,z}\right) \hconj{\left(U^{AS}_k\right)} \sigma^{RAS} U^{AS}_k
                     \left(P^A_\ell \tensor P^S_{s,z}\right) \ket{0} \\
                   & = \bra{0}
                     \hconj{\left(U^{AS}_k\right)} P^{AS}_{\ell'} \sigma^{RAS} P^{AS}_{\ell'} U^{AS}_k
                     \ket{0} \,. \end{split}
\end{multline*}
We again obtain the state that is jointly held by $\pi^\qauth_B$ and
the distinguisher if when receiving the key $k,\ell,s,z$, the protocol
uses the Pauli $P_{\ell'}$ for the quantum one-time pad.
\end{proof}

\begin{rem}
\label{rem:classicalmessage}
If part of the message is classical \--- i.e., it is
diagonal in the computational basis and known not to have a purification
held be the distinguisher \--- then running the same proof as
\lemref{lem:ete}, one can show that it is sufficient to perform bit flips
on that part of the message, the phase flips are unnecessary. This is
the case with the protocol from \secref{sec:construction.morekey},
that generates a key $x$ locally and sends it to Bob using a quantum
authentication scheme. We use this to save some bits of key in the bounds
from \corref{cor:construction.1} and \corref{cor:construction.2}.
\end{rem}

\subsection{Security proof}
\label{sec:auth.proof}

As stated in \secref{sec:auth.secure} we wish to prove that the
quantum authentication protocol considered constructs a secure channel
$\aS^m$ and secret key resource
$\bar{\aK}^{\nu_{\rej},\nu_{\acc}}$. We will however prove this as a
corollary of a theorem that makes a slightly stronger statement,
namely that the protocol constructs a filtered resource
$\aQ^{m,\nu_{\rej},\nu_{\acc}}_\natural$ (depicted in
\figref{fig:goal.alt}) that is equivalent to
$(\aS^m \| \bar{\aK}^{\nu_{\rej},\nu_{\acc}})_\flat$, except for the
fact that it provides one switch at the adversarial interface that
decides both whether the message is delivered and the length of the
recycled key. $\bar{\aK}^{\nu_{\rej},\nu_{\acc}}$ and $\aS^m$ each
provide Eve with an input bit to decide the length of the new key and
whether the message is delivered, respectively, but in the protocol
the two bits are correlated, since the players only recycle the full
key if the message is successfully authenticated. One can thus make a
slightly stronger statement, in which the ideal resource constructed
only allows Eve to input one bit that decides both these things, which
is what is achieved by $\aQ^{m,\nu_{\rej},\nu_{\acc}}_\natural$.

\begin{figure}[tb]
\begin{centering}

\begin{tikzpicture}[
resource/.style={draw,thick,minimum width=3.2cm,minimum height=2cm},
sArrow/.style={->,>=stealth,thick},
sLine/.style={-,thick},
sLeak/.style={->,>=stealth,thick,dashed},
filter/.style={draw,thick,minimum width=2.4cm,rounded corners}]

\small

\def\t{2.35} 
\def\v{.6}
\def\w{.4}
\def\x{.8}
\def\z{1.75} 
\def\s{5.7}

\node[resource] (channel) at (0,0) {};
\node[yshift=-1.5,above right] at (channel.north west) {\footnotesize
  Secure ch.\ \& key $\aQ^{m,\nu_{\rej},\nu_{\acc}}$};
\node (aliceUp) at (-\t,\v) {};
\node (aliceMiddle) at (-\t,0) {};
\node (aliceDown) at (-\t,-\v) {};
\node (bobUp) at (\t,\v) {};
\node (bobMiddle) at (\t,0) {};
\node (bobDown) at (\t,-\v) {};
\node (eveLeft) at (-\x,-\z) {};
\node (eveCenter) at (0,-\z) {};
\node (eveRight) at (\x,-\z) {};

\draw[sLine] (aliceDown) to node[auto,pos=.1,yshift=-1] {$\rho$} (\x-\w,-\v) to +(340:2*\w);
\draw[sArrow]  (\x+\w,-\v) to node[auto,pos=.8,yshift=-1] {$\rho,\bot$} (bobDown);
\draw[sLeak] (eveLeft |- aliceDown) to node[auto,swap,pos=.7] {$m$}
(eveLeft);

\node[draw] (key) at (\x,0) {key};
\draw[sArrow] (key) to node[auto,pos=.2] {$k$} (bobMiddle);
\draw[sLine] (key) to (\w,0) to node[auto,swap,pos=.1] {$k$}
node[pos=.53] (handle) {} +(200:2*\w);
\draw[sArrow] (-\w,0) to (aliceMiddle);

\draw[double] (eveRight) to node[pos=.19,xshift=-1,auto,swap] {$0,1$}
node[circle,fill,pos=.65,inner sep=1.5] {}  (key);
\draw[double] (eveCenter) to node[pos=.17,xshift=-1,auto,swap]
{$0,1$} (handle.center);

\draw[sArrow] (aliceUp) to node[auto,yshift=-1] {\footnotesize req.} (channel.west |- aliceUp);
\draw[sArrow] (bobUp) to node[auto,swap,yshift=-1] {\footnotesize req.} (channel.east |- bobUp);

\node[resource] (channelS) at (\s,0) {};
\node[yshift=-1.5,above right] at (channelS.north west) {\footnotesize
  Secure ch.\ \& key $\aQ^{m,\nu_{\rej},\nu_{\acc}}$};
\node (aliceUpS) at (-\t+\s,\v) {};
\node (aliceMiddleS) at (-\t+\s,0) {};
\node (aliceDownS) at (-\t+\s,-\v) {};
\node (bobUpS) at (\t+\s,\v) {};
\node (bobMiddleS) at (\t+\s,0) {};
\node (bobDownS) at (\t+\s,-\v) {};
\node[filter] (fil) at (\s,-\z) {};
\node[xshift=-1.5,right] at (fil.east) {\footnotesize $\natural_E$};
\node (eveLeftS) at (-\x+\s,-\z) {};
\node (eveCenterS) at (\s,-\z) {};
\node (eveRightS) at (\x+\s,-\z) {};

\draw[sLine] (aliceDownS) to node[auto,pos=.07,yshift=-1] {$\rho$} node[auto,pos=.93,yshift=-1] {$\rho$} (bobDownS);
\draw[sLeak] (eveLeftS |- aliceDownS) to node[auto,swap,pos=.7] {$m$}
(eveLeftS |- fil.north);

\node[draw] (keyS) at (\x+\s,0) {key};
\draw[sArrow] (keyS) to node[auto,pos=.2] {$k$} (bobMiddleS);
\draw[sArrow] (keyS) to node[auto,swap,pos=.1] {$k$} (aliceMiddleS);

\draw[double] (eveRightS |- fil.north) to node[pos=.22,xshift=-1,auto,swap] {$0$} (keyS);
\draw[double] (fil.north) to node[pos=.18,xshift=-1,auto,swap] {$0$} (\s,0);
\node[circle,fill,inner sep=1.5] at (eveRightS |- bobDownS) {};

\draw[sArrow] (aliceUpS) to node[auto,yshift=-1] {\footnotesize req.} (channelS.west |- aliceUpS);
\draw[sArrow] (bobUpS) to node[auto,swap,yshift=-1] {\footnotesize req.} (channelS.east |- bobUpS);
\end{tikzpicture}

\end{centering}
\caption[Secure channel that generates key]{\label{fig:goal.alt} The
  resource $\aQ^{m,\nu_{\rej},\nu_{\acc}}$ is a restriction of the
  resource $\aS^m \| \aK^{\nu_{\rej},\nu_{\acc}}$ in which Eve's
  interface only allows $1$ bit to be input to decide both the length
  of the key and whether the message is received by Bob.}
\end{figure}
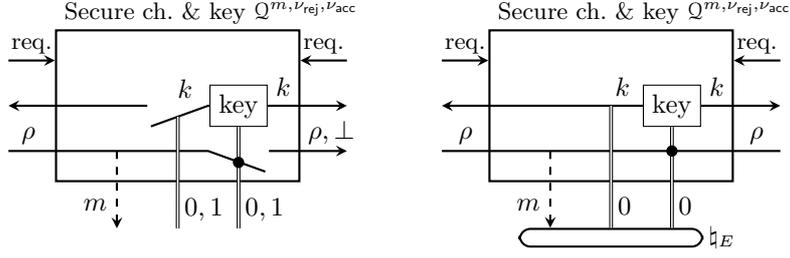

The parameteres of the construction are determined as follows.  Let
$\{U_k\}_{k \in \cK}$ be a strong purity testing code of size
$\log |\cK| = \nu$ and with error $\eps$ that encodes an $m$ qubit
message in an $m+n$ qubit cipher. And let
$\pi^{\qauth}_{AB} = (\pi^{\qauth}_A,\pi^{\qauth}_A)$ denote Alice and
Bob's converters when running the encode\-/then\-/encrypt protocol
from \figref{fig:protocol.2}. We are now ready to state the main
theorem, namely that $\pi^{\qauth}_{AB}$ is a secure authentication
scheme with key recycling.

\begin{thm}
\label{thm:auth}
Let $\pi^{\qauth}_{AB}$ denote converters corresponding to the
protocol from \figref{fig:protocol.2}. Then
$\pi^{\qauth}_{AB}$ constructs the secure channel and secret key
filtered resource
$\aQ^{m,\nu,\nu+2m+2n}_\natural$, given an
insecure quantum channel $\aC_{\square}$, a backwards authentic channel
$\aA_{\lozenge}$ and a secret key $\aK^{\nu+2m+2n}$, i.e.,
\[\aC_\square \| \aA_\lozenge \| \aK^{\nu+2m+2n}
\xrightarrow{\pi^\qauth_{AB},\eps^{\qauth}} \aQ^{m,\nu,\nu+2m+2n}_\natural\,,\]
with $\eps^{\qauth} = \sqrt{\eps}+\eps/2$, where $\eps$ is the error
of the strong purity testing code.
\end{thm}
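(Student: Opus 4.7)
The plan is to verify the two conditions of \defref{def:security}. Correctness is straightforward: with the filters $\square_E$ and $\lozenge_E$ in place, the insecure channel delivers the cipher faithfully and Bob's one-bit acknowledgement reaches Alice, so decoding returns exactly $\rho$ with syndrome $\zero$ on $S$, Bob accepts, and both players recycle the full $\nu+2m+2n$ bit key \--- matching the behavior of $\aQ^{m,\nu,\nu+2m+2n}_\natural$ exactly with zero error on this condition.

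For the security condition I would build a simulator $\sigma^{\qauth}_E$ that samples $k,\ell$ internally. Upon receiving a ``message sent'' notification of length $m$ from the interior $\aS^m$ interface, $\sigma^{\qauth}_E$ hands the environment the fake cipher $P_\ell U_k (\tau^A \tensor \proj{0}^S) \hconj{U}_k P_\ell$ at the insecure-channel interface for an arbitrary fixed $\tau^A$ (e.g.\ the maximally mixed state). When the environment returns a possibly modified cipher, the simulator applies the inverse encoding, measures $S$, and on outcome $0$ feeds the bit $1$ to both switches of $\aQ$ (delivering the message and fully recycling the key) and forwards ``accept'' on the authentic-channel interface; any other outcome produces bit $0$ (reject, short key). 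Impersonation attacks, where the environment sends a cipher before Alice has sent any message, are handled by running the same decoding and letting the simulator abort when $\aQ$ has not yet been primed; the discrepancy with the real system is controlled by the small forgery success probability, which follows from the same Pauli twirl over $\ell$.

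The technical core will be bounding the trace distance between $\pi^{\qauth}_{AB}(\aC \| \aA \| \aK^{\nu+2m+2n})$ and $\aQ^{m,\nu,\nu+2m+2n}\sigma^{\qauth}_E$ against any distinguisher holding a purification $R$ of the message. The standard two-step reduction applies: first, averaging over the uniformly random one-time pad key $\ell$ performs a Pauli twirl that reduces an arbitrary CPTP attack on the cipher to a convex combination $\sum_j p_j P_j (\cdot) P_j$ with coefficients independent of $k$ and of the message; second, by \eqnref{eq:stabilizer.pauli} the effective error on $AS$ before the decoding is $\hconj{U}_k P_j U_k = e^{i\theta_j} P_{j'}$, so Bob accepts exactly when $P_{j'}$ is trivial on $S$, i.e.\ $P_j \in \cP_k$. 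For $P_j = P_{0,0}$ the real and ideal transcripts agree verbatim, including the recycled keys \--- the strong purity testing assumption that $\cQ_k = \{P_{0,0}\}$ is essential here, since it rules out undetected errors that act trivially on the message but could leak information about $k$. For every other $P_j$, the strong purity testing definition bounds the fraction of $k$ with $P_j \in \cP_k$ by $\eps$.

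The main obstacle is converting this per-Pauli bound into the quoted distinguishing advantage $\sqrt{\eps} + \eps/2$ while simultaneously certifying that the recycled keys $k,\ell$ remain uniformly random and independent of the environment's final view on the accept branch. I expect the $\sqrt{\eps}$ term to arise from a fidelity-to-trace-distance step of the form $D \leq \sqrt{1-F^2}$, applied via Uhlmann's theorem to the joint state on $R$, Bob's output register and the recycled key registers: on the ``good'' (identity Pauli) sub-ensemble the purifications of the real and ideal states coincide exactly, so their overlap is at least $\sqrt{1-\eps}$. The residual $\eps/2$ linear term should come from the rejected branch \--- where only $k$ is recycled \--- and from accounting for normalization of the subnormalized Pauli-error components when averaged over $k$. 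Combining both branches then yields $\eps^{\qauth} = \sqrt{\eps} + \eps/2$.
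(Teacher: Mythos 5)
Your overall architecture for the substitution case is sound \--- decompose the attack as $U^{CE}=\sum_j P^C_j\tensor E^E_j$, use the strong purity testing property for the accept branch with a square\-/root loss from a pure\-/state distance bound, and a linear $\eps/2$ term for the reject branch \--- and your simulator (a fake cipher encoding a fixed state under an internally sampled key, with accept/reject decided by a simulated syndrome measurement) is a legitimate alternative to the paper's, which instead hands out halves of EPR pairs and decides accept/reject by a Bell measurement testing whether the returned register was touched at all. Two caveats on your version. First, your simulator accepts whenever $j\in\cP_k$ for its internal $k$, yet the ideal resource then delivers the \emph{unmodified} message; the terms $j\in\cP_k\setminus\{0\}$ therefore contaminate the ideal side as well and must be bounded separately (another $O(\eps)$), whereas the paper's Bell test accepts only for $j=0$ and avoids this. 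Second, the claim that the twirl over $\ell$ reduces the attack to $\sum_j p_j P_j(\cdot)P_j$ independently of the message cannot be invoked up front, because $\ell$ is recycled on accept and handed to the distinguisher: the joint state with the $\proj{\ell}$ register retains the coherences $\ketbra{\psi_{j_1,k}}{\psi_{j_2,k}}$ with phases $(-1)^{\sym{j_1\xor j_2}{\ell}}$. The paper bounds each pure state $\xi^\acc_{k,\ell}$ against the ideal one via $\frac{1}{2}\trnorm{\proj{\psi}-\proj{\varphi}}\leq\norm{\ket{\psi}-\ket{\varphi}}$ and only then uses Jensen's inequality to move the average over $k,\ell$ inside the square root, where \eqnref{eq:symplectic.trick} kills the cross terms; your Uhlmann step would have to accomplish the same while keeping $\ell$ explicit.

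The genuine gap is the impersonation case, which you dispose of in one sentence. The difficulty there is not that the forgery succeeds with small probability (that is the easy $\trnorm{\xi^\acc}=2^{-n}$ part); it is that after a \emph{rejected} forgery Bob recycles $k$ and hands it to the distinguisher, while his decoding and syndrome measurement have imprinted correlations between the distinguisher's retained purification of the fake cipher and the still\-/secret one\-/time\-/pad key $\ell$. The distinguisher then submits a fresh message at Alice's interface and receives a cipher encrypted with that same $\ell$, so one must show that the residual information about $\ell$ given $k$ is negligible. The paper does this by first analyzing the antisymmetric maximally entangled input, where Bob's operations can be pulled over to the distinguisher's side and the leakage is bounded by $2^{-n}\leq\sqrt{\eps}$, and then reducing an arbitrary input to that case by teleportation, absorbing the Pauli correction into a relabeling of $\ell$. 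Without an argument of this kind your proof does not cover all distinguishers \--- which is precisely the shortcoming of the earlier literature that this theorem is meant to repair.
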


In order to prove this theorem, we need to find a simulator such that
the real and ideal systems are indistinguishable except with advantage
$\sqrt{\eps}+\eps/2$. The simulator that we use is illustrated in
\figref{fig:auth.ideal}, and works as follows. When it receives a
notification from the ideal resource that a message is sent, it
generates EPR pairs $\ket{\Phi}^{CR}$ and outputs half of each pair
(the $C$ register) at its outer interface. Once it receives a modified
cipher (denoted $C'$ in the picture), it measures this state and the
half of the EPR pairs it kept in the Bell basis to decide if they were
modified. It accordingly activates the switch on the resource
$\aQ^{m,\nu,\nu+2m+2n}$ controlling whether Bob gets the message and
the length of the key generated, and outputs the bit of backward
communication from Bob to Alice \--- which is always leaked to Eve. If
it first receives the register $C'$ before generating the EPR pairs,
it always notifies the ideal resource to output an error and outputs
$0$ as the leak on the backwards authentic channel.

\begin{figure}[tb]
\begin{centering}

\begin{tikzpicture}[
resource/.style={draw,thick,minimum width=3.2cm,minimum height=2cm},
sArrow/.style={->,>=stealth,thick},
sLine/.style={-,thick},
sLeak/.style={->,>=stealth,thick,dashed},
simulator/.style={draw,thick,minimum width=3.2cm,minimum height=1cm,rounded corners}]

\small

\def\t{2.35} 
\def\v{.6}
\def\w{.4}
\def\x{.4}
\def\z{3.25} 
\def\s{2} 

\node[resource] (channel) at (0,0) {};
\node[yshift=-1.5,above right] at (channel.north west) {\footnotesize
  Secure ch.\ \& key $\aQ^{m,\nu,\nu+2m+2n}$};
\node[simulator] (sim) at (0,-\s) {};
\node[xshift=-1.5,below right] at (sim.north east) {\footnotesize $\sigma^\qauth_E$};
\node (aliceUp) at (-\t,\v) {};
\node (aliceMiddle) at (-\t,0) {};
\node (aliceDown) at (-\t,-\v) {};
\node (bobUp) at (\t,\v) {};
\node (bobMiddle) at (\t,0) {};
\node (bobDown) at (\t,-\v) {};
\node (eveLeftL) at (-3*\x,-\z) {};
\node (eveLeftR) at (-\x,-\z) {};
\node (eveRightL) at (\x,-\z) {};
\node (eveRightR) at (3*\x,-\z) {};

\node at (-2.5*\x,-\s) {$\ket{\Phi}^{CR}$};

\draw[sLine] (aliceDown) to node[auto,pos=.1,yshift=-1] {$\rho$} (\x-\w,-\v) to +(340:2*\w);
\draw[sArrow]  (\x+\w,-\v) to node[auto,pos=.8,yshift=-1] {$\rho,\bot$} (bobDown);
\draw[sLeak] (eveLeftL |- aliceDown) to node[auto,swap,pos=.7] {$m$}
(eveLeftL |-sim.north);
\draw[sArrow] (eveLeftL |- sim.south) to node[auto,pos=.5,xshift=-1] {$C$} (eveLeftL);

\node[draw] (key) at (\x,0) {key};
\draw[sArrow] (key) to node[auto,pos=.2] {$k$} (bobMiddle);
\draw[sLine] (key) to (-\x+\w,0) to node[auto,swap,pos=.1] {$k$}
node[pos=.53] (handle) {} +(200:2*\w);
\draw[sArrow] (-\x-\w,0) to (aliceMiddle);

\draw[double] (eveRightL |- sim.north) to node[pos=.2,xshift=-1,auto,swap] {$0,1$}
node[circle,fill,pos=.61,inner sep=1.5] {}  (key);
\draw[double] (eveLeftR) to node[pos=.08,xshift=-1,auto,swap]
{$0,1$} (handle.center);

\draw[sArrow] (eveRightR |- sim.south) to  node[auto,pos=.6] {$0,1$} (eveRightR);
\draw[sArrow] (eveRightL) to node[auto,pos=.5,xshift=-1,swap] {$C'$} (eveRightL |- sim.south);

\draw[sArrow] (aliceUp) to node[auto,yshift=-1] {\footnotesize req.} (channel.west |- aliceUp);
\draw[sArrow] (bobUp) to node[auto,swap,yshift=-1] {\footnotesize req.} (channel.east |- bobUp);

\end{tikzpicture}

\end{centering}
\caption[Ideal system for quantum
authentication]{\label{fig:auth.ideal}The ideal quantum authentication
  system consisting of the constructed resource $\aS^m$ and
  $\bar{\aK}^{\nu,\nu+2m+2n}$, and the simulator $\sigma^\qauth_E$.}
\end{figure}
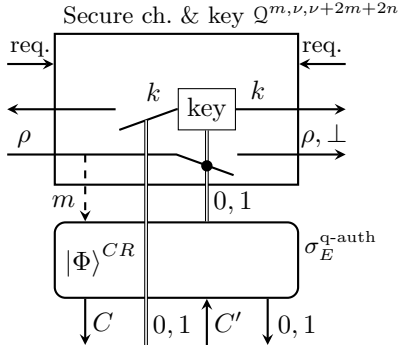

\begin{proof}
  It is trivial to show that correctness holds with error $0$,
  namely that
  \begin{equation} \label{eq:thm.cor} d\left(\pi^{\qauth}_{AB} \left(
        \aC \square_E \| \aA \lozenge_E \| \aK^{\nu+2m+2n}
      \right),\aQ^{m,\nu,\nu+2m+2n} \natural_E \right) = 0
    \,.\end{equation} We now prove the case of security, i.e.,
  \begin{equation} \label{eq:thm.sec} d\left(\pi^{\qauth}_{AB} \left(
        \aC \| \aA \| \aK^{\nu+2m+2n} \right), \aQ^{m,\nu,\nu+2m+2n}
      \sigma^\qauth_E \right) \leq \sqrt{\eps}+\eps/2\,.\end{equation}

      The real and ideal systems, drawn in Figures~\ref{fig:auth.real}
      and \ref{fig:auth.ideal} have $5$ inputs. The distinguisher thus
      has the choice between $5!$ possible orders for providing
      inputs. However, most of these orders are redundant and do not
      need to be analyzed. Providing the requests for the secret keys
      before they are ready is pointless. So it is sufficient to look
      at the case where these requests are made as soon as the keys
      are available for recycling, i.e., after Bob has received the
      message from Alice and after Alice has received the confirmation
      from Bob. What is more, neither sending Alice an error on the
      backwards authentic channel nor allowing her to get Bob's
      confirmation will help either way, since the distinguisher
      already knows what output Alice will produce, so we can
      completely ignore this input. That leaves only $2$ in-ports, and
      thus $2$ orders to analyze:
  \begin{enumerate}
    \item The distinguisher first inputs a message at Alice's
  interface, gets the cipher at Eve's interface, inputs a possibly
  modified cipher at Eve's interface, gets the output at Bob's
  interface, and requests the recycled key.
  \item The distinguisher first inputs a fake cipher at Eve's
    interface, gets the output at Bob's interface, makes a request for
    his recycled key, then inputs a message at Alice's interface and
    receives the cipher for that message.
  \end{enumerate}

  We start with the first case, the initial message is sent to
  Alice. The distinguisher prepares a message $\ket{\psi}^{ME}$ and
  inputs the $M$ part at Alice's interface. The ideal channel then
  notifies the simulator that a message has been input. The simulator
  prepares a maximally entangled state $\ket{\Phi}^{CR}$ of dimension
  $2^{2m+2n}$ and outputs the $C$ register at Eve's interface. The
  distinguisher now holds a bipartite state in $CE$, to which it
  applies a unitary $U^{CE}$. Without loss of generality, one may
  write the unitary as $U^{CE} = \sum_j P^C_j \tensor E^E_j$, where
  $P^C_j$ are Paulis acting on the cipher register $C$ and $E^E_j$ act
  on the distinguisher's internal memory $E$. The resulting state in
  the $C$ register is input back in the $E$ interface. The simulator
  now measures $CR$ in the Bell basis defined by the projectors
  $\{P_j \tensor I \proj{\Phi}^{CR} P_j \tensor I\}_j$. If the outcome
  is $j = 0$ \--- where $P_0 = I$ \--- it tells the resource
  $\aQ^{m,\nu,\nu+2m+2n}$ that the cipher was not modified. In which
  case the contents of the register $M$ is output at Bob's interface
  with an $\acc$ flag. Furthermore, $\aQ^{m,\nu,\nu+2m+2n}$ generates
  a fresh uniform key $(k,\ell)$, where $|k| = \nu$ and
  $|\ell| = 2m+2n$. If the outcome is $j \neq 0$, then the simulator
  notifies the resource to delete the message and output a $\rej$ flag,
  as well as prepare only the shorter key $k$. The
  distinguisher then sends a request to obtain the fresh key. So the
  final state held by the distinguisher interacting with the ideal
  system is
\begin{multline} \label{eq:auth.ideal} \zeta = \proj{\acc} \tensor
  \tau^{K} \tensor \tau^L \tensor \left[ \left( I^M \tensor E^E_0 \right)
    \proj{\psi}^{ME} \left( I^M \tensor \hconj{\left(E^E_0\right)} \right) \right] \\ {}+
  \sum_{j \neq 0} \proj{\rej} \tensor \tau^{K} \tensor E^E_j \rho^E
  \hconj{\left(E^E_j\right)}\,,\end{multline} where $\tau^K$ and $\tau^L$ are fully
mixed states and $\rho^E = \trace[M]{\proj{\psi}^{ME}}$. One could
append states $\bot^L$ and $\bot^M$ in the $\rej$ branch of
\eqnref{eq:auth.ideal} so that both terms have the same number of
registers; we omit them for simplicity.

In the real system, for the secret key $(k,\ell)$, the state before
Bob's measurement of the syndrome is given by
\begin{align*}\ket{\varphi_{k,\ell}}^{SME} & = \sum_j \left( \hconj{\left(U_k^{SM}\right)} P^{SM}_\ell P^{SM}_j P^{SM}_\ell U^{SM}_k \tensor
    E^E_j \right) \zero^S \ket{\psi}^{ME} \\ & = \sum_j (-1)^{\sym{j}{\ell}}\left( \hconj{\left(U_k^{SM}\right)}
    P^{SM}_j U^{SM}_k \tensor E^E_j \right) \zero^S \ket{\psi}^{ME}\,,
\end{align*}
where $\sym{\cdot}{\cdot}$ denotes the symplectic product defined in
\eqnref{eq:symplectic}. Let $\cJ^k_s$ be the set of indices $j$ such
that the error $P^{SM}_j$ produces a syndrome $s$ when code $k$ is
used, i.e.,
$\hconj{\left(U^{SM}_k \right)} P^{SM}_jU^{SM}_k = e^{i \theta_{k,j}}
P^{S}_{s,z} \tensor P^M_{j'}$ for some $\theta_{k,j}$
(see \eqnref{eq:stabilizer.pauli} and discussion thereafter). For
$j \in \cJ^k_s$, let
\begin{align*}\ket{s}^S\ket{\psi_{j,k}}^{ME} & \coloneqq \left( \hconj{\left(U^{SM}_k\right)} P^{SM}_j U^{SM}_k \tensor
  E^E_j \right) \zero^S \ket{\psi}^{ME} \\
  & \; = e^{i \theta_{k,j}} \left( P^{S}_{s,z} \tensor P^M_{j'} \tensor
  E^E_j \right) \zero^S \ket{\psi}^{ME} \,.\end{align*} Then
\begin{align*}
  \ket{\varphi_{k,\ell}} & = \sum_s \sum_{j \in \cJ^k_s} (-1)^{\sym{j}{\ell}}\left( \hconj{\left(U_k^{SM}\right)}
    P^{SM}_j U^{SM}_k \tensor E^E_j \right) \zero^S \ket{\psi}^{ME} \\
& = \sum_s \sum_{j \in \cJ^k_s} (-1)^{\sym{j}{\ell}} \ket{s}^S \ket{\psi_{j,k}}^{ME}\,.
\end{align*}
The next step in Bob's protocol consists in measuring the syndrome. If
$s=0$ is obtained, he outputs the message as well as the key
$(k,\ell)$ and a flag $\acc$. Otherwise he deletes the message,
outputs $k$ with the flag $\rej$. The final state held be the
distinguisher in this case is
\begin{align*}
\xi = & \proj{\acc} \tensor \frac{1}{2^{\nu+2m+2n}}
        \sum_{k,\ell}\proj{k,\ell} \\ &  \qquad \qquad \qquad \qquad \qquad \qquad \quad \tensor
\sum_{j_1,j_2 \in \cJ^k_0} (-1)^{\sym{j_1 \xor
    j_2}{\ell}}\ketbra{\psi_{j_1,k}}{\psi_{j_2,k}}^{ME} \\
& {}+ \proj{\rej} \tensor \frac{1}{2^{\nu+2m+2n}} \sum_{k,\ell}\proj{k}
 \\ &  \qquad \qquad \qquad \qquad \qquad \qquad \tensor \sum_{s \neq 0}
\sum_{j_1,j_2 \in \cJ^k_s} (-1)^{\sym{j_1 \xor
    j_2}{\ell}} E^E_{j_1} \rho^E \hconj{\left(E^E_{j_2}\right)}\,,
\end{align*}
where we have used $\ket{\psi_{j,k}}^{ME} = \left(V^M_{k,j} \tensor
  E^E_j\right) \ket{\psi}^{ME}$ for some unitary $V^M_{k,j}$.

Setting
\begin{align*}
\zeta^\acc & \coloneqq \left(I^M \tensor E^E_0\right) \proj{\psi}^{ME}
\left(I^M \tensor \hconj{\left(E_0^E\right)} \right)\,, \\
\zeta^\rej & \coloneqq \sum_{j \neq 0} E^E_j \rho^E \hconj{\left(E_j^E\right)}\,, \\
\xi^\acc_{k,\ell} & \coloneqq \sum_{j_1,j_2 \in \cJ^k_0} (-1)^{\sym{j_1 \xor
    j_2}{\ell}}\ketbra{\psi_{j_1,k}}{\psi_{j_2,k}}^{ME}\,, \\
\xi^\rej_{k} & \coloneqq \frac{1}{2^{2m+2n}} \sum_{\ell,s \neq 0} \sum_{j_1,j_2 \in \cJ^k_s} (-1)^{\sym{j_1 \xor
    j_2}{\ell}} E^E_{j_1} \rho^E \hconj{\left(E_{j_2}^E\right)}\,,
\end{align*} 
the distance between real and ideal systems may be written as
\[ \frac{1}{2} \trnorm{\zeta-\xi} =  \frac{1}{2\cdot 2^{\nu+2m+2n}}
\sum_{k,\ell}\trnorm{\zeta^\acc-\xi^\acc_{k,\ell}} + \frac{1}{2 \cdot 2^{\nu}}
\sum_{k}\trnorm{\zeta^\rej-\xi^\rej_{k}}\,.\]

$\zeta^\acc$ and $\xi^\acc_{k,\ell}$ are both pure states, so from
 \lemref{lem:trace2norm} we bound their distance as
\begin{align*}
 \frac{1}{2} \trnorm{\zeta^\acc-\xi^\acc_{k,\ell}} & \leq
\norm{\left(I^M \tensor E^E_0\right) \ket{\psi}^{ME} - \sum_{j \in \cJ^k_0}
  (-1)^{\sym{j}{\ell}} \ket{\psi_{j,k}}^{ME}} \\ & = \norm{\sum_{j \in \cJ^k_0
    \setminus \{0\}} (-1)^{\sym{j}{\ell}} \ket{\psi_{j,k}}^{ME}} \\
& = \sqrt{\sum_{j_1,j_2 \in \cJ^k_0
    \setminus \{0\}} (-1)^{\sym{j_1 \xor j_2}{\ell}} \braket{\psi_{j_1,k}}{\psi_{j_2,k}}}\,,
\end{align*}
where $\norm{\ket{a}} = \sqrt{\braket{a}{a}}$ is the vector $2$\=/norm
and we used the fact that
$\ket{\psi_{0,k}}^{ME} = \left(I^M \tensor E^E_0\right)
\ket{\psi}^{ME}$.
From Jensen's inequality and using \eqnref{eq:symplectic.trick} we
obtain
\begin{multline*}
 \frac{1}{2 \cdot 2^{\nu+2m+2n}} \sum_{k,\ell}
 \trnorm{\zeta^\acc-\xi^\acc_{k,\ell}} \\ \begin{split} & \leq
\sqrt{ \frac{1}{2^{\nu+2m+2n}} \sum_{k,\ell} \sum_{j_1,j_2 \in \cJ^k_0
    \setminus \{0\}} (-1)^{\sym{j_1 \xor j_2}{\ell}} \braket{\psi_{j_1,k}}{\psi_{j_2,k}}} \\
& = \sqrt{ \frac{1}{2^{\nu}} \sum_{k} \sum_{j \in \cJ^k_0
    \setminus \{0\}} \braket{\psi_{j,k}}{\psi_{j,k}}}\,. \end{split}
\end{multline*}
Finally, because the code is a strong purity testing code with error
$\eps$ and that
$\braket{\psi_{j,k}}{\psi_{j,k}} = \trace{E^E_j \rho^E
  \hconj{\left(E^E_j\right)}} \eqqcolon p_j$
with $\sum_j p_j = 1$, we get
\begin{align*}
    \frac{1}{2|\cK||\cL|} \sum_{k,\ell}
  \trnorm{\zeta^\acc-\xi^\acc_{k,\ell}} 
& \leq \sqrt{ \frac{1}{|\cK|} \sum_{j \neq 0} \sum_{k : j \in \cJ^k_0}
  \braket{\psi_{j,k}}{\psi_{j,k}}} \\
& = \sqrt{ \frac{1}{|\cK|} \sum_{j \neq 0} \sum_{k : j
    \in \cJ^k_0} p_j} \\
 & \leq \sqrt{\sum_{j \neq 0} \eps p_j } \leq \sqrt{\eps}\,.
\end{align*}

In the reject branch of the real system we have
\begin{align*}
\xi^\rej_{k} & = \frac{1}{2^{2m+2n}} \sum_{\ell,s \neq 0} \sum_{j_1,j_2 \in \cJ^k_s} (-1)^{\sym{j_1 \xor
    j_2}{\ell}} E^E_{j_1} \rho^E \hconj{\left(E^E _{j_2}\right)} \\
& = \sum_{s \neq 0} \sum_{j \in \cJ^k_s} E^E_{j} \rho^E \hconj{\left(E^E _{j}\right)} \\
& = \sum_{j \notin \cJ^k_0} E^E_{j} \rho^E \hconj{\left(E^E _{j}\right)} \,,
\end{align*}
where we used again \eqnref{eq:symplectic.trick}. Thus
\begin{align*}
 \frac{1}{2 \cdot 2^{\nu}} \sum_{k}\trnorm{\zeta^\rej-\xi^\rej_{k}}
& = \frac{1}{2 \cdot 2^{\nu}} \sum_k \trnorm{\sum_{j \in \cJ^k_0 \setminus
  \{0\}} E^E_{j} \rho^E \hconj{\left(E^E _{j}\right)}} \\
& \leq \frac{1}{2 \cdot 2^{\nu}} \sum_k \sum_{j \in \cJ^k_0 \setminus
  \{0\}} p_j \leq \eps/2\,.
\end{align*}
Putting all this together we get 
\[ \frac{1}{2} \trnorm{\zeta-\xi} \leq \sqrt{\eps} + \eps/2\,.\]

We now consider the second case: the distinguisher first prepares a
state $\ket{\psi}^{CE}$ and inputs the $C$ part at Eve's interface,
then obtains the output at Bob's interface. Note that in the ideal
case the channel always outputs a $\rej$ message at Bob's
interface. Thus, if the cipher is accepted by Bob \--- who outputs a
state $\zeta^\acc$ \--- the distinguisher must be interacting with the
real system and can already output this guess. In the case of a
rejection, it now holds a bipartite system $KE$ \--- the recycled key
$K$ and its purifying system $E$. It then applies an isometry
$U : \hilbert_{KE} \to \hilbert_{KME}$ to this system and inputs the
$M$ part of the resulting state at Alice's interface. After which it
obtains a cipher at Eve's interface and holds the tripartite system
$KCE$ \--- the recycled key $K$, the cipher $C$ and 
its internal memory $E$. We denote this state $\zeta$ in the ideal
case and $\xi^\rej$ in the real case, and we need to bound
\[ \frac{1}{2} \trnorm{\zeta - \xi^\rej} + \frac{1}{2}
\trnorm{\xi^\acc}\,.\]

In a first step, we assume that the state $\ket{\psi}^{CE}$ prepared
by the distinguisher is an antisymmetric fully entangled state, which
we denote
$\ket{\Psi^-}^{CE} = \sum_x (-1)^{w(x)}\ket{x,\bar{x}}^{CE}$, where
$w(x)$ is the Hamming weight of $x \in \{0,1\}^{m+n}$ and $\bar{x}$ is
the string $x$ with all bits flipped. In the ideal case the simulator
notifies the channel to reject the cipher, and the state
$\proj{\rej} \tensor \tau^K$ is output at Bob's interface. The
distinguisher then holds $\zeta = \tau^K \tensor \tau^E$. In the real
case, Bob applies the decoding algorithm, i.e., first a Pauli
$P^C_\ell$, then a unitary $\hconj{\left(U^C_k\right)}$ and finally
measures $n$ bits of the syndrome in the computational basis. Since
the antisymmetric state is invariant under $U \tensor U$, one could
equivalently apply the inverse operation, $P_\ell U_k$, to the $E$
system, i.e., the state after Bob's measurement is given by
\begin{multline*} \frac{1}{2^{\nu+3m+3n}} \sum_{k,\ell,s,x_1,x_2}
  (-1)^{w(x_1) \xor w(x_2)}  \proj{k,\ell} \\ {}\tensor
 \left( I^C \tensor P^E_\ell U^E_k \right)
\ketbra{s,x_1,\bar{s},\bar{x}_1}{s,x_2,\bar{s},\bar{x}_1}^{CE} \left( I^C
  \tensor \hconj{\left(U _k^E\right)} P^E_\ell \right)\,. \end{multline*}
If $s=0$ Bob accepts the cipher as being valid, which happens with probability
$2^{-n}$, i.e., $\trnorm{\xi^\acc} = 2^{-n}$. In the case where $s
\neq 0$, he deletes the cipher, so the remaining state is given by
\begin{multline*} \frac{1}{2^{\nu+3m+3n}} \sum_{k,\ell,s \neq 0,x}
  \proj{k,\ell} \tensor \left( I^C \tensor P^E_\ell U^E_k \right)
  \proj{\bar{s},\bar{x}}^{CE} \left( I^C \tensor \hconj{\left(U_k^E\right)} P^E_\ell \right)
  \\ = \tau^K \tensor \tau^L \tensor \tau^E -
  \rho^{KLE}\,, \end{multline*} where
\[ \rho^{KLE} = \frac{1}{2^{\nu+3m+3n}} \sum_{k,\ell,x}
\proj{k,\ell} \tensor P^E_\ell U^E_k \proj{\bar{0},\bar{x}}^{E} \hconj{\left(U^E_k\right)}
P^E_\ell \,,\]
$K$ is made public and the $L$ system is the part of the key kept
secret by the players.

Let $\cE$ denote the completely positive, trace\-/preserving
(CPTP) map consisting of the distinguisher's next
step \--- the isometry $U : \hilbert_{KE} \to \hilbert_{KME}$ \--- and
the final operation of the ideal system \--- deleting the message
system $M$ that is input at Alice's interface and outputting a fully
mixed state $\tau^C$. Let $\cF$ denote the CPTP map consisting of the
distinguisher's next step and the final operation of the real system
\--- encoding the message system $M$ according to the protocol and
outputting the resulting cipher. We have
$\zeta = \cE \left( \tau^K \tensor \tau^E \right)$ and
$\xi^\rej = \cF \left( \tau^K \tensor \tau^L \tensor \tau^E \right) -
\cF \left(\rho^{KLE}\right)$. Thus,
\[ \frac{1}{2} \trnorm{\zeta - \xi^\rej} \leq \frac{1}{2} \trnorm{\cE
  \left( \tau^K \tensor \tau^E \right) - \cF \left( \tau^K \tensor
    \tau^L \tensor \tau^E \right)} + \frac{1}{2}2^{-n}\,,\]
since $\trnorm{\rho^{KLE}} = 2^{-n}$. Finally, note that we have \[\cE
  \left( \tau^K \tensor \tau^E \right) = \cF \left( \tau^K \tensor
    \tau^L \tensor \tau^E \right) = \tau^C \tensor \sigma^{KE}\] for
  $\sigma^{KE} = \ktrace[M]{U \left( \tau^K \tensor \tau^E\right)
    \hconj{U}}$, since the random Pauli $P_\ell$ applied by the
  encryption algorithm completely decouples the cipher from
  $KE$. Putting this together, we get
  \[ \frac{1}{2} \trnorm{\zeta - \xi} \leq 2^{-n} \leq \sqrt{\eps} \,,\]
  since a strong purity testing code will always have an error
  $\eps \geq \frac{2^{2m+n}-1}{2^{2m+2n}-1} \geq 2^{-2n}$.

The final case that remains to consider is when the distinguisher prepares a state
$\ket{\psi}^{CE}$ that is not the antisymmetric state. We will reduce
this case to that of the entangled antisymmetric by using the
entangled state $\ket{\Psi^-}^{CE}$ to teleport the $C'$ part of any
state $\ket{\psi}^{C'E'}$. Let the teleportation scheme be given by the
projectors $\{M^{EC'}_a\}$ on $EC'$ which incur a Pauli correction $P^C_a$
on the $C$ system, i.e.,
\[ \trace[EC']{\sum_a P^C_a \tensor M^{EC'}_a \left(
    \proj{\Psi^-}^{CE} \tensor
  \proj{\psi}^{C'E'}\right) P^C_a \tensor M^{EC'}_a} = \proj{\psi}_{CE'} \,.\]
So the distinguisher prepares an entangled state $\ket{\Psi^-}_{CE}$ and the state
it wishes to send to Bob, $\ket{\psi}^{C'E'}$. It teleports the $C'$
register to the $C$ register, and sends this to Bob, who performs his
decryption operation $\hconj{\left(U_k^C\right)} P^C_\ell $. This results in
the shared state
\begin{align*} & \frac{1}{2^{\nu+2m+2n}} \sum_{k,\ell,a} \proj{k,\ell}
                 \tensor \Big[ \left(\hconj{\left(U_k^C\right)}
                 P^C_\ell P^C_a \tensor M^{EC'}_a\right)  \\
  & \qquad \qquad \qquad \qquad
   \left( \proj{\Psi^-}^{CE} \tensor \proj{\psi}^{C'E'}\right)
    \left(P^C_a P^C_\ell U^C_k \tensor M^{EC'}_a\right) \Big] \\
& \qquad = \frac{1}{2^{\nu+2m+2n}} \sum_{k,\ell,a} \proj{k,\ell \xor
  a} \tensor \Big[ \left(\hconj{\left(U_k^C\right)} P^C_\ell \tensor M^{EC'}_a\right)\\
  & \qquad \qquad \qquad \qquad
   \left( \proj{\Psi^-}^{CE} \tensor \proj{\psi}^{C'E'}\right)
    \left(P^C_\ell U^C_k \tensor M^{EC'}_a\right) \Big] \\
& \qquad = \frac{1}{2^{\nu+2m+2n}} \sum_{k,\ell,a} X^L_a \proj{k,\ell}
  X^L_a \tensor \Big[ \left(I^C \tensor M^{EC'}_a P^E_\ell U^E_k \right)\\
  & \qquad \qquad \qquad \qquad
\left( \proj{\Psi^-}^{CE} \tensor \proj{\psi}^{C'E'}\right)
    \left(I^C \tensor \hconj{\left(U_k^E\right)} P^E_\ell
    M^{EC'}_a\right) \Big]\,,\end{align*} where
  $X^L_a$ flips the bits of $\ell$ in the positions where $a_i = 1$. The
  teleportation of $\ket{\psi}_{C'E'}$ is thus equivalent to a
  measurement of the distinguisher's system followed by a correction
  of the secret key $\ell$. This may however be performed after Bob
  measures the syndrome and accepts or rejects the cipher he
  received. The probability of accepting the cipher is thus unchanged,
  and plugging in the result from the case where the distinguisher
  sends Bob half of the anti\-/symmetric entangled state, we find that
  the state of the real system in the rejection branch after Bob's
  operations is
  \begin{align*} & \sum_a \left( X^L_a \tensor M^{EC'}_a \right) \tau^K \tensor \tau^L \tensor
  \tau^E  \tensor \proj{\psi}^{C'E'} \left( X^L_a \tensor M^{EC'}_a \right) \\
    & \qquad \qquad \qquad \qquad {}- \sum_a
   \left( X^L_a \tensor M^{EC'}_a \right) \rho^{KLE} \tensor
   \proj{\psi}^{C'E'}  \left( X^L_a \tensor M^{EC'}_a \right) \\
    & \qquad = \tau^K \tensor \tau^L \tensor \sigma^{EC'E'} -
      \tilde{\rho}^{KLEC'E'}\,, \end{align*} where $\sigma^{EC'E'} = \sum_a
    M^{EC'}_a \left( \tau^E  \tensor \proj{\psi}^{C'E'}\right)  M^{EC'}_a$. And in the
    ideal system the state after the ideal channel outputs a rejection
    is $\tau^K \tensor \sigma^{EC'E'}$. We thus obtain the same bound
    on the distance between real and ideal systems as in the previous
    case.
\end{proof}

\begin{cor}
\label{cor:auth}
Let $\pi^{\qauth}_{AB}$ denote converteres corresponding to the
protocol from \figref{fig:protocol.2}. Then $\pi^{\qauth}_{AB}$
constructs the secure channel and secret key filtered resource
$(\aS^m \| \bar{\aK}^{\nu,\nu+2m+2n})_\flat$, given an insecure
quantum channel $\aC_{\square}$, a backwards authentic channel
$\aA_{\lozenge}$ and a secret key $\aK^{\nu+2m+2n}$, i.e.,
\[\aC_\square \| \aA_\lozenge \| \aK^{\nu+2m+2n}
\xrightarrow{\pi^\qauth_{AB},\eps^{\qauth}} (\aS^m \|
\bar{\aK}^{\nu,\nu+2m+2n})_\flat\,,\]
with $\eps^{\qauth} = \sqrt{\eps}+\eps/2$, where $\eps$ is the error
of the strong purity testing code.
\end{cor}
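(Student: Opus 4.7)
My approach is to derive the corollary from Theorem \ref{thm:auth} by a trivial reduction: the resource $\aQ^{m,\nu,\nu+2m+2n}$ is strictly stronger than $\aS^m \| \bar{\aK}^{\nu,\nu+2m+2n}$, since in the latter Eve independently controls message delivery and the recycled key length, and can additionally block Alice from receiving her key. Because any construction of a stronger resource immediately yields a construction of a weaker one (by attaching a fixed converter on the adversarial side), the corollary follows without any new analytic work; the hard estimates are all done inside Theorem \ref{thm:auth}.

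First I would dispose of the correctness condition. With honest filters, both $\aQ^{m,\nu,\nu+2m+2n}\natural_E$ and $(\aS^m \| \bar{\aK}^{\nu,\nu+2m+2n})\flat_E$ do exactly the same thing: deliver Alice's $m$-qubit message to Bob and give both players a uniform key of length $\nu+2m+2n$. Hence these two filtered resources are identical as honest resources and the correctness bound from Theorem \ref{thm:auth} (which is $0$) carries over verbatim.

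Next I would exhibit a converter $\alpha_E$ with the property that $(\aS^m \| \bar{\aK}^{\nu,\nu+2m+2n})\alpha_E = \aQ^{m,\nu,\nu+2m+2n}$. The converter $\alpha_E$ has the same outer interface as $\aQ$'s adversarial interface: it relays the ``message sent'' leakage unchanged, receives a single accept/reject bit from Eve and feeds it both to the message-delivery switch of $\aS^m$ and to the key-length switch of $\bar{\aK}^{\nu,\nu+2m+2n}$, and hard-codes the extra ``Alice receives her key'' bit of $\bar{\aK}$ to the accept value. By inspection of Figures \ref{fig:channel.secure}, \ref{fig:resource.key.switch} and \ref{fig:goal.alt}, the resulting resource is precisely $\aQ^{m,\nu,\nu+2m+2n}$.

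Combining these two ingredients, let $\sigma^\qauth_E$ be the simulator supplied by Theorem \ref{thm:auth} and define $\sigma'^\qauth_E \coloneqq \alpha_E \sigma^\qauth_E$, i.e., $\sigma^\qauth_E$ plugged into the outer interface of $\alpha_E$. Then
\[ (\aS^m \| \bar{\aK}^{\nu,\nu+2m+2n}) \sigma'^\qauth_E \; = \; \aQ^{m,\nu,\nu+2m+2n} \sigma^\qauth_E, \]
so any distinguisher sees the same output statistics in both cases. The security bound
\[ d\!\left(\pi^\qauth_{AB}(\aC \| \aA \| \aK^{\nu+2m+2n}),\; (\aS^m \| \bar{\aK}^{\nu,\nu+2m+2n}) \sigma'^\qauth_E\right) \; \leq \; \sqrt{\eps}+\eps/2 \]
is then immediate from Theorem \ref{thm:auth}. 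The only ``obstacle'' worth flagging is making sure the converter $\alpha_E$ really reproduces $\aQ$ exactly, but this is a direct comparison of the two pictures and requires no quantitative argument.
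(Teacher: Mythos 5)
Your overall strategy is exactly the paper's: the corollary is obtained from \thmref{thm:auth} by attaching a fixed converter to the adversarial interface of $\aS^m \| \bar{\aK}^{\nu,\nu+2m+2n}$ that emulates Eve's interface of $\aQ^{m,\nu,\nu+2m+2n}$, and then invoking the composition theorem; correctness carries over with error $0$. The paper phrases this as the trivial construction $\aQ^{m,\nu_\rej,\nu_\acc}_\natural \xrightarrow{\id,0} (\aS^m \| \bar{\aK}^{\nu_\rej,\nu_\acc})_\flat$, whose simulator is precisely your $\alpha_E$.

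There is, however, one concrete error in your definition of $\alpha_E$: you hard-code the ``Alice receives her key'' bit of $\bar{\aK}^{\nu,\nu+2m+2n}$, whereas $\aQ^{m,\nu,\nu+2m+2n}$ (see \figref{fig:goal.alt}) keeps this as a \emph{second, independent} input at Eve's interface, separate from the single bit that jointly controls message delivery and key length. This bit models Eve's ability to jumble or cut the backwards authentic channel so that Bob receives and accepts the message (and recycles the long key) while Alice never learns the outcome and therefore outputs $\bot$ instead of a key. With your $\alpha_E$ the claimed identity $(\aS^m \| \bar{\aK}^{\nu,\nu+2m+2n})\alpha_E = \aQ^{m,\nu,\nu+2m+2n}$ fails \--- the outer interfaces do not even have the same number of inputs \--- and the composed simulator $\alpha_E\sigma^\qauth_E$ would be defeated with advantage $1$ by a distinguisher that sends a valid cipher to Bob but withholds the backwards confirmation: in the real system Alice outputs $\bot$, in your ideal system she outputs the key. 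The fix is simply to have $\alpha_E$ forward this bit unchanged (as the paper's simulator does, copying only the accept/reject bit to both the $\aS^m$ switch and the key-length switch of $\bar{\aK}^{\nu,\nu+2m+2n}$); with that correction your argument coincides with the paper's proof.
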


\begin{proof}
  $\aQ^{m,\nu_\rej,\nu_\acc}_\natural$ is a stronger resource than
  $(\aS^m \| \bar{\aK}^{\nu_\rej,\nu_\acc})_\flat$, and one trivially
  has
  \[ \aQ^{m,\nu_\rej,\nu_\acc}_\natural \xrightarrow{\id,0} (\aS^m \|
  \bar{\aK}^{\nu_\rej,\nu_\acc})_\flat\,,\]
  with a simulator that forwards everything between the distinguisher
  and Eve's interface of $\aS^m \| \bar{\aK}^{\nu_\rej,\nu_\acc}$
  except for the bit deciding the secret key length and whether the
  message is accepted, which is copied and sent to both $\aS^m$ and
  $\bar{\aK}^{\nu_\rej,\nu_\acc}$. The corollary follows immediately
  from this and the composition theorem.
\end{proof}

\subsection{Optimality of the recycled key length}
\label{sec:auth.optimality}

It follows from \lemref{lem:ete} that \thmref{thm:auth} and
\corref{cor:auth} are also proofs of security for the
encrypt\-/then\-/encode protocol from \figref{fig:protocol.1}, i.e.,
\[\aC_\square \| \aA_\lozenge \| \aK^{\nu+2m+n}
\xrightarrow{\bar{\pi}^\qauth_{AB},\eps^{\qauth}}
\aQ^{m,\nu,\nu+2m+n}_\natural \xrightarrow{\id,0} (\aS^m \|
\bar{\aK}^{\nu,\nu+2m+n})_\flat\,,\]
with $\eps^{\qauth} = \sqrt{\eps}+\eps/2$. Thus, in the case where the
message is not accepted by Bob, $2m+n$ bits of key are lost. We prove
here that this is optimal: one cannot recycle any extra bit of key.

\begin{lem}
\label{lem:auth.optimal}
There exists an adversarial strategy to obtain all the secret bits
that are not recycled in the encrypt\-/then\-/encode protocol.
\end{lem}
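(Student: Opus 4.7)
The plan is to exhibit an explicit adversarial strategy that in the rejection branch recovers both $\ell$ and $s$ exactly, showing that the $2m+n$ non-recycled bits are determined by the combined view of the adversary and the recycled-key output at Alice's and Bob's interfaces; this rules out recycling any additional bit.

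First the adversary prepares a maximally entangled state $\ket{\Phi}^{MR}=2^{-m/2}\sum_x\ket{x}^M\ket{x}^R$ on two copies of the message register, hands $M$ to Alice as the plaintext and keeps $R$. Alice's protocol then produces the cipher
\[\sigma^{MSR}=(U_k^{MS}\otimes I^R)(P_\ell^M\otimes I^{SR})(\ket{\Phi}\bra{\Phi}^{MR}\otimes\ket{s}\bra{s}^S)(P_\ell^M\otimes I^{SR})(U_k^{MS}\otimes I^R)^\dagger,\]
whose $MS$ part travels on the insecure channel. The adversary intercepts $MS$ at Eve's interface and keeps it, forwarding to Bob instead a fresh maximally mixed state on $m+n$ qubits; under Bob's decoding the measured syndrome is uniform over $\{0,1\}^n$, so he rejects with probability $1-2^{-n}$ and the authentic backward channel relays the reject bit to Alice. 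At that point $k$ is recycled and output at the honest players' interfaces, where the environment reads it.

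With $k$ in hand, the adversary applies $U_k^\dagger$ to the intercepted $MS$ registers, producing $(P_\ell^M\otimes I^{SR})(\ket{\Phi}\bra{\Phi}^{MR}\otimes\ket{s}\bra{s}^S)(P_\ell^M\otimes I^{SR})$. A computational-basis measurement of $S$ then returns $s$ with certainty, and a Bell-basis measurement on $MR$ returns $\ell$ with certainty, since the $4^m$ Bell states $\{(P_\ell\otimes I)\ket{\Phi}\}_{\ell\in\{0,1\}^{2m}}$ form an orthonormal basis of the $2m$-qubit Hilbert space. The only subtlety is guaranteeing that the rejection branch is actually reached, but the maximally-mixed-state trick handles this with failure probability only $2^{-n}$; the strategy thus recovers all $2m+n$ non-recycled bits with overwhelming probability.
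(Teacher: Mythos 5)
Your proposal is correct and follows essentially the same strategy as the paper's proof: entangle the plaintext with a reference register, keep the intercepted cipher, force a reject by sending Bob garbage, then use the recycled key $k$ to undo $U_k$ and read off $s$ from the syndrome register and $\ell$ from a Bell measurement on the message--reference pair. The only difference is that you make explicit the (correct) observation that sending a maximally mixed state yields rejection with probability $1-2^{-n}$, a point the paper glosses over by simply saying a ``bogus cipher'' is sent.
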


\begin{proof} The distinguisher prepares EPR pairs $\ket{\Phi}^{ME}$
  and provides the $M$ part to Alice. It then receives the cipher and
  thus holds the state
  \[ U^{SM}_k P^M_{\ell} \left( \ket{s}^S \tensor \ket{\Phi}^{ME}
  \right)\,,\]
  which it keeps. It then sends a bogus cipher to Bob, and obtains the
  key $k$ after Bob recycles it. It applies the decoding unitary
  $\hconj{\left( U^{SM}_k\right)}$, measures the $S$ register to get
  the secret key $s$ and measures the joint $ME$ register in the Bell
  basis to get the secret key $\ell$. \end{proof}

\subsection{Explicit constructions}
\label{sec:auth.explicit}

The protocols we have given in \secref{sec:auth.protocol} use strong purity
testing codes, and the parameters of the key used, key recycled and
error depend on the parameters of these codes. In this section we give
two constructions of purity testing codes. The first requires less
initial secret key, the second has a better error parameter. Both have
the same net consumption of secret key bits.

The first construction is from Barnum et al.~\cite{BCGST02}. They give
an explicit strong purity testing code with $\nu = n$ and
$\eps = \frac{2m/n+2}{2^n}$.\footnote{In fact, \cite{BCGST02} only
  prove that their construction is a purity testing code, not a strong
  one. But one can easily verify that it is strong with the same
  parameters. What is more, their construction has $\nu = \log(2^n+1)$
  and $\eps = \frac{2m/n+2}{2^n+1}$. We remove one of the keys (and
  thus increase the error), so as to get simpler final expressions.}
Plugging this in the parameters from \thmref{thm:auth} with the
encrypt\-/then\-/encode protocol, we get the following.

\begin{cor}
\label{cor:auth.1}
The encrypt\-/then\-/encode protocol with the purity testing code of
\cite{BCGST02} requires an initial key of length $2m+2n$. It recycles
all bits if the message is accepted, and $n$ bits if the message is
rejected. The error is \[\eps^{\qauth} =\sqrt{\frac{2m/n+2}{2^{n}}} +
\frac{m/n+1}{2^n} \,.\]
\end{cor}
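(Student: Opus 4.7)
The proof is essentially a parameter substitution into the encrypt\-/then\-/encode security statement derived at the beginning of \secref{sec:auth.optimality}, which established that for any strong purity testing code of key length $\nu$ and error $\eps$,
\[\aC_\square \| \aA_\lozenge \| \aK^{\nu+2m+n}
\xrightarrow{\bar{\pi}^\qauth_{AB},\sqrt{\eps}+\eps/2} (\aS^m \|
\bar{\aK}^{\nu,\nu+2m+n})_\flat.\]
Thus the content of the corollary reduces entirely to reading off the parameters of the explicit construction of \cite{BCGST02}.

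The one step that requires a moment of care is the verification that the \cite{BCGST02} family is in fact \emph{strongly} purity testing, since in their paper this is only proved for errors $P_\ell \in \cP_k\setminus\cQ_k$. I would redo their counting argument on the full set $\{k : P_\ell\in\cP_k\}$ for every non\-/identity Pauli $P_\ell$, using the algebraic structure of the stabilizer codes they define, and check that the same bound survives. The footnote in the statement indicates that one also drops one key from the original construction, which slightly weakens $\nu = \log(2^n+1)$ and $\eps = (2m/n+2)/(2^n+1)$ to $\nu = n$ and $\eps = (2m/n+2)/2^n$; I would include a short paragraph spelling out this simplification so that the parameters line up with the corollary.

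The remainder is arithmetic. Substituting $\nu = n$ into $\nu+2m+n$ gives the initial key length $2m+2n$. The two recycling amounts are read directly off the constructed resource $\bar{\aK}^{\nu,\nu+2m+n}$: on acceptance all $\nu+2m+n = 2m+2n$ bits are returned, and on rejection only the $\nu=n$ bits indexing the code are returned. Substituting $\eps = (2m/n+2)/2^n$ into $\sqrt{\eps}+\eps/2$ gives
\[\eps^{\qauth} = \sqrt{\frac{2m/n+2}{2^n}}+\frac{m/n+1}{2^n},\]
as claimed. The only even mildly substantive step is the strong\-/purity\-/testing verification for the \cite{BCGST02} construction; everything else is a direct instantiation of \thmref{thm:auth} and \lemref{lem:ete}, and presents no real obstacle.
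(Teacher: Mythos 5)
Your proposal is correct and follows essentially the same route as the paper: the paper likewise treats \corref{cor:auth.1} as a direct instantiation of the encrypt\-/then\-/encode construction statement from \secref{sec:auth.optimality} (itself a consequence of \thmref{thm:auth} and \lemref{lem:ete}) with $\nu = n$ and $\eps = (2m/n+2)/2^n$, and it likewise relegates both the verification that the \cite{BCGST02} code is \emph{strongly} purity testing and the removal of one key (trading $\log(2^n+1)$ for $n$ at the cost of a slightly larger $\eps$) to a footnote asserting these can be easily checked. Your added care in flagging that the strong\-/purity\-/testing property must actually be verified is appropriate, but it does not constitute a different approach.
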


The second construction we give is based on an explicit purity testing
code by Chau~\cite{Cha05} \--- though he does not name it this way.
Chau~\cite{Cha05} finds a set of unitaries $\cU = \{U_k\}$ in dimension $d$ such
that, if $k$ is chosen uniformly at random, any non\-/identity Pauli
is mapped to every non\-/identity Pauli with equal frequency, i.e.,
$\forall P_j,P_\ell$ with $P_j \neq I$ and $P_\ell \neq I$,
\[ \left| \left\{ U_k \in \cU : U_k P_j \hconj{U}_k = e^{i
      \theta_{j,k,\ell}} P_\ell \right\} \right| = \frac{|\cU|}{d^2-1}
\,,\] where $e^{i \theta_{j,k,\ell}}$ is some global phase.

We prove in \appendixref{app:design}, \lemref{lem:purity}, that this
is a strong purity testing code with $\eps = 2^{-n}$ for
$d = 2^{m + n}$. It also has
$|\cU| = 2^{m+n} \left(2^{2m+2n} - 1\right)$, hence
$\nu = m+n+\log \left(2^{2m+2n} -1 \right) \leq 3m+3n$. Note that when
composed with Paulis as in the encode\-/then\-/encrypt protocol,
$\{P_\ell U_k\}_{k,\ell}$ is a unitary
$2$\-/design~\cite{Dan05,DCEL09}. It follows that any (approximate)
unitary $t$\-/design is a good quantum authentication scheme
(see \appendixref{app:design} for a formal proof).

\begin{cor}
\label{cor:auth.2}
The encrypt\-/then\-/encode protocol with the purity testing code of
\cite{Cha05} requires an initial key of length $5m+4n$. It recycles
all bits if the message is accepted, and $3m+3n$ bits if the message
is rejected. The error is $\eps^{\qauth} = 2^{-n/2} + 2^{-n-1}$.
\end{cor}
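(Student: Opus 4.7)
The plan is to obtain this corollary by instantiating the abstract parameters of \thmref{thm:auth} (equivalently, its encrypt\-/then\-/encode restatement in \secref{sec:auth.optimality}) with the concrete strong purity testing code coming from Chau's construction \cite{Cha05}. So the proof is essentially bookkeeping on top of two already\-/stated results: the security theorem, and the promised \lemref{lem:purity} in \appendixref{app:design} that elevates Chau's uniform\-/permutation property to the strong purity testing property.

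First I would recall Chau's unitaries $\cU = \{U_k\}$ acting in dimension $d = 2^{m+n}$, with the property that for every pair of non\-/identity Paulis $P_j,P_\ell$ the map $k \mapsto U_k P_j \hconj{U}_k$ hits $P_\ell$ exactly $|\cU|/(d^2-1)$ times (up to a phase). Invoking \lemref{lem:purity}, this property implies that $\{U_k\}$ is a strong purity testing code for encoding $m$ qubits in $m+n$ qubits: the fraction of keys under which a fixed non\-/identity Pauli error yields syndrome zero equals $(2^{2m+n}-1)/(2^{2m+2n}-1) \le 2^{-n}$, so $\eps = 2^{-n}$. The key size is $\nu = \log|\cU| = m + n + \log(2^{2m+2n} - 1) \le 3m + 3n$.

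Next I would apply \thmref{thm:auth}, transferred to the encrypt\-/then\-/encode variant by \lemref{lem:ete} (as done at the start of \secref{sec:auth.optimality}), so that the initial key is of length $\nu + 2m + n \le 5m + 4n$, all of which is recycled upon acceptance, and $\nu \le 3m + 3n$ bits of which are recycled upon rejection. Substituting $\eps = 2^{-n}$ into the error formula $\eps^{\qauth} = \sqrt{\eps} + \eps/2$ yields exactly $\eps^{\qauth} = 2^{-n/2} + 2^{-n-1}$.

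There is no real obstacle here \--- the step that carries the analytic content is the conversion of Chau's symmetry property into the strong purity testing bound, and that has been factored out into \lemref{lem:purity}. Everything else is a direct substitution, paralleling \corref{cor:auth.1}, where the same template was applied to the Barnum et al.\ construction.
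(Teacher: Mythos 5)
Your proposal is correct and follows exactly the route the paper takes: the surrounding text of \secref{sec:auth.explicit} establishes $\eps = 2^{-n}$ via \lemref{lem:purity} and $\nu = m+n+\log(2^{2m+2n}-1) \leq 3m+3n$, and the corollary is then the same direct substitution into the encrypt\-/then\-/encode version of \thmref{thm:auth} (key length $\nu+2m+n$, error $\sqrt{\eps}+\eps/2$) that you perform. Nothing is missing.
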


\section{Complete construction}
\label{sec:construction}

We proved in \secref{sec:auth} that the quantum authentication
protocols from Figures~\ref{fig:protocol.1} and \ref{fig:protocol.2}
construct a secure channel and secret key filtered resource
$(\aS^m \| \bar{\aK}^{\nu_\rej,\nu_\acc})_\flat$ from a shared secret
key $\aK^{\mu}$, a noiseless insecure quantum channel $\aC_\square$
and a backwards authentic classical channel $\aA_\lozenge$, namely,
\[
\aC_\square \| \aA_\lozenge \| \aK^{\mu}
\xrightarrow{\pi^\qauth_{AB},\eps^{\qauth}} (\aS^m \|
\bar{\aK}^{\nu_\rej,\nu_\acc})_\flat\,.
\]
In this section we show how to construct the required resources from
nothing but shared secret key and noisy channels, then put it all
together to get the exact bounds of the composed protocols.

We discuss in \secref{sec:construction.auth} how to obtain the
authentic channel: it can be constructed from a shared secret key and
an insecure channel using any (classical) MAC\-/type authentication
scheme~\cite{Por14}. Channels are however usually not noiseless. This
is solved by using error correction: an error correction code
constructs a noiseless channel given a noisy channel (with known
noise), and is presented in \secref{sec:construction.noiseless}. Were
we to put things together at this point, we would construct the
desired secure quantum channel from nothing but shared secret key and
insecure noisy channels. But even in the case where no adversary is
present, we would still consume some secret key, because we do not
recycle the key from the backwards authentic channel. So in
\secref{sec:construction.morekey} we construct shared secret key given
a secure channel \--- we simply share secret key using the channel.
Combining all these pieces together, we obtain our secure quantum
channel without any net consumption of key in the case where the
adversary does not tamper with the messages. The security of the
composed scheme follows immediately from the security of each
component and the composition theorem of AC~\cite{MR11}. Finally, in
\secref{sec:construction.noauth} we discuss a setting in which the
backwards authentic channel is not needed, thus allowing a more
efficient use of the quantum channel \--- since we do not need it to
distribute key, and can thus use all of its capacity to send messages.

\subsection{Authentic channel}
\label{sec:construction.auth}

We used classical authentication as an example in \secref{sec:cc}: an
authentication protocol $\pi^\cauth_{AB}$ constructs an authentic
channel $\aA_\lozenge$ given a (noiseless) insecure channel
$\aC_{\square}$ and a secret key $\aK^{\eta}$,
\begin{equation}
\label{eq:security.classicalauth}
\aC_\square \| \aK^{\eta} \xrightarrow{\pi^\cauth_{AB},\eps^{\cauth}} \aA_\lozenge\,.
\end{equation}
The corresponding real and ideal systems were illustrated in
Figures~\ref{fig:classicalauth.real} and
\ref{fig:classicalauth.ideal}, respectively. Appending the MAC
$h_k(x)$ to the message $x$ is sufficient to construct the authentic
channel with error $\eps$ if the family of functions $\{h_k\}_k$ is
$\eps$\=/almost strongly $2$\-/universal~\cite{Por14}. In our case,
Bob only needs to send a $1$-bit message to Alice. If the key $k$ has
length $\eta$, a $2^{-\eta/2}$\=/almost strongly $2$\-/universal
family of functions for $1$-bit messages is given by $h_k(x) = k_x$,
where $k_0$ are the first $\eta/2$ bits of $k$ and $k_1$ are the last
$\eta/2$ bits.

\begin{lem}
\label{lem:classicalauth}
The authentication scheme described above satisfies
\eqnref{eq:security.classicalauth} with $\eps^{\cauth} = 2^{-\eta/2}$.
\end{lem}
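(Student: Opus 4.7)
The plan is to reduce the claim to the general MAC result cited from \cite{Por14}, which states that appending a tag $h_k(x)$ drawn from an $\eps$\=/almost strongly $2$\=/universal family constructs the authentic channel $\aA_\lozenge$ from $\aC_\square \| \aK^\eta$ with error $\eps$. The real/ideal systems and simulator of Figures~\ref{fig:classicalauth.real} and \ref{fig:classicalauth.ideal} already match our scheme verbatim (the converter sends $(x, h_k(x))$ and Bob checks $y' \stackrel{?}{=} h_k(x')$), so no modification of the reduction is needed. It thus suffices to verify that the specific family $\{h_k\}_{k \in \{0,1\}^\eta}$ defined by $h_k(x) = k_x$ is $2^{-\eta/2}$\=/almost strongly $2$\=/universal as a family from $X = \{0,1\}$ to $Y = \{0,1\}^{\eta/2}$.

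Next I would recall the two conditions defining an $\eps$\=/ASU$_2$ family: for every $x \in X$ and $y \in Y$, $\Pr_k[h_k(x) = y] = 1/|Y|$; and for every distinct $x, x' \in X$ and every $y, y' \in Y$, $\Pr_k[h_k(x') = y' \mid h_k(x) = y] \leq \eps$. In our setup, a uniform $k \in \{0,1\}^\eta$ decomposes as the concatenation of two halves $k_0, k_1 \in \{0,1\}^{\eta/2}$ that are jointly uniform and hence independent. Therefore $h_k(0) = k_0$ and $h_k(1) = k_1$ are independent uniform random variables on $Y$, which immediately gives uniformity and, for the only non\-/trivial pair of distinct inputs, $\Pr[h_k(1) = y' \mid h_k(0) = y] = \Pr[h_k(1) = y'] = 2^{-\eta/2}$. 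This yields $\eps = 2^{-\eta/2}$, so the family is in fact strongly (not merely almost strongly) $2$\=/universal with this parameter.

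With the ASU$_2$ property established, plugging $\eps = 2^{-\eta/2}$ into the cited construction result of \cite{Por14} yields $\eps^\cauth = 2^{-\eta/2}$ as claimed. There is no real obstacle in this proof: the only subtlety worth flagging is that since $|X| = 2$, one needs to check only a single pair $(x, x') = (0,1)$ (the swap being symmetric), which is why such a simple split\-/key construction suffices; for larger message spaces one would need a genuine hash family such as a polynomial evaluation code, but here independence of the two halves of the key is all that is required.
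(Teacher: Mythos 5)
Your proposal is correct and follows essentially the same route as the paper: the paper's proof is a one\-/line citation of the general MAC construction result (Lemma~9 of \cite{Por14}), with the claim that $h_k(x)=k_x$ is a $2^{-\eta/2}$\=/almost strongly $2$\=/universal family stated in the text immediately preceding the lemma rather than verified inside the proof. You simply spell out that verification (independence of the two key halves giving the conditional guessing probability $2^{-\eta/2}$), which is a harmless and correct elaboration of the same argument.
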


\begin{proof} Follows from \cite[Lemma 9]{Por14}. \end{proof}

\subsection{Noiseless channel}
\label{sec:construction.noiseless}

Both the classical authentication protocol discussed in
\secsref{sec:cc} and \ref{sec:construction.auth}, as well as the
quantum authentication protocol analyzed in \secref{sec:auth} produce
an error message as soon as there is any disturbance on the
channel. Realistic channels are naturally noisy, so for such protocols
to even make sense, one needs an extra layer of error correction that
is designed to correct the specific noise on the channel. Here, we
formalize this as a constructive statement. Let $\pi^\ecc_A$ encode a
message with an error correcting code (ECC) given by the completely
positive, trace\-/preserving (CPTP) map $\cE$, and $\pi^\ecc_B$ decode
the message with the CPTP map $\cD$, as illustrated in
\figref{fig:noiseless.honest.real}. And let $\eps^\ecc$ be the error
of the ECC for noise given by a CPTP map $\cF$, i.e.,
\begin{equation} \label{eq:ecc.noise} \frac{1}{2} \dianorm{\cD \circ
    \cF \circ \cE - \id} \leq \eps^\ecc \,.\end{equation} Then
$\pi^{\ecc}_{AB} = (\pi^\ecc_A,\pi^\ecc_B)$ constructs a noiseless
(insecure) channel $\bar{\aC}_\square$ from a noisy (insecure) channel
$\aC_\sharp$, if the filter $\sharp_E$ introduces the noise $\cF$,
i.e.,
\begin{equation}
\label{eq:security.ecc}
\bar{\aC}_\sharp \xrightarrow{\pi^\ecc_{AB},\eps^\ecc} \aC_\square\,.
\end{equation}

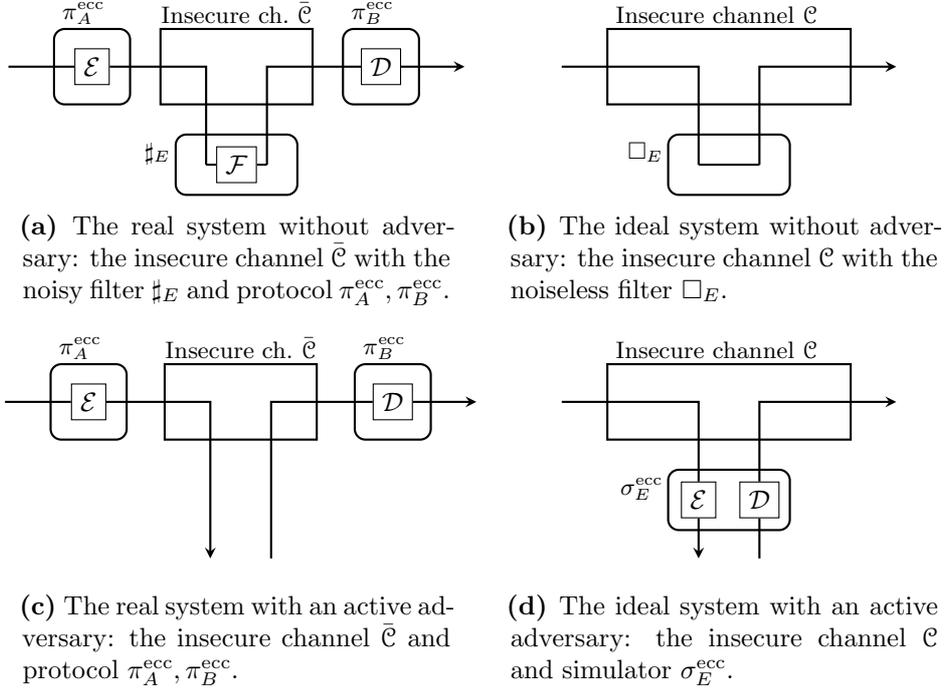
\begin{figure}[tb]
\subcaptionbox[Real noiseless system without
adversary]{\label{fig:noiseless.honest.real}The real system without
  adversary: the insecure channel $\bar{\aC}$ with the noisy filter
  $\sharp_E$ and protocol $\pi^\ecc_A,\pi^\ecc_B$.}[.5\textwidth][c]{
\begin{tikzpicture}[
resource/.style={draw,thick,minimum width=2cm,minimum height=1cm},
filter/.style={draw,thick,minimum width=1.6cm,minimum height=.8cm,rounded corners},
protocol/.style={draw,thick,minimum width=1cm,minimum height=1cm,rounded corners},
sArrow/.style={->,>=stealth,thick},
sLine/.style={-,thick}]

\small

\def\t{3} 
\def\a{1.9} 
\def\v{1.3} 
\def\s{.4}

\node[resource] (ch) at (0,0) {};
\node[yshift=-1.5,above] at (ch.north) {\footnotesize
  Insecure ch.\ $\bar{\aC}$};
\node[filter] (fil) at (0,-\v) {};
\node[draw] (noise) at (0,-\v) {$\cF$};
\node[xshift=2,below left] at (fil.north west) {\footnotesize $\sharp_E$};
\node[protocol] (protA) at (-\a,0) {};
\node[draw] (piA) at (-\a,0) {$\cE$};
\node[yshift=-2,above right] at (protA.north west) {\footnotesize $\pi^\ecc_A$};
\node[protocol] (protB) at (\a,0) {};
\node[draw] (piB) at (\a,0) {$\cD$};
\node[yshift=-2,above right] at (protB.north west) {\footnotesize $\pi^\ecc_B$};
\node (alice) at (-\t,0) {};
\node (bob) at (\t,0) {};

\draw[sLine] (alice.center) to (piA);
\draw[sLine] (piA) to (-\s,0) to (-\s,-\v);
\draw[sLine] (-\s,-\v) to (noise);
\draw[sLine] (noise) to (\s,-\v);
\draw[sLine] (\s,-\v) to (\s,0) to (piB);
\draw[sArrow] (piB) to (bob.center);

\end{tikzpicture}} \subcaptionbox[Ideal noiseless system without
adversary]{\label{fig:noiseless.honest.ideal}The ideal system without
  adversary: the insecure channel $\aC$ with the noiseless filter
  $\square_E$.}[.5\textwidth][c]{
\begin{tikzpicture}[
resource/.style={draw,thick,minimum width=3.2cm,minimum height=1cm},
filter/.style={draw,thick,minimum width=1.6cm,minimum height=.8cm,rounded corners},
sArrow/.style={->,>=stealth,thick},
sLine/.style={-,thick}]

\small

\def\t{2.2} 
\def\v{1.3} 
\def\s{.4}

\node[resource] (ch) at (0,0) {};
\node[yshift=-1.5,above right] at (ch.north west) {\footnotesize
  Insecure channel $\aC$};
\node[filter] (fil) at (0,-\v) {};
\node[xshift=2,below left] at (fil.north west) {\footnotesize $\square_E$};
\node (alice) at (-\t,0) {};
\node (bob) at (\t,0) {};

\draw[sLine] (alice.center) to (-\s,0) to (-\s,-\v);
\draw[sLine] (-\s,-\v) to (\s,-\v);
\draw[sArrow] (\s,-\v) to (\s,0) to (bob.center);

\end{tikzpicture}}

\vspace{6pt}

\subcaptionbox[Real noiseless system with
adversary]{\label{fig:noiseless.dishonest.real}The real system with an
  active adversary: the insecure channel $\bar{\aC}$ and protocol
  $\pi^\ecc_A,\pi^\ecc_B$.}[.5\textwidth][c]{
\begin{tikzpicture}[
resource/.style={draw,thick,minimum width=2cm,minimum height=1cm},
filter/.style={draw,thick,minimum width=1.6cm,minimum height=.8cm,rounded corners},
protocol/.style={draw,thick,minimum width=1cm,minimum height=1cm,rounded corners},
sArrow/.style={->,>=stealth,thick},
sLine/.style={-,thick}]

\small

\def\t{3.1} 
\def\a{2} 
\def\v{1.4} 
\def\s{.4}
\def\z{2.2}

\node[resource] (ch) at (0,0) {};
\node[yshift=-1.5,above] at (ch.north) {\footnotesize
  Insecure ch.\ $\bar{\aC}$};
\node[protocol] (protA) at (-\a,0) {};
\node[draw] (piA) at (-\a,0) {$\cE$};
\node[yshift=-2,above right] at (protA.north west) {\footnotesize $\pi^\ecc_A$};
\node[protocol] (protB) at (\a,0) {};
\node[draw] (piB) at (\a,0) {$\cD$};
\node[yshift=-2,above right] at (protB.north west) {\footnotesize $\pi^\ecc_B$};
\node (alice) at (-\t,0) {};
\node (bob) at (\t,0) {};
\node (eveL) at (-\s,-\z) {};
\node (eveR) at (\s,-\z) {};

\draw[sLine] (alice.center) to (piA);
\draw[sArrow] (piA) to (-\s,0) to (eveL);
\draw[sLine] (eveR) to (\s,0) to (piB);
\draw[sArrow] (piB) to (bob.center);

\end{tikzpicture}}
\subcaptionbox[Ideal noiseless system with
adversary]{\label{fig:noiseless.dishonest.ideal}The ideal system with
  an active adversary: the insecure channel $\aC$
  and simulator $\sigma^\ecc_E$.}[.5\textwidth][c]{
\begin{tikzpicture}[
resource/.style={draw,thick,minimum width=3.2cm,minimum height=1cm},
filter/.style={draw,thick,minimum width=1.6cm,minimum height=.8cm,rounded corners},
sArrow/.style={->,>=stealth,thick},
sLine/.style={-,thick}]

\small

\def\t{2.2} 
\def\v{1.3} 
\def\s{.4}
\def\z{2.2} 

\node[resource] (ch) at (0,0) {};
\node[yshift=-1.5,above right] at (ch.north west) {\footnotesize
  Insecure channel $\aC$};
\node[filter] (fil) at (0,-\v) {};
\node[xshift=2,below left] at (fil.north west) {\footnotesize $\sigma^\ecc_E$};
\node (alice) at (-\t,0) {};
\node (bob) at (\t,0) {};
\node (eveL) at (-\s,-\z) {};
\node (eveR) at (\s,-\z) {};

\node[draw] (simL) at (-\s,-\v) {$\cE$};
\node[draw] (simR) at (\s,-\v) {$\cD$};

\draw[sLine] (alice.center) to (-\s,0) to (simL);
\draw[sArrow] (simL) to (eveL);
\draw[sLine] (eveR) to (simR);
\draw[sArrow] (simR) to (\s,0) to (bob.center);

\end{tikzpicture}}

\caption[From noisy to noiseless]{\label{fig:noiseless}An error
  correcting code constructs a noiseless channel $\aC_\square$ from a
  noisy channel $\bar{\aC}_\sharp$.}
\end{figure}

Note that the resources $\aC$ and $\bar{\aC}$ are of different
dimension, since an error correcting code will map a quantum state to
a new one of larger dimension. In this work we generally we do not use
different notation for channels of different dimensions, since the
dimension is usually clear from the context, and we juste write $\aC$
for an insecure channel. We distinguish a noiseless from a noisy
channel by its filter, $\square_E$ and $\sharp_E$, respectively.

\begin{lem}
\label{lem:noiseless}
If the filter $\sharp_E$ introduces noise given by a CPTP map $\cF$,
and the encoding and decoding maps $\cE$ and $\cD$ satisfy
\eqnref{eq:ecc.noise}, then the protocol $\pi^{\ecc}_{AB}$ that uses
this ECC satisfies \eqnref{eq:security.ecc}.
\end{lem}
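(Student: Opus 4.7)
The plan is to verify the two conditions of \defref{def:security}, namely correctness and security, separately. For correctness, I would observe that the resource $\pi^\ecc_{AB} \bar{\aC} \sharp_E$, when probed at Alice's outer interface with some input, outputs at Bob's outer interface the state obtained by applying $\cD \circ \cF \circ \cE$ \--- because $\cE$ is applied by $\pi^\ecc_A$, the filter $\sharp_E$ applies $\cF$ by assumption, and $\pi^\ecc_B$ applies $\cD$. The ideal resource $\aC \square_E$ simply applies the identity channel $\id$. Any distinguisher interacting with these resources is a composition of some CPTP preparation and measurement with the channel between Alice's and Bob's outer interfaces, so its advantage is bounded by $\frac{1}{2}\dianorm{\cD \circ \cF \circ \cE - \id} \leq \eps^\ecc$ via \eqnref{eq:ecc.noise}.

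For the security condition, I would define the simulator $\sigma^\ecc_E$ exactly as in \figref{fig:noiseless.dishonest.ideal}: it applies $\cE$ to whatever the ideal channel leaks at Eve's interface and applies $\cD$ to whatever Eve inputs before forwarding it into Eve's interface of $\aC$. The key observation is then that the real system $\pi^\ecc_{AB}\bar{\aC}$ and the ideal system $\aC\sigma^\ecc_E$ are not merely statistically close but are in fact identical as resources: in both cases, a message input at Alice's outer interface passes through $\cE$ exactly once before reaching Eve's interface, and whatever Eve injects passes through $\cD$ exactly once before reaching Bob's outer interface. Hence the distinguishing advantage for the security condition is zero, which is well within the required bound $\eps^\ecc$.

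The argument is essentially a bookkeeping exercise, so I do not anticipate a real technical obstacle. The only mildly delicate point is the correctness step: one should confirm that the distinguisher\-/based pseudo\-/metric $d(\cdot,\cdot)$ on resources whose Alice\=/to\=/Bob behavior is given by a single CPTP map is indeed upper bounded by one half the diamond norm distance of those maps. This follows directly from the operational definition of the diamond norm as the maximum trace distance achievable by any joint input state with an ancilla, which is precisely the class of strategies available to a distinguisher here. Combining both parts then yields \eqnref{eq:security.ecc} with error $\eps^\ecc$.
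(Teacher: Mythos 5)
Your proof is correct and follows essentially the same route as the paper's (much terser) argument: correctness reduces to the diamond-norm bound \eqref{eq:ecc.noise} on $\cD\circ\cF\circ\cE$ versus $\id$, and security holds with distance $0$ because the real system and the ideal system with the simulator of \figref{fig:noiseless.dishonest.ideal} are literally the same pair of channels $\cE$ and $\cD$. Your added remark relating the distinguisher metric to the diamond norm for single-in-port systems is exactly the justification the paper defers to \appendixref{app:ac.instantiation}.
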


\begin{proof}
  We need to prove that
  $d(\pi^{\ecc}_{AB}\bar{\aC}\sharp_E, \aC\square_E) \leq \eps^\ecc$
  and $d(\pi^{\ecc}_{AB}\bar{\aC}, \aC\sigma^\ecc_E) \leq \eps^\ecc$
  for some simulator $\sigma^\ecc_E$ to satisfy the two conditions
  from \defref{def:security}. These systems are drawn in
  \figref{fig:noiseless}. One can easily check from the figure that
  the first condition holds because the ECC was designed to achieve
  exactly this.  $\pi^{\ecc}_{AB}\bar{\aC}$ and $\aC\sigma^\ecc_E$ are
  each a pair of channels, one performing the encoding operation
  $\cE$, the other the decoding operation $\cD$, so the second
  condition holds with distance $0$.
\end{proof}

Naturally, \lemref{lem:noiseless} only makes sense if there exists a
code than can correct the errors introduced by $\cF$, i.e., if there
exist maps $\cE,\cD$ that satisfy \eqnref{eq:ecc.noise}. In this work,
when we talk about a noisy channel resource, we always mean such a
channel that has non\-/zero capacity.

\subsection{Obtaining more key}
\label{sec:construction.morekey}

It is trivial to share secret key using a secure channel: Alice
generates a uniform string and sends it on the secure channel to
Bob. If the channel can transmit an $m+r$ qubit state and we use it to
share a $r$ bit key, then $m$ qubits can still be used to transmit a
message. Writing this using the AC resource theory notation, we get
\[ S^{m+r} \xrightarrow{} S^m \| \tilde{\aK}^{r}\,,\]
where $\tilde{\aK}^r$ generates an $r$-bit key, but allows Eve to
prevent Bob from receiving it.

Although this statement is correct, it is somewhat inconvenient, since
$\tilde{\aK}^{r}$ allows Eve to control whether Bob receives the key,
but Alice does not know whether he received it; whereas in the real
quantum authentication system analyzed, Alice actually learns whether
Bob receives her message or not, since he sends her a confirmation bit
on the backwards authentic channel. A stronger statement can be made
if instead of using the final secure channel $\aS^m$ constructed in
\corref{cor:auth} to share secret key, we use the intermediary secure
channel \& key resource $\aQ^{m,\nu_\rej,\nu_\acc}$ constructed in
\thmref{thm:auth} (and illustrated in \figref{fig:goal.alt}), in which
the secure channel $\aS^m$ and fresh key
$\bar{\aK}^{\nu_{\rej},\nu_{\acc}}$ are merged into one, with only one
input bit at Eve's interface controlling both the length of the key
and whether the message is delivered.

Using this secure channel to distribute key we get the system drawn in
\figref{fig:morekey}, where Alice's converter $\pi^\key_A$ only
outputs the $r$ bits of key she inserted on the channel if she gets
confirmation that Bob receives the message, i.e., if she obtains the
longer key.

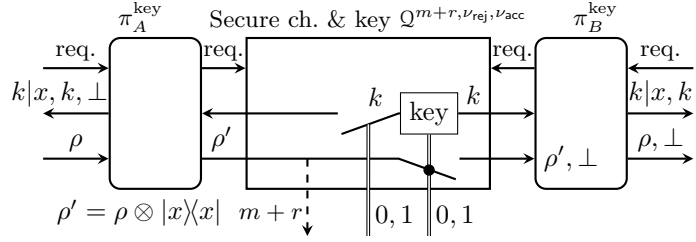
\begin{figure}[tb]
\begin{centering}

\begin{tikzpicture}[
resource/.style={draw,thick,minimum width=3.2cm,minimum height=2cm},
sArrow/.style={->,>=stealth,thick},
sLine/.style={-,thick},
sLeak/.style={->,>=stealth,thick,dashed},
protocol/.style={draw,thick,minimum width=1.2cm,minimum height=2cm,rounded corners}]

\small

\def\t{4.4} 
\def\a{2.8} 
\def\v{.6}
\def\w{.4}
\def\x{.8}
\def\z{1.75} 

\node[resource] (channel) at (0,0) {};
\node[yshift=-1.5,above] at (channel.north) {\footnotesize
  Secure ch.\ \& key $\aQ^{m+r,\nu_{\rej},\nu_{\acc}}$};
\node[protocol] (piA) at (-\a,0) {};
\node[yshift=-1.5,above right] at (piA.north west) {\footnotesize $\pi_A^\key$};
\node[below,xshift=-6] at (piA.south) {$\rho'=\rho\tensor\proj{x}$};

\node[protocol] (piB) at (\a,0) {};
\node[yshift=-1.5,above left] at (piB.north east) {\footnotesize $\pi_B^\key$};
\node (aliceUp) at (-\t,\v) {};
\node (aliceMiddle) at (-\t,0) {};
\node (aliceDown) at (-\t,-\v) {};
\node (bobUp) at (\t,\v) {};
\node (bobMiddle) at (\t,0) {};
\node (bobDown) at (\t,-\v) {};
\node (eveLeft) at (-\x,-\z) {};
\node (eveCenter) at (0,-\z) {};
\node (eveRight) at (\x,-\z) {};

\draw[sLine] (piA.east |- aliceDown) to node[auto,pos=.1,yshift=-1]
{$\rho'$} (\x-\w,-\v) to +(340:2*\w);
\draw[sArrow]  (\x+\w,-\v) to 
(piB.west |- bobDown);
\node[right] at (piB.west |- bobDown) {$\rho',\bot$};
\draw[sLeak] (eveLeft |- aliceDown) to node[auto,swap,pos=.7,xshift=2]
{\footnotesize $m+r$}
(eveLeft);

\draw[sArrow] (aliceDown) to node[auto,yshift=-1] {$\rho$} (piA.west |- aliceDown);
\draw[sArrow] (piB.east |- bobDown) to node[auto,pos=.5,yshift=-1] {$\rho,\bot$} (bobDown);

\node[draw] (key) at (\x,0) {key};
\draw[sArrow] (key) to node[auto,pos=.2] {$k$} (piB);
\draw[sLine] (key) to (\w,0) to node[auto,swap,pos=.1] {$k$}
node[pos=.53] (handle) {} +(200:2*\w);
\draw[sArrow] (-\w,0) to (piA);

\draw[sArrow] (piA) to node[auto,swap,pos=.75,yshift=-1] {$k|x,k,\bot$} (aliceMiddle);
\draw[sArrow] (piB) to node[auto,pos=.5,yshift=-1] {$k|x,k$} (bobMiddle);

\draw[double] (eveRight) to node[pos=.19,xshift=-1,auto,swap] {$0,1$}
node[circle,fill,pos=.65,inner sep=1.5] {}  (key);
\draw[double] (eveCenter) to node[pos=.17,xshift=-1,auto,swap]
{$0,1$} (handle.center);

\draw[sArrow] (piA.east |- aliceUp) to node[auto,yshift=-1] {\footnotesize req.} (channel.west |- aliceUp);
\draw[sArrow] (piB.west |- bobUp) to node[auto,swap,yshift=-1] {\footnotesize req.} (channel.east |- bobUp);
\draw[sArrow] (aliceUp) to node[auto,yshift=-1] {\footnotesize req.} (piA.west |- aliceUp);
\draw[sArrow] (bobUp) to node[auto,swap,yshift=-1] {\footnotesize req.} (piB.east |- bobUp);

\end{tikzpicture}

\end{centering}
\caption[Key vs. message length]{\label{fig:morekey}Distributing a
  secret key using the channel $\aQ^{m,\nu_{\rej},\nu_{\acc}}$
  constructed by a quantum authentication code. Alice generates a
  random string $x$ of length $r$ and appends it to an $m$ qubit
  message $\rho$. If Bob successfully authenticates the message, he
  appends $x$ to the recycled key. If Alice gets a confirmation from
  Bob that he received the correct message, she appends $x$ to the
  recycled key.}
\end{figure}

\begin{lem}
\label{lem:morekey}
Let $\pi^{\key}_{AB} = (\pi^{\key}_{A},\pi^{\key}_{A})$ denote the
protocol described above. It constructs a
$\aQ^{m,\nu_{\rej},\nu_{\acc}+r}_\natural$ secure channel from a
$\aQ^{m+r,\nu_{\rej},\nu_{\acc}}_\natural$ secure channel with no error,
i.e.,
\[ \aQ^{m+r,\nu_{\rej},\nu_{\acc}}_\natural \xrightarrow{\pi^{\key}_{AB},0}
\aQ^{m,\nu_{\rej},\nu_{\acc}+r}_\natural\,. \]
\end{lem}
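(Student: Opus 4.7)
The plan is to verify both conditions of \defref{def:security} directly, via an essentially trivial simulator. The intuition is that the padding $x$ is a uniform $r$-bit string generated locally by Alice, independent of everything else, and the inner channel $\aQ^{m+r,\nu_\rej,\nu_\acc}$ hides $x$ from Eve (leaking only the message length $m+r$). Thus concatenating a uniform $\nu_\acc$-bit recycled key $k$ with the independent uniform $r$-bit string $x$ yields a uniform $(\nu_\acc+r)$-bit key, matching exactly what $\aQ^{m,\nu_\rej,\nu_\acc+r}$ produces on its own.

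Concretely, I would take $\sigma^\key_E$ to be the converter that, upon receiving the length notification $m$ at its inner interface from $\aQ^{m,\nu_\rej,\nu_\acc+r}$, outputs $m+r$ at its outer interface, and forwards Eve's single control bit unchanged to the ideal resource. For correctness, I would check that with the filter $\natural_E$ attached on both sides, Bob receives $\rho \tensor \proj{x}$ in the real system, strips off $x$, outputs $\rho$, and appends $x$ to the uniform $\nu_\acc$-bit recycled key $k$; Alice does the same once she obtains the confirmation embedded in the returned longer key. In the ideal system with filter, both players directly receive $\rho$ and a uniform $(\nu_\acc+r)$-bit key. Independence of $x$ from $k$ makes the joint distributions at Alice's and Bob's interfaces identical, so the correctness error is $0$.

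For security, I would compare $\pi^\key_{AB}\aQ^{m+r,\nu_\rej,\nu_\acc}$ with $\aQ^{m,\nu_\rej,\nu_\acc+r}\sigma^\key_E$ branch by branch in Eve's bit. In the accept branch both produce $\rho$ at Bob's interface and a uniform $(\nu_\acc+r)$-bit string at the two key outputs (real: $k|x$ with the halves independent and uniform; ideal: a single fresh uniform string), while Eve observes only the length $m+r$ and her own bit. In the reject branch both produce a uniform $\nu_\rej$-bit key and $\bot$ at Bob's message output, and $x$ is never leaked. The same matching holds in the timing variants where Eve withholds her bit or provides it before Alice inputs her message. Hence the distinguishing advantage is $0$.

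The main (modest) obstacle I anticipate is not any analytic inequality but the bookkeeping of the ordering constraints of $\aQ$ at Alice's, Bob's, and Eve's interfaces: I must check event by event that the moments at which the length leak, Bob's output, and Alice's recycled key become available match exactly between the two systems. Since $\pi^\key_{AB}$ only performs synchronous local post-processing and triggers on exactly the same events as the underlying resource, this matching is mechanical and contributes no additional error.
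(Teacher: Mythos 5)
Your proposal is correct and takes essentially the same route as the paper, whose proof is a one-liner: a simulator that adjusts the leaked length (from the $m$ reported by the ideal resource to the $m+r$ seen in the real system) and forwards Eve's control inputs, with the key identity being that the independent uniform $x$ concatenated with the uniform $\nu_{\acc}$-bit key is a uniform $(\nu_{\acc}+r)$-bit key. The only nit is that $\aQ$ exposes \emph{two} control bits at Eve's interface (one deciding delivery and key length, one deciding whether Alice gets the key), both of which the simulator must forward, but your discussion of the timing variants already covers this bookkeeping.
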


\begin{proof}
This lemma trivially holds with a simulator that changes the length of
the message leaked at Eve's interface from $m+r$ to $m$ and forwards
the two control bits to the constructed $\aQ^{m,\nu_{\rej},\nu_{\acc}+r}$.
\end{proof}

As in \corref{cor:auth}, it follows that
\[ \aQ^{m+r,\nu_{\rej},\nu_{\acc}}_\natural
\xrightarrow{\pi^{\key}_{AB},0} (\aS^m \|
\bar{\aK}^{\nu_\rej,\nu_\acc+r})_\flat\,.\]

\subsection{Putting it together}
\label{sec:construction.together}

If we compose all the protocols described above, we obtain the system
depicted in \figref{fig:auth.complete}. Let $\pi_A$ denote the
composition of Alice's converters and $\pi_B$ denote the composition
of Bob's converters. We then immediately get that the
$\pi_{AB} = (\pi_A,\pi_B)$ constructs a secure channel and secret key
filtered resource, $(\aS^{m} \| \bar{\aK}^{\nu_{\rej},\nu_{\acc}+r})_\flat$, from secret keys
$\aK^{\mu}$ and $\aK^{\eta}$, and two noisy channels
$\overrightarrow{\aC_{\sharp}}$ and $\overleftarrow{\aC_{\sharp}}$
with error
$\eps = \max
\left\{\eps^{\qauth}+\eps^{\cauth},\overrightarrow{\eps}^\ecc +
    \overleftarrow{\eps}^\ecc\right\}$,
    where the arrows are used to distinguisher the forwards quantum
    channel and the backwards classical channel, i.e,
    \[ \aK^{\mu} \| \aK^{\eta} \| \overrightarrow{\aC_{\sharp}} \|
    \overleftarrow{\aC_{\sharp}} \xrightarrow{\pi_{AB},\eps} (\aS^{m}
    \| \bar{\aK}^{\nu_{\rej},\nu_{\acc}+r})_\flat\,.\]

\begin{figure}[tb]
\begin{centering}

\begin{tikzpicture}[
resource1/.style={draw,thick,minimum width=1.5cm,minimum height=1cm},
resource2/.style={draw,thick,minimum width=3cm,minimum height=1cm},
sArrow/.style={->,>=stealth,thick},
sLine/.style={-,thick},
protocol1/.style={draw,thick,minimum width=.6cm,minimum height=1cm,rounded corners},
protocol2/.style={draw,thick,minimum width=.6cm,minimum height=2.5cm,rounded corners},
protocol4/.style={draw,thick,minimum width=.6cm,minimum height=5.5cm,rounded corners},
pnode/.style={minimum width=.6cm,minimum height=1cm}]

\small

\def\v{1.5}
\def\w{.25}
\def\x{.75} 
\def\y{.75} 
\def\p{2.2} 
\def\c{1} 
\def\t{6.3}
\def\z{1.1} 

\node[resource2,green] (keyLong) at (0,3*\v) {};
\node[yshift=-1.5,above right] at (keyLong.north west) {\footnotesize
  Secret key $\aK^\mu$};
\node[resource2,red] (keyShort) at (0,2*\v) {};
\node[yshift=-1.5,above right] at (keyShort.north west) {\footnotesize
  Secret key $\aK^\eta$};
\node[resource2,violet] (chBack) at (0,\v) {};
\node[yshift=-1.5,above right] at (chBack.north west) {\footnotesize
  Insecure channel $\overleftarrow{\aC}$};
\node[resource1,blue] (chForward) at (-\x,0) {};
\node[yshift=-1.5,above right] at (chForward.north west) {\footnotesize
  Ins.\ ch.\ $\overrightarrow{\aC}$};

\node[protocol1,violet] (pECCbackA) at (-\p,\v) {};
\node[yshift=-1.5,above] at (pECCbackA.north) {\footnotesize $\overleftarrow{\pi}^\ecc_A$};
\node[protocol1,violet] (pECCbackB) at (\p,\v) {};
\node[yshift=-1.5,above] at (pECCbackB.north) {\footnotesize $\overleftarrow{\pi}^\ecc_B$};
\node[protocol1,blue] (pECCforwardA) at (-\p,0) {};
\node[yshift=-1.5,above] at (pECCforwardA.north) {\footnotesize $\overrightarrow{\pi}^\ecc_A$};
\node[protocol1,blue] (pECCforwardB) at (\p,0) {};
\node[yshift=-1.5,above] at (pECCforwardB.north) {\footnotesize $\overrightarrow{\pi}^\ecc_B$};
\node[protocol2,red] (pCauthA) at (-\p-\c,3*\v/2) {};
\node[yshift=-1.5,above] at (pCauthA.north) {\footnotesize $\pi^\cauth_A$};
\node[protocol2,red] (pCauthB) at (\p+\c,3*\v/2) {};
\node[yshift=-1.5,above] at (pCauthB.north) {\footnotesize $\pi^\cauth_B$};
\node[protocol4,green] (pQauthA) at (-\p-2*\c,3*\v/2) {};
\node[yshift=-1.5,above] at (pQauthA.north) {\footnotesize $\pi^\qauth_A$};
\node[protocol4,green] (pQauthB) at (\p+2*\c,3*\v/2) {};
\node[yshift=-1.5,above] at (pQauthB.north) {\footnotesize $\pi^\qauth_B$};
\node[protocol4,brown] (pKeyA) at (-\p-3*\c,3*\v/2) {};
\node[yshift=-1.5,above] (keyLabelA) at (pKeyA.north) {\footnotesize $\pi^\key_A$};
\node[protocol4,brown] (pKeyB) at (\p+3*\c,3*\v/2) {};
\node[yshift=-1.5,above] (keyLabelB) at (pKeyB.north) {\footnotesize $\pi^\key_B$};

\node[draw,thick,dashed,fit=(pECCforwardA)(keyLabelA)(pKeyA),inner sep=4] (compA) {};
\node[yshift=-1.5,above right] at (compA.north west) {\footnotesize
  Composed protocol $\pi_A$};
\node[draw,thick,dashed,fit=(pECCforwardB)(keyLabelB)(pKeyB),inner sep=4] (compB) {};
\node[yshift=-1.5,above right] at (compB.north west) {\footnotesize
  Composed protocol $\pi_B$};

\node (aliceUp) at (-\t,5*\v/2) {};
\node (aliceMiddle) at (-\t,3*\v/2) {};
\node (aliceDown) at (-\t,\v/2) {};
\node (bobUp) at (\t,5*\v/2) {};
\node (bobMiddle) at (\t,3*\v/2) {};
\node (bobDown) at (\t,\v/2) {};
\node (eveLeftL) at (-\x-\w,-\z) {};
\node (eveLeftR) at (-\x+\w,-\z) {};
\node (eveRightL) at (\y-\w,-\z) {};
\node (eveRightR) at (\y+\w,-\z) {};

\node[pnode] (aQ) at (-\p-2*\c,3*\v) {};
\node[pnode] (bQ) at (\p+2*\c,3*\v) {};
\node[pnode] (aK1) at (-\p+\c,3*\v) {};
\node[pnode] (bK1) at (\p-\c,3*\v) {};
\node[pnode] (aC) at (-\p-\c,2*\v) {};
\node[pnode] (bC) at (\p+\c,2*\v) {};
\node[pnode] (aK2) at (-\p+\c,2*\v) {};
\node[pnode] (bK2) at (\p-\c,2*\v) {};

\draw[sArrow] (aliceUp.center) to node[auto,pos=.4] {\footnotesize req.} (pKeyA.west |- aliceUp);
\draw[sArrow] (pKeyA.west |- aliceMiddle) to node[auto,swap,pos=.6] {$k$} (aliceMiddle.center);
\draw[sArrow] (aliceDown.center) to node[auto,pos=.4] {$\rho$} (pKeyA.west |- aliceDown);
\draw[sArrow] (bobUp.center) to node[auto,pos=.4,swap] {\footnotesize req.} (pKeyB.east |- aliceUp);
\draw[sArrow] (pKeyB.east |- aliceMiddle) to node[auto,pos=.6] {$k$} (bobMiddle.center);
\draw[sArrow] (pKeyB.east |- aliceDown) to node[auto,pos=.6] {$\rho$} (bobDown.center);

\draw[sArrow] (pKeyA.east |- aliceUp) to (pQauthA.west |- aliceUp);
\draw[sArrow] (pQauthA.west |- aliceMiddle) to (pKeyA.east |- aliceMiddle);
\draw[sArrow] (pKeyA.east |- aliceDown) to (pQauthA.west |- aliceDown);
\draw[sArrow] (pKeyB.west |- aliceUp) to (pQauthB.east |- aliceUp);
\draw[sArrow] (pQauthB.east |- aliceMiddle) to (pKeyB.west |- aliceMiddle);
\draw[sArrow] (pQauthB.east |- aliceDown) to (pKeyB.west |- aliceDown);

\draw[sArrow,bend left=15] (aK1) to (aQ);
\draw[sArrow,bend left=15] (aQ) to (aK1);
\draw[sArrow,bend left=15] (bK1) to (bQ);
\draw[sArrow,bend left=15] (bQ) to (bK1);
\draw[sArrow] (pCauthA) to (pQauthA);
\draw[sArrow] (pQauthB) to (pCauthB);
\draw[sArrow] (pQauthA.east |- chForward) to (pECCforwardA);
\draw[sArrow] (pECCforwardB) to (pQauthB.west |- chForward);

\draw[sArrow,bend left=20] (aK2) to (aC);
\draw[sArrow,bend left=20] (aC) to (aK2);
\draw[sArrow,bend left=20] (bK2) to (bC);
\draw[sArrow,bend left=20] (bC) to (bK2);
\draw[sArrow] (pECCbackA) to (pCauthA.east |- chBack) ;
\draw[sArrow] (pCauthB.west |- chBack) to (pECCbackB);

\draw[sArrow] (pECCforwardA) to (eveLeftL |- chForward) to (eveLeftL.center);
\draw[sArrow] (eveLeftR.center) to (eveLeftR |- chForward) to (pECCforwardB);
\draw[sArrow] (pECCbackB) to (eveRightR |- chBack) to (eveRightR.center);
\draw[sArrow] (eveRightL.center) to (eveRightL |- chBack) to (pECCbackA);

\end{tikzpicture}

\end{centering}
\caption[Complete system for quantum
authentication]{\label{fig:auth.complete} Composing all the protocols
  discussed in this work results in the system depicted here. In
  \textcolor{blue}{blue} we have drawn the systems constructing the
  noiseless insecure quantum channel, in \textcolor{violet}{violet}
  the systems constructing the backwards noiseless classical channel,
  in \textcolor{red}{red} (along with the \textcolor{violet}{violet}
  subprotocol) the systems constructing the backwards authentic
  channel, in \textcolor{green}{green} (along with the
  \textcolor{red}{red} and \textcolor{blue}{blue} subprotocols) the
  systems constructing the secure quantum channel that recycles the
  key $\aK^{\mu}$, and in \textcolor{brown}{brown} (along with the
  \textcolor{green}{green} subprotocol) the systems constructing the
  secure quantum channel that compensates for the lost key
  $\aK^{\eta}$.}
\end{figure}
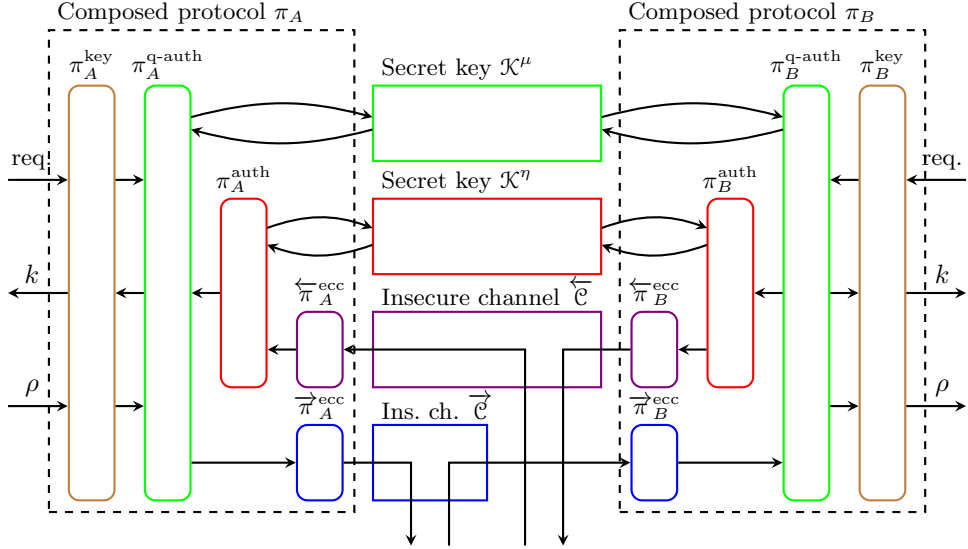

In the following we assume for simplicity that
$\overrightarrow{\eps}^\ecc + \overleftarrow{\eps}^\ecc \leq
\eps^{\qauth}+\eps^{\cauth}$
and take $\eps = \eps^{\qauth}+\eps^{\cauth}$. Plugging in
$r = \eta = n$ and the parameters from the two explicit quantum
authentication protocols from \secref{sec:auth.explicit}, we get the
following two corollaries. \corref{cor:construction.1} uses the
explicit construction proposed by Barnum et
al.~\cite{BCGST02}. \corref{cor:construction.2} uses the explicit
unitary $2$\-/design construction from Chau\cite{Cha05} (see
\secref{sec:auth.explicit} for details).

\begin{cor}
\label{cor:construction.1}
For any $m$ and $n$, there exist an explicit protocol that requires $2m+4n$ bits of
secret key, a forwards noisy insecure quantum channel and a backwards
noisy insecure classical channel, to construct a secure channel for an
$m$ qubit quantum message, which recycles all the key if the message
is accepted and $n$ bits if the message is rejected, and has error
\[\eps \leq 2^{-n/2} (1+\sqrt{2m/n+4}) + 2^{-n}(m/n+2)\,.\]
\end{cor}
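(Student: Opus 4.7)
My approach is to compose the four reductions developed in the preceding subsections: error correction (\lemref{lem:noiseless}) applied to both noisy channels, the classical MAC of \lemref{lem:classicalauth} on the backwards noiseless channel, the quantum authentication scheme instantiated with the Barnum et al.\ strong purity testing code (\corref{cor:auth.1}) on a slightly enlarged message, and the key\=/extension gadget of \lemref{lem:morekey} used to shrink the effective message back to $m$ qubits while turning the extra payload into recycled key. Concretely, Alice appends a freshly sampled $n$\=/bit classical string $x$ to her $m$\=/qubit message and runs the encrypt\=/then\=/encode protocol of \figref{fig:protocol.1} on the $(m+n)$\=/qubit result, while the backwards accept/reject bit travels over an authentic channel assembled from the ECC and the classical MAC using $\eta = n$ bits of key.

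For the key budget, I specialize \corref{cor:auth.1} to a message of length $m+n$: naively this requires $2(m+n)+2n$ bits, but because the last $n$ qubits encode the classical string $x$, \remref{rem:classicalmessage} lets me drop the phase\=/flip half of its one\=/time pad, saving $n$ bits. The quantum authentication step thus costs $\mu = 2m+3n$ bits and produces $\aQ^{m+n,n,2m+3n}_\natural$. Feeding this into \lemref{lem:morekey} with $r=n$ yields $\aQ^{m,n,2m+4n}_\natural$, which dominates $(\aS^m\|\bar{\aK}^{n,2m+4n})_\flat$ exactly as in the proof of \corref{cor:auth}. Together with the $\eta=n$ bits consumed by the classical MAC, the total initial key is $2m+4n$; on accept the recycled pool has the same size (the freshly shared $x$ compensating for the MAC key), while on reject only the $n$\=/bit code selector $k$ survives.

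For the error, the composition theorem of AC produces the sum $\eps^\qauth + \eps^\cauth$ once the ECC errors are absorbed using the assumption stated just before the corollary. By \lemref{lem:classicalauth} with $\eta = n$ one has $\eps^\cauth = 2^{-n/2}$, and substituting $m+n$ for $m$ in \corref{cor:auth.1} gives
\[
\eps^\qauth = 2^{-n/2}\sqrt{2m/n+4} + 2^{-n}(m/n+2),
\]
and adding these yields precisely the stated bound. The one subtlety I anticipate is the clean invocation of \remref{rem:classicalmessage}: the proof of \thmref{thm:auth} must be rerun with the phase flips on the $x$\=/register suppressed and with both the accept and reject branches verified to be unchanged, but since $x$ is classical and decoupled from any distinguisher\=/held purification, this reasoning mirrors the proof of \lemref{lem:ete} almost verbatim. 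Everything else is routine composition.
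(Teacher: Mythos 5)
Your proposal is correct and follows essentially the same route as the paper, whose proof is simply the citation of the composition theorem together with \remref{rem:classicalmessage}, \corref{cor:auth.1}, \lemref{lem:classicalauth}, \lemref{lem:noiseless}, and \lemref{lem:morekey}; your instantiation with message length $m+n$, $r=\eta=n$, and the $n$-bit saving from the classical payload reproduces the paper's key budget of $2m+4n$ and the stated error bound exactly.
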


\begin{proof}
  Follows from the composition theorem from AC~\cite{MR11} as well as
  \remref{rem:classicalmessage}, \corref{cor:auth.1},
  \lemref{lem:classicalauth}, \lemref{lem:noiseless}, and
  \lemref{lem:morekey}.
\end{proof}

\begin{cor}
\label{cor:construction.2}
For any $m$ and $n$, there exists an explicit protocol that requires
$5m+9n$ bits of secret key, a forwards noisy insecure quantum channel
and a backwards noisy insecure classical channel, to construct a secure
channel for an $m$ qubit quantum message, which recycles all the key
if the message is accepted and $3m+6n$ bits if the message is
rejected, and has error \[\eps \leq 2^{-n/2+1}+2^{-n-1}\,.\]
\end{cor}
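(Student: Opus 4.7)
The plan is to follow the same template as \corref{cor:construction.1}, simply swapping the Barnum et al.\ construction for Chau's unitary $2$\=/design construction and keeping track of how the parameters propagate through the composition. Specifically, I would assemble the four building blocks \--- the ECC construction of noiseless channels (\lemref{lem:noiseless}) on both the forwards and backwards links, the MAC\-/based construction of the backwards authentic channel (\lemref{lem:classicalauth}), the quantum authentication construction (\corref{cor:auth.2}), and the key\-/replenishment subprotocol (\lemref{lem:morekey}) \--- and invoke the composition theorem of AC~\cite{MR11} to conclude that their sequential composition constructs $(\aS^m \| \bar{\aK}^{\nu_\rej,\nu_\acc+r})_\flat$ from nothing but secret key and insecure noisy channels, with error bounded by the sum of the component errors.

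The main bookkeeping step is counting the key. Taking $r = n$ (the number of fresh key bits transmitted over the secure channel in order to compensate for the key consumed by the classical authentication subprotocol) and $\eta = n$ (the length of the classical authentication key, which gives $\eps^\cauth = 2^{-n/2}$), the quantum authentication scheme of \corref{cor:auth.2} must be instantiated on messages of size $m + r = m+n$. A naive count would give $5(m+n)+4n = 5m+9n$ bits, but here I would invoke \remref{rem:classicalmessage}: the $r$\=/bit string Alice feeds into the key\-/replenishment wrapper is classical, so the one-time pad on those positions needs only bit flips and no phase flips, saving $r = n$ bits of key in $\ell$. Recomputing, the purity testing code key has length $\nu = 3(m+n)+3n$, the one-time pad key has length $2m+n$, and the syndrome has length $n$, for a total of $5m+8n$ bits of quantum\-/authentication key. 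Adding the $\eta = n$ bits consumed by the classical authentication yields exactly $5m+9n$. The key recycled on a reject is $\nu = 3m+6n$, and on an accept it is all of it.

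For the error, by the AC composition theorem the overall error is bounded by $\eps^\qauth + \eps^\cauth$, assuming (as in the discussion preceding the corollary) that the two ECC errors satisfy $\overrightarrow{\eps}^\ecc + \overleftarrow{\eps}^\ecc \leq \eps^\qauth + \eps^\cauth$. Plugging in $\eps^\qauth = 2^{-n/2} + 2^{-n-1}$ from \corref{cor:auth.2} (with the security parameter $n$, unchanged since the purity testing code acts on the enlarged message of $m+n$ qubits with the same $n$) and $\eps^\cauth = 2^{-n/2}$ from \lemref{lem:classicalauth} gives $\eps \leq 2 \cdot 2^{-n/2} + 2^{-n-1} = 2^{-n/2+1} + 2^{-n-1}$, as claimed. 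The only subtle point, and the one I would double-check carefully, is the application of \remref{rem:classicalmessage} in the concrete protocol of \figref{fig:morekey}: one needs that the classical string $x$ inserted by $\pi^\key_A$ is indeed uncorrelated with the distinguisher's purification, which holds because $\pi^\key_A$ generates $x$ uniformly and locally.
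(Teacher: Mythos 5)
Your proposal is correct and follows exactly the route the paper takes: its proof of \corref{cor:construction.2} is a one-line invocation of the AC composition theorem together with \remref{rem:classicalmessage}, \corref{cor:auth.2}, \lemref{lem:classicalauth}, \lemref{lem:noiseless}, and \lemref{lem:morekey}, and your explicit bookkeeping (the $n$ bits saved via \remref{rem:classicalmessage}, the enlarged message of $m+n$ qubits giving $\nu = 3m+6n$, and the error sum $\eps^\qauth + \eps^\cauth$) correctly fills in the arithmetic the paper leaves implicit.
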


\begin{proof}
  Follows from the composition theorem from AC~\cite{MR11} as well as
  \remref{rem:classicalmessage}, \corref{cor:auth.2},
  \lemref{lem:classicalauth}, \lemref{lem:noiseless}, and
  \lemref{lem:morekey}.
\end{proof}

Note that both schemes have a loss of $2m+3n$ key bits if the message
is not successfully authenticated: $2m$ bits used to one-time pad the
quantum message of length $m$, $2n$ bits to one-time pad the two
classical strings of $n$ bits \--- the $n$ bits of key sent to replace
those consumed by the backwards authentic channel and the $n$ bits
used as syndrome \--- and finally the $n$ bits consumed by the
backwards authentic channel cannot be replaced, so they are lost as
well.

\subsection{Removing the authentic channel}
\label{sec:construction.noauth}

The backwards authentic channel is crucial in the construction of the
secure channel with key recycling analyzed in this section, because
without it, Alice would not know that her message arrived \--- let
alone whether it was accepted or not \--- and thus not be able to
recycle the key. One can however skip the authentic channel if the
players share a stronger resource, e.g., another secure quantum
channel that recycles key.

This is the case if the players construct secure quantum channels in
both directions and alternate between the two: first Alice sends a quantum
message to Bob, then Bob to Alice, then Alice to Bob, etc. Let them
share two sets of keys, one set is used for the forward communication,
the other is used for the backward communication. If at any point a
message is not successfully authenticated, this means that an
eavesdropper is disturbing the communication, and the players abort
and stop communicating. Thus, Bob only sends his next message to Alice
if he successfully received her message. So if Alice successfully
authenticates Bob's message, she knows that hers was received, and can
recycle all of her key to send the next message. We thus avoid any
explicit confirmation of reception, since sending the next message is
in itself the confirmation.

One can easily show that the composition of $n$ rounds of this
protocol, each round sending an $m$ qubit quantum message, constructs
a secure channel for $nm$ qubits, which we denote $\aR^{m,n}$. Unlike
the resource $\aS^{m}$ which gave the adversary a $1$ bit input to
decide if the message is delivered or not, $\aR^{m,n}$ provides the
adversary with a bit of input for every block of $m$ qubits \--- but
if one block is prevented from being delivered, none of the subsequent
messages are delivered either. The error of this construction is $n$
times the error of each round, $\eps = n\eps^{\qauth}$. This may
continue arbitrary long if no adversary introduces noise on the
channel, since the quantum authentication protocol recycles every bit
of key.

\section{Discussion and open questions}
\label{sec:conclusion}

The family of quantum authentication protocols of Barnum et
al.~\cite{BCGST02} as well as the subset analyzed in this work are
large classes, which include many protocols appearing independently in
the literature. The signed polynomial code \cite{BCGHS06,ABE10}, the
Clifford code \cite{ABE10,DNS12,BW16} (which is a unitary
$3$\-/design~\cite{Zhu15,Web15}) and the unitary $8$\-/design scheme from
\cite{GYZ16} are all instances which use a strong purity testing
code. Our results apply directly to the Clifford and unitary
$8$\-/design schemes \--- which have in the same error as the unitary
$2$\-/design scheme from \corref{cor:auth.2}. But the signed polynomial code
uses an ECC on qudits, not qubits, so our proof does not cover this
case, and would have to be adapted to do so.

The trap code~\cite{BGS13,BW16} is an example of a quantum
authentication scheme that uses a purity testing code that is not a
strong purity testing code, i.e., errors which do not modify the
message do not necessarily provoke an abort. For example, if the
adversary performs a simple bit flip in one position, this will
provoke an abort with probability $2/3$ in the variant from
\cite{BGS13} and with probability $1/3$ in the variant from
\cite{BW16}, but leaves the message unmodified if no abort occurs. If
the adversary learns whether Bob accepted the message or not, she will
learn whether the ECC used detects that specific bit flip or not, and
thus learn something about the key used to select the ECC. Hence, the
players cannot recycle the entire key, even in the case where the
message is accepted. The restriction to strong purity testing codes is
thus necessary to recycle every bit. It remains open how many bits of
key can be recycled with the trap code, but we conjecture that this
bit leaked due the decision to abort or not is the only part of the
key leaked, and the rest can be recycled.

Another quantum authentication scheme, Auth-QFT-Auth, has been
proposed in \cite{GYZ16}, where the authors prove that some of the key
can be recycled as well. We do not know if this scheme fits in the family
from~\cite{BCGST02} or not.

In the classical case, almost strongly $2$\-/universal hash
functions~\cite{WC81,Sti94} are used for authentication, and any new
family of such functions immediately yields a new MAC. Likewise, any
new purity testing code provides a new quantum authentication
scheme. However, it is unknown whether all quantum authentication
schemes can be modeled as a combination of a one-time pad and a purity
testing code, or whether there exist interesting schemes following a
different pattern.

We have proven that a loss of $2m+n$ bits of key is inevitable with
these schemes if the adversary tampers with the channel. In the case
of the unitary $2$\-/design scheme, which has the smallest error, this
is $2m + 2 \log 1/\eps + 2$ bits of key which are consumed. A loss of
$2m$ bits will always occur, since these are required to one-time pad
the message. It remains open whether there exist other schemes \---
which do not fit the one-time pad + purity testing code model \---
which recycle more key.

The initial preprint of this work suggested that one should also
investigate whether it is possible to find a prepare\-/and\-/measure
scheme to encrypt and authenticate a classical message in a quantum
state, so that all of the key may be recycled if it is successfully
authenticated. At the time of writing, a possible solution had already
been found by Fehr and Salvail~\cite{FS17}. Their protocol is however
not known to be composable, and it remains open to prove that it
achieves the desired result in such a setting.

\section*{Acknowledgments}
\pdfbookmark[1]{Acknowledgments}{sec:ack}

CP would like to thank Anne Broadbent, Fr\'ed\'eric Dupuis and Debbie
Leung for useful discussions.

CP is supported by the European Commission FP7 Project RAQUEL
(grant No.~323970), US Air Force Office of Scientific Research (AFOSR)
via grant~FA9550-16-1-0245, the Swiss National Science Foundation (via
the National Centre of Competence in Research `Quantum Science and
Technology') and the European Research Council -- ERC (grant
No.~258932).

\appendix
\appendixpage
\phantomsection
\label{app}
\pdfbookmark[1]{Appendices}{app}

In \appendixref{app:ac} we provide a longer introduction to AC framework
than already present in \secref{sec:cc}. In \appendixref{app:design}
we give an introduction to unitary $2$-designs and prove that they are
strong purity testing codes. \appendixref{app:lemmas} contains some
technical lemmas used in the body of this work. And
in \appendixref{app:norecycle} we provide a security proof for quantum
authentication without key recycling.

\section{Abstract cryptography}
\label{app:ac}

As already mentioned in \secref{sec:intro.contributions}, the AC
framework~\cite{MR11} models cryptography as a resource theory. The AC
framework does however not explicitly define these resources. It
follows a top\-/down paradigm and only specifies on each level of
abstraction the properties of objects that are absolutely essential
\--- the axioms these objects must satisfy. This simplifies the
framework by removing unnecessary and cumbersome information \---
e.g., a model of computation \--- and results in more general
framework that is not hard\-/coded with a specific communication or
scheduling model. In this section we give a brief introduction to
AC. We illustrate this with an example taken from \cite{Por14}, namely
that appending a message authentication code (MAC) to a classical
message is sufficient to construct a classical authentic channel given
an insecure channel and a shared secret key. We refer the interested
reader to the original AC paper~\cite{MR11} for a detailed treatment
of the abstract layer and to \cite{Mau12,PR14} for more gentle
introductions to AC.

\subsection{Resources}
\label{app:ac.objects}

An $\cI$\=/\emph{resource} is an (abstract) system with interfaces
specified by a set $\cI$, e.g., $\cI = \{A,B,E\}$. Each interface
$i \in \cI$ is accessible to a user $i$. The objects depicted in
\figref{fig:channel.dishonest} are examples of resources. The insecure
channel in \figref{fig:channel.dishonest.insecure} allows Alice to
input a message at her interface on the left and allows Bob to receive
a message at his interface on the right. Eve can intercept Alice's
message and insert a message of her choosing at her
interface. Mathematically, this can be captured by two identity
channels, one from Alice to Eve and one from Eve to Bob, which may be
used independently. The authentic channel resource depicted in
\figref{fig:channel.dishonest.authentic} also allows Alice to send a
message and Bob to receive a message, but Eve's interface is more
limited than for the insecure channel: she can only decide if Bob
receives the message, an error symbol or nothing at all \--- by
inputing $0$, $1$, or nothing, respectively, at her interface \--- but
not tamper with the message being sent. This can be modeled
mathematically by two maps. The first goes from Alice to Eve and some
internal memory of the system: upon receiving Alice's message, it
stores a copy in the memory sends the other to Eve. The second map
goes from Eve and the internal memory to Bob: Bob either receives a
copy of the memory or an error message depending on Eve's input. Note
that if the maps are activated in the other order, Bob will receive an
error regardless of Eve's input value. The key resource drawn in
\figref{fig:resource.key} provides each player with a secret key when
requested. This can be modeled by two maps, which each take the input
\texttt{req.} from a player and return the key value (which is chosen
uniformly at random). In \appendixref{app:ac.instantiation} we discuss
how to model resources in general as mathematical objects.

Additionally, resources are equipped with a parallel composition
operator, $\|$, that maps two resources to another resource. This is
to be understood as both resources being merged into one, that can be
accessed in any arbitrary order. For example, if players share a
secret key resource $\aK$ and a channel resource $\aC$, they have the
resource $\aK \| \aC$ available, as depicted in
\figref{fig:resource.composition}. Given access to $\aK \| \aC$, the
players could, e.g., get a key from $\aK$ and use it to encode a
message that is sent on $\aC$

\emph{Converters} capture operations that a player might perform
locally at her interface. These are (abstract) systems with two
interfaces, an \emph{inside} interface and an \emph{outside}
interface. The inside interface connects to an interface of a
resource, and the outside interface becomes the new interface of the
resource resulting from the connection of this converter and
resource. For example, in the setting described a paragraph higher,
Alice might decide to append a MAC to her message. This is modeled as
a converter $\pi^\cauth_A$ that obtains the message $x$ at the outside
interface, obtains a key at the inside interface from a key resource
$\aK$ and sends $(x,h_k(x))$ on the insecure channel $\aC$, where
$h_k$ is taken from a family of strongly $2$\-/universal hash
functions~\cite{WC81,Sti94}. We illustrate this in
\figref{fig:classicalauth.real}. Converters are always drawn with
rounded corners.

If a converter $\alpha_i$ is connected to the $i$ interface of a
resource $\aR$, we write $\alpha_i\aR$ or $\aR\alpha_i$ for the new
resource obtained by connecting the two \--- in this work we adopt the
convention of writing converters at the $A$ and $B$ interfaces on the
left and converters at the $E$ interface on the right, though there is
no mathematical difference between $\alpha_i\aR$ and $\aR\alpha_i$.
Serial and parallel composition of converters is defined as follows:
\[(\alpha _i\beta _i) \aR \coloneqq \alpha_i (\beta_i \aR) \qquad
\text{and} \qquad (\alpha_i \| \beta_i) (\aR\|\aS) \coloneqq (\alpha_i
\aR) \| (\beta_i \aS) \,. \]
By definition, converters at different interfaces commute, i.e.,
$\alpha_i \beta_j \aR = \beta_j \alpha_i \aR$ if $i \neq j$. This
allows us to draw systems as in \figref{fig:classicalauth.real}
without having to specify an order in which $\pi^\cauth_A$ and
$\pi^\cauth_B$ are connected to the resource $\aK \| \aC$.

A protocol is then defined by a set of converters, one for every
honest player. Another type of converter that we need is a
\emph{filter}. The resources illustrated in
\figref{fig:channel.dishonest} depict a setting with an adversary that
has some control over these resources. For a cryptographic protocol to
be useful it is not sufficient to provide guarantees on what happens
when an adversary is present, one also has to provide a guarantee on
what happens when no adversary is present, e.g., if no adversary
tampers with the message on the insecure channel, then Bob will
receive the message that Alice sent. We model this setting by covering
the adversarial interface with a filter that emulates an honest
behavior.\footnote{More generally, a filter covers the inputs and
  outputs that are only accessible to a dishonest player, but provides
  access to those that should be available to an honest player. The
  dishonest player can remove the filter to have more control over the
  resource. We however do not need such a feature in this work, since
  we only consider resources with $E$\=/interfaces that are blank if
  the adversary is not active.} In \figref{fig:channel.honest} we draw
an insecure and an authentic channel with filters $\sharp_E$ and
$\lozenge_E$ that transmit the message to Bob. In the case of the
insecure channel, one may want to model an honest noisy channel when
no adversary is present. This is done by having the filter $\sharp_E$
add some noise to the message. A dishonest player removes this and has
access to a noiseless channel as in
\figref{fig:channel.dishonest.insecure}.

We use the term \emph{filtered resource} to refer to a pair of a
resource $\aR$ and a filter $\sharp_E$, and often write
$\aR_\sharp = (\aR,\sharp_E)$. Such an object can be thought of as
having two modes: it is characterized by the resource $\aR\sharp_E$
when no adversary is present and by the resource $\aR$ when the
adversary is present. Parallel composition of filtered resources
naturally follows from parallel composition of resources and
converters:
\[ \aR_\sharp \| \aS_\lozenge \coloneqq (\aR \| \aS)_{\sharp \| \lozenge}\,.\]

The final object that is required by the AC framework to define the
notion of construction and prove that it is composable, is a
(pseudo\-/)metric defined on the space of resources that measures how
close two resources are. In the following, we use a distinguisher
based metric, i.e., the maximum advantage a distinguisher has in
guessing whether it is interacting with resource $\aR$ or $\aS$, which
we write $d(\aR,\aS)$. This is discussed further in
\appendixref{app:ac.instantiation}.

\subsection{Security definition}
\label{app:ac.security}

We are now ready to define the security of a cryptographic
protocol. We do so in the three player setting, for honest Alice and
Bob, and dishonest Eve. Thus, in the following, all resources have
three interfaces, denoted $A$, $B$ and $E$, and a protocol is then
given by a pair of converters $(\pi_A,\pi_B)$ for the honest
players. We refer to \cite{MR11} for the general case, when arbitrary
players can be dishonest. For convenience, we reproduce here
\defref{def:security}.

\begin{deff}[Cryptographic security~\cite{MR11}] \label{def:security.app}
  Let $\pi_{AB} = (\pi_A,\pi_B)$ be a protocol and $\aR_\sharp =
  (\aR,\sharp)$ and $\aS_\lozenge = (\aS,\lozenge)$ denote two
  filtered resources.  We say that \emph{$\pi_{AB}$ constructs
    $\aS_{\lozenge}$ from $\aR_\sharp$ within $\eps$}, which we write
  $\aR_\sharp \xrightarrow{\pi,\eps} \aS_\lozenge$, if the two
  following conditions hold:
\begin{enumerate}[label=\roman*), ref=\roman*]
\item \label{eq:def.cor.app} We have
  \[d(\pi_{AB}\aR\sharp_E,\aS\lozenge_E) \leq \eps \,.\]
\item \label{eq:def.sec.app} There exists a converter\footnote{For a
    protocol with information\-/theoretic security to be composable
    with a protocol that has computational security, one additionally
    requires the simulator to be efficient.} $\sigma_E$ \--- which we
  call simulator \--- such that
  \[  d(\pi_{AB}\aR,\cS\sigma_E) \leq \eps \,.\]
\end{enumerate}
If it is clear from the context what filtered resources $\aR_\sharp$
and $\aS_\lozenge$ are meant, we simply say that $\pi_{AB}$ is
$\eps$\=/secure.
\end{deff}

The first of these two conditions measures how close the constructed
resource is to the ideal resource in the case where no malicious
player is intervening, which is often called \emph{correctness} in the
literature. The second condition captures \emph{security} in the
presence of an adversary. For example, to prove that the MAC protocol
$\pi^\cauth_{AB}$ constructs an authentic channel $\aA_{\lozenge}$
from a (noiseless) insecure channel $\aC_\square$ and a secret key
$\aK$ within $\eps$, we need to prove that the real system (with
filters) $\pi^\cauth_{AB}(\aK\|\aC\square_E)$ cannot be distinguished
from the ideal system $\aA\lozenge_E$ with advantage greater than
$\eps$, and we need to find a converter $\sigma^\cauth_E$ such that
the real system (without filters) $\pi^\cauth_{AB}(\aK\|\aC)$ cannot
be distinguished from the ideal system $\aA\sigma^\cauth_E$ with
advantage greater than $\eps$. For the MAC protocol, correctness is
satisfied with error $0$ and the simulator $\sigma^\cauth_E$ drawn in
\figref{fig:classicalauth.ideal} satisfies the second requirement if
the family of hash functions $\{h_k\}_k$ is $\eps$\=/almost strongly
$2$\-/universal~\cite{Por14}.

It follows from the composition theorem of the AC
framework~\cite{MR11} that if two protocols $\pi$ and $\pi'$ are
$\eps$- and $\eps'$\=/secure, the composition of the two is
$(\eps+\eps')$\=/secure. More precisely, let protocols $\pi$ and
$\pi'$ construct $\aS_\lozenge$ from $\aR_\sharp$ and $\aT_\square$
from $\aS_\lozenge$ within $\eps$ and $\eps'$, respectively, i.e.,
\[\aR_\sharp \xrightarrow{\pi,\eps}\aS_\lozenge \qquad \text{and} \qquad \aS_\lozenge
\xrightarrow{\pi',\eps'}\aT_\square \,.\]
It then follows from the triangle inequality of the
metric that $\pi'\pi$ constructs $\aT_\square$ from $\aR_\sharp$
within $\eps+\eps'$,
\[\aR_\sharp \xrightarrow{\pi'\pi,\eps+\eps'}\aT_\square \,.\]
A similarly statement holds for parallel composition. Let $\pi$ and
$\pi'$ construct $\aS_\lozenge$ and $\aS'_\square$ from $\aR_\sharp$
and $\aR'_\flat$ within $\eps$ and $\eps'$, respectively, i.e.,
\[\aR_\sharp \xrightarrow{\pi,\eps}\aS_\lozenge \qquad \text{and}
\qquad \aR'_\flat \xrightarrow{\pi',\eps'} \aS'_\square \,.\]
If these resources and protocols are composed in parallel, we find
that $\pi \| \pi'$ constructs $\aS_\lozenge \| \aS'_\square$ from
$\aR_\sharp \| \aR'_\flat$ within $\eps+\eps'$,
\[\aR_\sharp \| \aR'_\flat \xrightarrow{\pi \| \pi',\eps+\eps'}
\aS_\lozenge \| \aS'_\square \,.\]
Proofs of these statements can be found in \cite{MR11,Mau12}.

\subsection{Instantiation}
\label{app:ac.instantiation}

As stated at the beginning of this section, the AC framework
\cite{MR11} specifies only the necessary axioms that resources and
converters must satisfy so that one can prove that the resulting
notion of construction is composable. Modeling concrete systems such
as those in Figures~\ref{fig:channel.dishonest},
\ref{fig:classicalauth.real}, \ref{fig:channel.honest} or
\ref{fig:classicalauth.ideal}, requires an instantiation of the
framework with mathematical objects that capture interactive quantum
information\-/processing systems. Such an instantiation has been given
in \cite{PMMRT17} and proven to satisfy the axioms of AC, where the
interactive systems are called \emph{causal boxes}.


Unlike the model of systems used in quantum UC~\cite{Unr10}, in which
the output of a system is given by a quantum message and a classical
string denoting a recipient, causal boxes allow messages to be sent to
a superposition of different players; they even allow superpositions
of different numbers of messages to be generated in superpositions of
different
orders~\cite{PMMRT17}.  
This generality is however not needed in the current work, because all
converters and resources involved in the construction of secure
channels have a very simple structure. For this work, a system $\aS$
can be modeld as having internal memory $\hilbert_{\text{mem}}$, and
sets of in-ports $\mathsf{In}$ and out-ports $\mathsf{Out}$ with
message spaces $\{\hilbert^{\text{in}}_i\}_{i \in \mathsf{In}}$ and
$\{\hilbert^{\text{out}}_o\}_{o \in \mathsf{Out}}$,
respectively. Furthermore, upon receiving a message at
$i \in \mathsf{In}$, $\aS$ always produces outputs at a fixed set of
out-ports $\cO_i \subseteq \mathsf{Out}$ such that
$\cO_i \cap \cO_j = \emptyset$ if $i \neq j$. A system $\aS$ is thus
entirely described by a set of completely positive, trace\-/preserving
(CPTP) maps
\begin{equation}
\label{eq:system}
\left\{ \cE_i : \mathcal{L}\left( \hilbert^{\text{in}}_i \tensor
    \hilbert_{\text{mem}} \right) \to \mathcal{L}\left(
    \hilbert_{\text{mem}} \otimes \hilbert^{\text{out}}_{\cO_i} \right)
\right\}_{i \in \mathsf{In}}\,,\end{equation}
where
$\hilbert^{\text{out}}_{\cO_i} = \bigotimes_{o \in \cO_i}
\hilbert^{\text{out}}_o$
and $\lo{}$ is the space of linear operators on $\hilbert$. Upon
receiving a message at some port $i \in \mathsf{In}$, the
system $\aS$ then applies the map $\cE_i$ and outputs the messages in the
registers $\cO_i$.

For a fixed order of messages received, this specific type of system
has been called a \emph{quantum strategy}~\cite{GW07,Gut12},
\emph{quantum comb}~\cite{CDP09} or \emph{operator
  tensor}~\cite{Har11,Har12,Har15} \--- here we use the terminology
from \cite{CDP09}, namely \emph{comb}. A comb can be represented more
compactly as a single CPTP map
$\cE : \lo{\mathsf{In}} \to \lo{\mathsf{Out}}$, and using the
composition rules for combs~\cite{CDP09,Har12} or causal
boxes~\cite{PMMRT17}, two such systems can be composed to obtain a new
system of the same type. The exact formula for composing systems is
not needed in the current work; in all special cases where we connect
two systems, the resulting system can easily be worked out by hand. We
refer the interested reader to \cite{CDP09,Har12,PMMRT17} for the
generic cases.

As mentioned at the end of \appendixref{app:ac.objects} we use a
distinguisher metric to define the distance between two systems $\aR$
and $\aS$. This means that another system $\aD$, a
\emph{distinguisher}, is given access to either $\aR$ or $\aS$, and
has to guess to which of the two it is connected. Let $\aD[\aR]$ and
$\aD[\aS]$ be the binary random variables corresponding to $\aD$'s
guess, then the distinguishing advantage between $\aR$ and $\aS$ for
this specific distinguisher is given by
\[ d_{\aD}(\aR,\aS) \coloneqq \left| \Pr\left[ \aD[\aR] = 0 \right] -
  \Pr\left[ \aD[\aS] = 0 \right] \right|\,,\]
and the distance is given by
\[ d(\aR,\aS) \coloneqq \sup_{\aD} d_{\aD}(\aR,\aS)\,,\]
where the supremum is taken over all distinguishers allowed by quantum
mechanics\footnote{In the computational setting one would restrict the
  set of distinguishers to those that are efficient. In our
  information\-/theoretic setting the distance is defined over
  unbounded distinguishers as well.} \--- it has been proven in
\cite{PMMRT17} that $d( \cdot, \cdot)$ is indeed a metric.

Due to the simple structure of the systems considered in this work,
the distinguishing metric can be reduced to the following strategy
\--- for the general case of the distinguishing metric between causal
boxes we refer the reader to \cite{PMMRT17}. Let the distinguisher
have internal memory $\hilbert_{R}$. It choses an in-port $i_1$, and
prepares a state $\rho_{RA_1}$. The $A_1$ part is sent to the system
at the $i_1$ port. It then receives the output on ports $\cO_{i_1}$,
which it appends to its internal memory. It measures its internal
memory to decide on the next in-port $i_2$, applies a map
$\cF_1 : \lo{R} \to \lo{RA_2}$, and inputs the $A_2$ part at the
corresponding port. This process is repeated until there are no more
unused ports, after which it measures its internal memory and produces
one bit of output, its final guess. In the case where the two systems
being compared only have one in-port, this metric reduces to the
diamond norm. And if the systems have no in-port (or one trivial
in-port of dimension $1$), this results in the trace\-/distance
between the states output by the two systems.

\section{Unitary designs}
\label{app:design}

The concept of a unitary $2$-design was originally proposed in
\cite{Dan05,DCEL09}. The following (equivalent) definition is taken
from \cite{GAE07}.

\begin{deff}[Unitary $2$-design~\cite{Dan05,DCEL09,GAE07}]
A unitary $2$-design is a finite set of unitaries $\{V_j\}_{j \in
  \cJ}$ on $\hilbert = \complex^d$ such that for all $\rho \in
\cL(\hilbert \tensor \hilbert)$
\begin{equation}
\label{eq:design.1}
\frac{1}{|\cJ|} \sum_{j \in \cJ} ( V_j \tensor V_j ) \rho
( \hconj{V}_j \tensor \hconj{V}_j )  = \int_{V(d)} ( V \tensor V ) \rho
( \hconj{V} \tensor \hconj{V} ) dV\,,
\end{equation}
where $dV$ is the Haar measure. Equivalently, $\{V_j\}_{j \in \cJ}$ is
a unitary $2$-design if for any quantum channel
$\Lambda : \lo{} \to \lo{}$ and state $\rho \in \lo{}$,
\begin{equation}
\label{eq:design.2}
\frac{1}{|\cJ|} \sum_{j \in \cJ} \hconj{V}_j \Lambda \left( V_j \rho
\hconj{V}_j \right) V_j = \int_{V(d)} \hconj{V} \Lambda \left( V \rho
\hconj{V} \right) V dV\,.
\end{equation}
\end{deff}

One way to construct a unitary $2$-design is by finding a set of
unitaries $\{U_k\}_{k \in \cK}$ that map all non\-/identity Paulis to each other
with equal frequency, i.e., $\forall P_\ell,P_{\ell'}$ such that
$P_\ell \neq I$ and $P_{\ell'} \neq I$,
\begin{equation}
\label{eq:design.uniformmapping}
\left| \left\{k \in \cK : U_k P_\ell \hconj{U}_k = e^{i
      \theta_{\ell,\ell',k}} P_{\ell'} \right\} \right| =
\frac{|\cK|}{d^2-1} \,,
\end{equation}
where $e^{i\theta_{\ell,\ell',k}}$ is some global phase and $d$ is the
dimension of the Hilbert space. A unitary $2$\-/design is then
obtained by composing these unitaries with Paulis, i.e., the set
$\{P_\ell U_k\}_{\ell,k}$ is a unitary $2$\-/design. This has been
used in \cite{DCEL09} to show that the Clifford group is a unitary
$2$-design, and is further discussed in \cite{GAE07}.

Chau~\cite{Cha05} finds a set $\{U_k\}_{k \in \cK}$ satisfying
\eqnref{eq:design.uniformmapping}. To understand his construction, we
must view the indices $x$ and $z$ of a Pauli operator $P_{x,z}$ as
elements of a Galois field $x,z \in \mathrm{GF}(d)$. Let
$M = \twobytwo{\alpha}{\beta}{\delta}{\gamma} \in \mathrm{SL}(2,d)$ be any
element of the special linear group of $2 \times 2$ matrices over the
finite field $\mathrm{GF}(d)$, i.e., matrices with determinant $1$. Chau then finds
unitaries $U_{M}$ such that
\[ U_M P_{x,z}
\hconj{U}_{M} = e^{i\theta} P_{\alpha x +
  \beta z, \delta x + \gamma z}\,,\]
for some global phase $e^{i \theta}$ that may depend on
$M,x,z$, where the arithmetic in the indices
is done in $\mathrm{GF}(d)$. By considering the entire set
$\mathrm{SL}(2,d)$ one can verify that
\eqnref{eq:design.uniformmapping} is satisfied. Since
$|\mathrm{SL}(2,d)| = d^3-d$, we need $\log (d^3 - d) \leq 3 \log d$
bits of key to chose the unitary. We now show that this set is a
strong purity testing code.

\begin{lem}
\label{lem:purity}
Any set $\{U_k\}_{k \in \cK}$ satisfying
\eqnref{eq:design.uniformmapping} with $d = 2^{m+n}$ is a strong
purity testing code with $\eps = 2^{-n}$.
\end{lem}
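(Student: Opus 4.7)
The plan is to reduce the strong purity testing inequality to a direct counting argument that exploits the uniform permutation of non-identity Paulis guaranteed by \eqnref{eq:design.uniformmapping}. First I would characterize when a non-identity Pauli $P_\ell$ on the $m+n$ qubits belongs to $\cP_k$: by \eqnref{eq:stabilizer.pauli}, this happens iff $\hconj{U}_k P_\ell U_k \propto P_{x,z} \tensor P_{0,z'}$ for some $x,z \in \{0,1\}^m$ and $z' \in \{0,1\}^n$, i.e., iff the decoded Pauli has a trivial bit-flip pattern on the syndrome register. Let $\mathcal{S}$ denote the set of non-identity Paulis of this form; counting the free parameters gives $|\mathcal{S}| = 2^m \cdot 2^m \cdot 2^n - 1 = 2^{2m+n} - 1$.

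Next I would translate \eqnref{eq:design.uniformmapping} into the convention used by the purity testing definition. For each $k$, conjugation by $U_k$ induces a bijection $\pi_k$ on non-identity Paulis (modulo global phase), and \eqnref{eq:design.uniformmapping} says that for any non-identity $P_\ell, P_{\ell'}$,
\[ |\{k \in \cK : \pi_k(P_\ell) = P_{\ell'} \text{ up to phase}\}| = \frac{|\cK|}{d^2-1}. \]
Since each $\pi_k$ is a bijection, this is equivalent to the same uniformity holding for $\pi_k^{-1}$, which is precisely the map $P \mapsto \hconj{U}_k P U_k$ appearing in the definition of $\cP_k$.

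Fixing a non-identity $P_\ell$, I would then compute
\[ |\{k : P_\ell \in \cP_k\}| = \sum_{P_{\ell''} \in \mathcal{S}} |\{k : \hconj{U}_k P_\ell U_k \propto P_{\ell''}\}| = |\mathcal{S}| \cdot \frac{|\cK|}{d^2-1}, \]
and substituting $d = 2^{m+n}$ yields
\[ \frac{|\{k : P_\ell \in \cP_k\}|}{|\cK|} = \frac{2^{2m+n}-1}{2^{2m+2n}-1} \leq 2^{-n}, \]
where the final inequality follows from $2^n(2^{2m+n}-1) = 2^{2m+2n}-2^n \leq 2^{2m+2n}-1$. Since this bound is uniform over non-identity $P_\ell$, the set $\{U_k\}$ is a strong purity testing code with $\eps = 2^{-n}$. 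There is no real obstacle here; the argument is pure bookkeeping, and the only subtlety is the convention swap between $U_k P \hconj{U}_k$ (used by the design definition) and $\hconj{U}_k P U_k$ (used by $\cP_k$), which is handled by the bijectivity of each $\pi_k$.
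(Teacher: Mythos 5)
Your proposal is correct and follows essentially the same argument as the paper's proof: identify the undetected Paulis as those decoding to $P_{x,z} \tensor P_{0,z'}$, count them as $2^{2m+n}-1$, and use the uniform-mapping property to conclude the fraction of bad keys is $\frac{2^{2m+n}-1}{2^{2m+2n}-1} \leq 2^{-n}$. Your explicit handling of the $U_k P \hconj{U}_k$ versus $\hconj{U}_k P U_k$ convention via bijectivity is a small clarification the paper leaves implicit, but it does not change the substance.
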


\begin{proof}
An error $P_\ell$ is not detected if it is mapped to $P_{\ell'} =
P_{x,z} \tensor P_{0,z'}$. There are $2^{2m+n}-1$ such Paulis
$P_{\ell'}$ which are not identity. Since the unitaries $\{U_k\}_{k
  \in \cK}$ are constructed to map $P_\ell$ to all non\-/identity
$P_{\ell'}$ with equal frequency, then $\frac{2^{2m+n}-1}{2^{2m+2n}-1}
\leq 2^{-n}$ of them will not detect $P_\ell$.
\end{proof}

A unitary $t$\-/design is defined similarly to a unitary $2$\-/design,
except that it has a $t$\-/fold tensor product instead of a $2$\-/fold
tensor product in \eqnref{eq:design.1}. Intuitively, if a unitary
$2$\-/design is a good quantum authentication scheme, then so should
any (approximate) unitary $t$\-/design for any $t \geq 2$. One can
however not directly apply our proof to unitary $t$\-/designs. This is
because we use \eqnref{eq:design.uniformmapping} as an intermediary
step, to show that a unitary $2$\-/design is a strong purity testing
code. \eqnref{eq:design.uniformmapping} is also satisfied by unitary
$3$\-/designs (given by the Clifford group~\cite{Zhu15,Web15}), but
not necessarily for $t \geq 4$, where the unitaries are not elements
of the Clifford group anymore. One can however show directly from
\eqnref{eq:design.2} that a unitary $2$\-/design is a strong purity
testing code, and since all (approximate) unitary $t$\-/designs
(approximately) satisfy \eqnref{eq:design.2}, they can all be used to
construct quantum authentication schemes that have the same key
recycling properties as unitary $2$\-/designs.\footnote{The reason we
  used \eqnref{eq:design.uniformmapping} and not \eqnref{eq:design.2}
  to prove the security of the unitary $2$\-/design scheme, is that
  Chau's construction~\cite{Cha05} is a subset of a unitary
  $2$\-/design that satisfies \eqnref{eq:design.uniformmapping} but
  not \eqnref{eq:design.2}. We still have to compose it with a random
  Pauli to obtain the unitary $2$\-/design.}

\begin{lem}
\label{lem:purity.2}
Any $\delta$\-/approximate $t$\=/design with $t \geq 2$ is a strong
purity testing code with error $\delta + 2^{-n}$.
\end{lem}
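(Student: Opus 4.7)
The plan is to reduce the strong purity testing condition to a Haar-integral computation by applying the equivalent formulation of the $t$-design property given in \eqnref{eq:design.2}. Fix a non-identity Pauli $P_\ell$ and any input $\ket{\psi}^M \tensor \zero^S$. The probability that the error $P_\ell$ evades the syndrome check when the design element $V_k$ is used is
\[ p_\ell(V_k) = \operatorname{Tr}\bigl[ (I^M \tensor \proj{0}^S) \hconj{V}_k P_\ell V_k (\proj{\psi}^M \tensor \proj{0}^S) \hconj{V}_k P_\ell V_k \bigr], \]
so the strong purity testing condition amounts to bounding $\frac{1}{|\cK|}\sum_k p_\ell(V_k)$ by $\delta+2^{-n}$.

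For the Haar-averaged quantity, I would use the standard twirl identity $\int dV\,(\hconj{V} X V) \tensor (\hconj{V} X V) = \alpha I + \beta\,\mathrm{SWAP}$, which for traceless $X$ with $\operatorname{Tr}(X^2)=d$ gives $\alpha=-1/(d^2-1)$ and $\beta=d/(d^2-1)$. A direct index contraction yields $\int dV\,(\hconj{V} P_\ell V) A (\hconj{V} P_\ell V) = \alpha A + \beta \operatorname{Tr}(A) I$ for any operator $A$. Plugging in $A = I^M \tensor \proj{0}^S$ with $d=2^{m+n}$ and $\operatorname{Tr}(A)=2^m$, the Haar average of $p_\ell$ reduces to $(2^{2m+n}-1)/(2^{2m+2n}-1) \leq 2^{-n}$, reproducing the bound of \lemref{lem:purity}.

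To conclude, I would invoke the $\delta$-approximate $t$-design property with $\Lambda(\rho) = P_\ell \rho P_\ell$: the twirled channel $\frac{1}{|\cK|}\sum_k \hconj{V}_k \Lambda(V_k \,\cdot\, \hconj{V}_k) V_k$ approximates its Haar counterpart within $\delta$, so $p_\ell$---being an expectation value of a norm-one projector on a fixed state---deviates by at most $\delta$. Combining gives $\frac{1}{|\cK|}\sum_k p_\ell(V_k) \leq 2^{-n}+\delta$, as claimed. The main technical point is matching the precise notion of ``$\delta$-approximate $t$-design'' to the norm in which the acceptance probability is controlled; for the standard diamond-norm convention this is immediate via H\"older's inequality, but the convention should be flagged to avoid hidden dimensional factors. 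Otherwise the argument closely parallels \lemref{lem:purity}, with the explicit Pauli counting there replaced by the Haar description supplied by the 2-design property.
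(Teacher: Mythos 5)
Your proof is correct and follows essentially the same route as the paper: apply the $2$\-/design property in the form of \eqnref{eq:design.2} to the channel $\Lambda(\rho)=P_\ell\rho P_\ell$, compute the Haar\-/averaged probability of the syndrome reading $0$ to be $\frac{2^{2m+n}-1}{2^{2m+2n}-1}\leq 2^{-n}$, and absorb the design approximation into an additive $\delta$. The caveat you flag about which norm defines ``$\delta$\-/approximate'' is real but is left equally implicit in the paper's own proof, so there is nothing to fix relative to the source.
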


\begin{proof}
  To prove that a set of unitaries is a strong purity testing code,
  one has to show that it can be used to detect all non\-/identity
  Pauli errors with high probability. Setting
  $\Lambda(\rho) = P_\ell \rho P_\ell$ for a non\-/identity Pauli
  $P_\ell$ in \eqnref{eq:design.2}, one can show that that the RHS
  becomes (see, e.g., \cite{DCEL09}),
  \[ \frac{d}{d^2-1} \1 - \frac{1}{d^2-1} \rho\,.\]
  If $d = 2^{m+n}$, $\rho = \rho' \tensor \proj{0}$ where $\rho'$ is
  the first $m$ qubits of $\rho$ and the last $n$ qubits are used as
  syndrome, then the probability of obtaining $0$ when performing a
  measuring in the computational basis on the syndrome is
\[ \frac{d^2 2^{-n}}{d^2-1} - \frac{1}{d^2-1} \leq 2^{-n} \,.\]
Thus, if the distance between the LHS and RHS of
\eqnref{eq:design.2} is $\delta$, then the probability of not
detecting a Pauli error is less than $\delta + 2^{-n}$.
\end{proof}

\section{Technical lemma}
\label{app:lemmas}

The following lemma is used in the proof of \thmref{thm:auth} to
bound the trace distance between two states with the $2$-norm.

\begin{lem}
\label{lem:trace2norm}
Let $\ket{\psi}$ and $\ket{\varphi}$ be two subnormalized states. Then
\begin{equation}
\label{eq:trace2norm}
 \frac{1}{2} \trnorm{\proj{\psi} - \proj{\varphi}} \leq
\norm{\ket{\psi}-\ket{\varphi}} \,,\end{equation}
where $\norm{\ket{a}} = \sqrt{\braket{a}{a}}$ is the vector
$2$\=/norm.
\end{lem}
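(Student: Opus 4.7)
The plan is to bound the trace norm directly by splitting the difference of rank\-/one operators, applying the triangle inequality, and then using the simple fact that the trace norm of a rank\-/one operator factors as a product of vector $2$\-/norms.

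First I would write
\[ \proj{\psi} - \proj{\varphi} = \ket{\psi}\bra{\psi} - \ket{\psi}\bra{\varphi} + \ket{\psi}\bra{\varphi} - \ket{\varphi}\bra{\varphi} = \ket{\psi}(\bra{\psi} - \bra{\varphi}) + (\ket{\psi} - \ket{\varphi})\bra{\varphi}. \]
Then by the triangle inequality for the trace norm and the identity $\trnorm{\ket{a}\bra{b}} = \norm{\ket{a}} \cdot \norm{\ket{b}}$ (which follows from the singular value decomposition of a rank\-/one operator), I get
\[ \trnorm{\proj{\psi} - \proj{\varphi}} \leq \norm{\ket{\psi}} \cdot \norm{\ket{\psi} - \ket{\varphi}} + \norm{\ket{\psi} - \ket{\varphi}} \cdot \norm{\ket{\varphi}}. \]

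Finally, since $\ket{\psi}$ and $\ket{\varphi}$ are subnormalized, $\norm{\ket{\psi}} \leq 1$ and $\norm{\ket{\varphi}} \leq 1$, so the right\-/hand side is at most $2 \norm{\ket{\psi} - \ket{\varphi}}$, which divided by $2$ yields the claim. There is no real obstacle here \--- the whole argument is a two\-/line computation \--- the only thing to be careful about is that the bound uses the subnormalization hypothesis in an essential way (for normalized states, one actually has the tighter bound $\frac{1}{2}\trnorm{\proj{\psi}-\proj{\varphi}} = \sqrt{1 - |\braket{\psi}{\varphi}|^2}$, but the weaker vector\-/norm bound is what is needed in the proof of \thmref{thm:auth}).
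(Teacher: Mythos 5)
Your proof is correct, and it takes a genuinely different route from the paper's. The paper invokes an exact identity from Renner's thesis, $\trnorm{\proj{\psi} - \proj{\varphi}} = \norm{\ket{\psi}-\ket{\varphi}} \cdot \norm{\ket{\psi}+\ket{\varphi}}$ for real overlap, then handles complex $\braket{\psi}{\varphi}$ by rotating the phase of $\ket{\varphi}$ and showing the product of norms can only decrease, and finally bounds $\norm{\ket{\psi}+\ket{\varphi}} \leq 2$ using subnormalization. Your telescoping decomposition $\proj{\psi} - \proj{\varphi} = \ket{\psi}(\bra{\psi}-\bra{\varphi}) + (\ket{\psi}-\ket{\varphi})\bra{\varphi}$, combined with the triangle inequality and $\trnorm{\ket{a}\bra{b}} = \norm{\ket{a}}\cdot\norm{\ket{b}}$, reaches the same conclusion without any auxiliary lemma and without any case distinction on the phase of the overlap \--- the decomposition is insensitive to it. What the paper's route buys is the exact value of the trace norm along the way (and hence a marginally tighter intermediate bound, since $\norm{\ket{\psi}+\ket{\varphi}} \leq \norm{\ket{\psi}} + \norm{\ket{\varphi}}$); what yours buys is brevity and self-containedness. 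Your closing remarks are also accurate: the hypothesis $\norm{\ket{\psi}}, \norm{\ket{\varphi}} \leq 1$ is exactly what turns the two rank-one terms into the factor of $2$, and for normalized states the exact trace distance $\sqrt{1-|\braket{\psi}{\varphi}|^2}$ is indeed dominated by the vector-norm bound.
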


\begin{proof}
It was shown in \cite[Lemma~A.2.3]{Ren05} that if
$\braket{\psi}{\varphi}$ is real, then
\begin{equation*}
  \trnorm{\proj{\psi} - \proj{\varphi}} =
  \norm{\ket{\psi}-\ket{\varphi}} \cdot \norm{\ket{\psi}+\ket{\varphi}}
  \,.
\end{equation*}
For complex $\braket{\psi}{\varphi}$ we define
$\ket{\varphi'} \coloneqq
\frac{\braket{\varphi}{\psi}}{|\braket{\varphi}{\psi}|}
\ket{\varphi}$. It then follows from \cite[Lemma~A.2.3]{Ren05} that
\begin{equation*}
  \trnorm{\proj{\psi} - \proj{\varphi}} = \trnorm{\proj{\psi} - \proj{\varphi'}} =
  \norm{\ket{\psi}-\ket{\varphi'}} \cdot \norm{\ket{\psi}+\ket{\varphi'}}
  \,.
\end{equation*}
To prove this lemma it remains to show that 
\[ \norm{\ket{\psi}-\ket{\varphi'}} \cdot
\norm{\ket{\psi}+\ket{\varphi'}} \leq \norm{\ket{\psi}-\ket{\varphi}}
\cdot \norm{\ket{\psi}+\ket{\varphi}}\,,\]
since combining this with $\norm{\ket{\psi}+\ket{\varphi}} \leq 2$ we
get \eqnref{eq:trace2norm}.

Writing out the norms with the scalar product we obtain
\begin{align*}
& \norm{\ket{\psi}-\ket{\varphi}}^2 \cdot \norm{\ket{\psi}+\ket{\varphi}}^2\\ 
& \qquad = \left(\braket{\psi}{\psi} + \braket{\varphi}{\varphi} -
  \braket{\psi}{\varphi} - \braket{\varphi}{\psi}\right)
\left(\braket{\psi}{\psi} + \braket{\varphi}{\varphi} +
  \braket{\psi}{\varphi} + \braket{\varphi}{\psi}\right) \\
& \qquad = \left(\braket{\psi}{\psi} + \braket{\varphi}{\varphi}\right)^2 -
  \left(\braket{\psi}{\varphi} + \braket{\varphi}{\psi}\right)^2\,.
\end{align*}
Thus, using $|\braket{\psi}{\varphi}| = \braket{\psi}{\varphi'}$ we get
\begin{align*}
& \norm{\ket{\psi}-\ket{\varphi}}^2 \cdot
  \norm{\ket{\psi}+\ket{\varphi}}^2 - \norm{\ket{\psi}-\ket{\varphi'}}^2
  \cdot \norm{\ket{\psi}+\ket{\varphi'}}^2 \\ 
& \qquad = \left(\braket{\psi}{\varphi'} + \braket{\varphi'}{\psi}\right)^2 -
  \left(\braket{\psi}{\varphi} + \braket{\varphi}{\psi}\right)^2 \\
& \qquad = \braket{\psi}{\varphi'}^2 + \braket{\varphi'}{\psi}^2 -
  \braket{\psi}{\varphi}^2 - \braket{\varphi}{\psi}^2 \\
& \qquad = 2 |\braket{\psi}{\varphi}|^2  -
  \braket{\psi}{\varphi}^2 - \braket{\varphi}{\psi}^2 \geq 0 \,. \qedhere
\end{align*}
\end{proof}

\section{Authentication without key recycling}
\label{app:norecycle}

The proof for authentication with key recycling provided in
\secref{sec:auth} is automatically a proof for authentication without
key recycling, since the players do not have to reuse the key if they
do not want to. But the parameters are not optimal in this case,
because recycling the key causes the error to change from $\eps$ to
$\sqrt{\eps}$. What is more, the proof given is only valid for strong
purity testing codes, since these are essential to be able to recycle
all the key. But if the users are not interested in recycling key, it
is sufficient to use weak purity testing codes. For completeness, we
provide a proof here that the entire family of quantum
authentication protocols from \cite{BCGST02} is secure, i.e., they
construct the secure channel $\aS^m_\flat$, where $\aS^m$ is the
resource considered before \--- which is drawn in
\figref{fig:channel.secure} \--- and $\flat_E$ is the obvious filter
which lets the message through.

Like in the case of key recycling, we can consider an
encrypt\-/then\-/encode and an encode\-/then\-/encrypt protocol, which
are identical to Figures~\ref{fig:protocol.1} and
\ref{fig:protocol.2}, except that the players use a weak purity
testing code, do not recycle any key and do not use an backwards
authentic channel. Let the weak purity testing code
$\{U_k\}_{k \in \aK}$ have size $\log |\aK| = \nu$ and error $\eps$,
and encode an $m$ qubit message in an $m+n$ qubit cipher. As
previously, let $\pi^{\qauth}_{AB}$ denote the encode\-/then\-/encrypt
converters and $\bar{\pi}^{\qauth}_{AB}$ denote the
encrypt\-/then\-/encode version.

\begin{lem}
\label{lem:weak.auth}
The converters $\pi^{\qauth}_{AB}$ for the encode\-/then\-/encrypt
protocol without key recycling construct the secure channel
$\aS^m_\flat$, given an insecure quantum channel $\aC_{\square}$ and a
secret key $\aK^{\nu+2m+2n}$, i.e.,
\[\aC_\square \| \aK^{\nu+2m+2n}
\xrightarrow{\pi^\qauth_{AB},\eps^{\qauth}} \aS^m_\flat\,,\]
with $\eps^{\qauth} = \max\{\eps,2^{-n}\}$.
\end{lem}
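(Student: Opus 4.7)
The plan is to mirror the argument of \thmref{thm:auth}, taking advantage of the absence of a recycled\-/key register and of a backwards authentic channel to get a tighter bound directly in $\eps$ rather than $\sqrt{\eps}$. Correctness with error $0$ is immediate: with the noiseless filter in place, the real system delivers the correctly encoded cipher to Bob, who recovers syndrome $0$ and outputs Alice's message.

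For security I would reuse essentially the same simulator as in \thmref{thm:auth}, changing only its acceptance rule. Upon being notified of an $m$\-/qubit transmission, the simulator generates $\ket{\Phi}^{CR}$, hands the $C$ register to the distinguisher, and Bell\-/measures $CR$ on return to obtain an index $j$; crucially, it also draws its own uniformly random $k\in\cK$ internally and accepts iff $j\in\cQ_k$. If a cipher arrives before any notification (the impersonation case), the simulator signals reject. This simulator choice is the main conceptual point of the proof: gating acceptance on $j\in\cQ_k$ (rather than on $j=0$, as sufficed in the strong case) lets the ideal accept branch reproduce exactly the contributions of undetected\-/but\-/harmless errors in $\cQ_k$ that also populate the real accept branch.

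Two distinguisher orderings then need to be analyzed. In the substitution scenario, averaging the real state over the Pauli one\-/time pad key $\ell$ via \eqnref{eq:symplectic.trick} kills all cross terms, so the real accept branch reduces to
\[
\xi^{\acc} = \frac{1}{|\cK|}\sum_{k}\sum_{j\in\cJ^k_0}\proj{\psi_{j,k}}^{ME}\,.
\]
Split the inner sum into $j\in\cQ_k$, for which $\ket{\psi_{j,k}} = e^{i\theta_{k,j}}(I^M\tensor E^E_j)\ket{\psi}$, and $j\in\cP_k\setminus\cQ_k$. The first piece coincides exactly with the ideal accept state produced by the simulator; the second piece, together with the analogous mismatch in the reject branch (where the simulator rejects on $j\notin\cQ_k$ while Bob rejects on $j\notin\cJ^k_0$), contributes trace norm at most
\[
\sum_{j} p_j \cdot \frac{\left|\left\{k : j\in\cP_k\setminus\cQ_k\right\}\right|}{|\cK|} \leq \eps\,,
\]
by the weak purity\-/testing bound and $\sum_j p_j = 1$. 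Hence the trace distance in this scenario is at most $\eps$.

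In the impersonation scenario, the ideal simulator rejects and Bob outputs $\bot$; in the real system, the syndrome measured by Bob after decryption is uniform on $\{0,1\}^n$ by the randomness of the OTP key $\ell$, so Bob erroneously accepts with probability at most $2^{-n}$ and the trace distance is at most $2^{-n}$. A distinguisher commits to one of the two disjoint orderings, so the overall bound is $\max\{\eps,2^{-n}\}$. The delicate step will be verifying the acceptance\-/rule matching cleanly in the substitution analysis: it is where the weak purity\-/testing property is invoked and where the argument diverges from the Jensen\-/based estimate in \thmref{thm:auth} that produced the looser $\sqrt{\eps}$.
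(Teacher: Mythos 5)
Your substitution\-/case analysis is essentially the paper's proof: the simulator that Bell\-/measures and accepts iff the outcome lies in $\cQ_k$ for an internally sampled $k$ is equivalent to the one used in the paper (which conjugates by $U_k$ before measuring and accepts on outcomes of the form $P_{0,z}\tensor P_{0,0}$), the averaging over $\ell$ via \eqnref{eq:symplectic.trick} does kill all cross terms, and the mismatch in both branches is supported on $j\in\cP_k\setminus\cQ_k$, giving the bound $\eps$ exactly as in the paper. That part is correct.

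The gap is in the impersonation ordering. Bounding $\Pr[\textsf{accept}]$ by $2^{-n}$ via the uniformity of the syndrome is necessary but not sufficient: after Bob rejects, the experiment \emph{continues}. The distinguisher then inputs a message at Alice's interface and receives a cipher encrypted under the \emph{same} key $(k,\ell)$ on whose posterior it has just conditioned by observing the reject outcome (and whose posterior is moreover correlated with the purifying register $E$ of the fake cipher it sent). You cannot simply average $\ell$ uniformly to conclude the cipher is fully mixed, because $\ell$ is no longer uniform nor independent of $E$ given the reject event. This is precisely why the proof of \thmref{thm:auth} introduces the antisymmetric\-/state and teleportation argument: it shows that the post\-/reject state of $(K,L,E)$ differs from $\tau^K\tensor\tau^L\tensor\tau^E$ by an operator of trace norm $2^{-n}$, so that the subsequently issued cipher still decouples up to that error. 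The paper's proof of this lemma explicitly reuses that argument (noting it depends only on the random Paulis, not on the purity testing code); your proposal omits it, so the claimed $2^{-n}$ bound for this ordering is not established as written. Supplying that step — or importing it verbatim from \thmref{thm:auth} — closes the gap.
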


\begin{proof}
  Correctness of the protocol is trivial, so we only need to consider
  security. Just as in the proof of \thmref{thm:auth}, the
  distinguisher has the choice between providing the inputs in two
  orders, first a message for Alice, then a (possibly modified) cipher
  at Eve's interface, or first the cipher then the message. In the
  latter case, the simulator always tells the ideal channel to output
  an error; then when the ideal channel notifies the simulator that a
  message has been input at Alice's interface, the simulator outputs a
  fully mixed state. This is exactly the same behavior as the
  simulator used in the proof of \thmref{thm:auth}, except that no key
  is output at any point. We proved back then that the distinguisher
  has an advantage of at most $2^{-n}$ at distinguishing the real and
  ideal systems. The proof does not depend on the purity testing code,
  it follows directly from the random Paulis $P_\ell$ used for
  decrypting and encrypting. So in particular, it is also valid when
  the unitaries $\{U_k\}_{k \in \aK}$ form a weak purity testing
  code. Hence in the case of weak purity testing codes without key
  recycling, the distinguishing advantage for this order of messages
  is also bounded by $2^{-n}$.

  The first case \--- when the distinguisher first provides a message
  at Alice's interface, then modifies the cipher \--- requires a
  different simulator and different proof than that of
  \thmref{thm:auth} to go through with weak purity testing codes. The
  simulator we use also prepares $n+m$ EPR pairs $\ket{\Phi}^{CR}$,
  but this time it picks a key $k$ uniformly at random and applies
  $U_k$ to the halves in the $C$ system, which it outputs at Eve's
  interface. Upon receiving the (possibly modified) system $C'$ back
  from the distinguisher, it applies the inverse $\hconj{U}_k$, then
  measures in the Bell basis. Let the outcome of the measurement be
  $j'$. If the Pauli $P_{j'}$ acts trivially on the $M'$ subsystem of
  $C' = S'M'$ and only flips phases on $S'$ \--- i.e.,
  $P_{j'} = P_{0,z} \tensor P_{0,0}$ \--- then the simulator tells the
  ideal resource to output the message, otherwise it should output an
  error.

Putting this together with the distinguisher that first
prepares a state $\ket{\psi}^{ME}$, inputs the $M$ part at Alice's
interface, receives some cipher in the system $C$, applies a unitary
$U^{CE} = \sum_j P_j^C \tensor E^E_j$ to the $CE$ system, and inputs
the modified $C$ system back on the channel, we get the following
final state in the ideal case:
\begin{multline} \label{eq:auth.ideal.weak} \zeta = \proj{\acc}
  \tensor \frac{1}{2^\nu} \sum_{k} \sum_{j \in \cQ_k} \left[ \left(
      I^M \tensor E^E_j \right) \proj{\psi}^{ME} \left( I^M \tensor
      \hconj{\left(E^E_j\right)} \right) \right] \\ {}+ \proj{\rej}
  \tensor \frac{1}{2^\nu} \sum_{k} \sum_{j \notin \cQ_k} E^E_j \rho^E
  \hconj{\left(E^E_j\right)}\,,\end{multline} where $\cQ_k$ is the set
of Paulis that are not detected by the code and act trivially on
$M$, i.e., the $j$ for which
$\hconj{U}_k P^{SM}_j U_k = e^{i \theta_{k,j}} P^S_{0,z} \tensor
P^M_{0,0}$.

In the real system, for the secret key $(k,\ell)$, the state before
Bob's measurement of the syndrome is given by
\begin{align*}\ket{\varphi_{k,\ell}}^{SME} & = \sum_j \left( \hconj{\left(U_k^{SM}\right)} P^{SM}_\ell P^{SM}_j P^{SM}_\ell U^{SM}_k \tensor
    E^E_j \right) \zero^S \ket{\psi}^{ME} \\ & = \sum_j (-1)^{\sym{j}{\ell}}\left( \hconj{\left(U_k^{SM}\right)}
    P^{SM}_j U^{SM}_k \tensor E^E_j \right) \zero^S \ket{\psi}^{ME} \\ & = \sum_j (-1)^{\sym{j}{\ell}}
    e^{i \theta_{k,j}} \left( P^{SM}_{k(j)} \tensor E^E_j \right) \zero^S \ket{\psi}^{ME}\,,
\end{align*}
where we denote by $k(j)$ the index of the Pauli operator
such that $\hconj{U}_k P_j U_k = e^{i \theta_{k,j}} P_{k(j)}$. Summing
over $k$ and $\ell$, the state before Bob's measurement is given by
\begin{align*} & \frac{1}{2^{\nu+2m+2n}} \sum_{k,\ell} \proj{\varphi_{k,\ell}}^{SME} \\
& \qquad = \frac{1}{2^{\nu+2m+2n}} \sum_{j_1,j_2,k,\ell} (-1)^{\sym{j_1 \xor j_2}{\ell}}
e^{i \theta_{k,j_1} - i \theta_{k,j_2} } \left( P^{SM}_{k(j_1)}
  \tensor E^E_{j_1} \right) \\
& \qquad \qquad \qquad \qquad \qquad \qquad \qquad \left( \proj{0}^S  \tensor
  \proj{\psi}^{ME} \right) \left( P^{SM}_{k(j_2)}
  \tensor \hconj{\left( E^E_{j_2} \right)} \right) \\
& \qquad = \frac{1}{2^{\nu}} \sum_{j,k} \left( P^{SM}_{k(j)}
  \tensor E^E_{j} \right) \left( \proj{0}^S  \tensor \proj{\psi}^{ME} \right) \left( P^{SM}_{k(j)}
  \tensor \hconj{\left( E^E_{j} \right)} \right) \,,
\end{align*}
where we used \eqnref{eq:symplectic.trick}. The measurement of $S$
yields $\ket{\acc}$ if $P^{SM}_{k(j)} = P^S_{0,z'} \tensor P^M_{x,z}$,
i.e., $j \in \cP_k$, where $\cP_k$ denotes the set of Paulis that are not
detected by the code $k$. Hence, the final state held by the
distinguisher after Bob's measurement is
given by 
\begin{multline} \label{eq:auth.real.weak} \xi = \proj{\acc}
  \tensor \frac{1}{2^\nu} \sum_{k} \sum_{j \in \cP_k} \left[ \left(
      P^M_{k(j)} \tensor E^E_j \right) \proj{\psi}^{ME} \left( P^M_{k(j)} \tensor
      \hconj{\left(E^E_j\right)} \right) \right] \\ {}+ \proj{\rej}
  \tensor \frac{1}{2^\nu} \sum_{k} \sum_{j \notin \cP_k} E^E_j \rho^E
  \hconj{\left(E^E_j\right)}\,.\end{multline}

The distinguishability between the real and ideal systems is given by
the trace distance between \eqnsref{eq:auth.ideal.weak} and
\eqref{eq:auth.real.weak}, namely
\begin{align*}
 \frac{1}{2} \trnorm{\xi - \zeta} & = \frac{1}{2}
                                    \trnorm{\frac{1}{2^\nu} \sum_{k}
                                    \sum_{j \in \cP_k \setminus Q_k} \left[ \left(
      P^M_{k(j)} \tensor E^E_j \right) \proj{\psi}^{ME} \left( P^M_{k(j)} \tensor
      \hconj{\left(E^E_j\right)} \right) \right]} \\ 
    & \qquad \qquad \qquad \qquad \quad \qquad \qquad {}+\frac{1}{2}
                                    \trnorm{ \frac{1}{2^\nu} \sum_{k}
      \sum_{j \in \cP_k \setminus \cQ_k} E^E_j \rho^E
  \hconj{\left(E^E_j\right)}} \\
& \leq  \frac{1}{2^\nu} \sum_{k}
      \sum_{j \in \cP_k \setminus \cQ_k} \trace{E^E_j \rho^E
  \hconj{\left(E^E_j\right)}} \\
& = \sum_j \trace{E^E_j \rho^E
  \hconj{\left(E^E_j\right)}} \frac{\left|\left\{ k : j \in \cP_k \setminus \cQ_k
      \right\}\right|}{2^{\nu}} \leq \eps \,.  \qedhere \end{align*}
\end{proof}

It follows from this and \lemref{lem:ete} that the
encrypt\-/then\-/encode protocol is also secure.

\begin{cor}
\label{cor:weak.auth}
The converters $\bar{\pi}^{\qauth}_{AB}$ for the encrypt\-/then\-/encode
protocol without key recycling construct the secure channel
$\aS^m_\flat$, given an insecure quantum channel $\aC_{\square}$ and a
secret key $\aK^{\nu+2m+n}$, i.e.,
\[\aC_\square \| \aK^{\nu+2m+n}
\xrightarrow{\bar{\pi}^\qauth_{AB},\eps^{\qauth}} \aS^m_\flat\,,\]
with $\eps^{\qauth} = \max\{\eps,2^{-n}\}$.
\end{cor}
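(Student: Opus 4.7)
The plan is to deduce this corollary directly from Lemma \ref{lem:weak.auth} together with Lemma \ref{lem:ete}, paralleling exactly the relationship between Theorem \ref{thm:auth} and Corollary \ref{cor:auth} for the key recycling case. The hard work has already been done: Lemma \ref{lem:weak.auth} establishes the construction for the encode-then-encrypt protocol $\pi^\qauth_{AB}$ using a key resource of length $\nu + 2m + 2n$, and Lemma \ref{lem:ete} tells us that the two protocol variants are perfectly indistinguishable when the encrypt-then-encode protocol is granted an extra $n$ bits of phase-flip key. So the only substantive point is bookkeeping of key lengths.

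More concretely, I would first note that $\bar{\pi}^\qauth_{AB}$ acting on $\aC_\square \| \aK^{\nu+2m+n}$ is indistinguishable from $\pi^\qauth_{AB}$ acting on $\aC_\square \| \aK^{\nu+2m+2n}$. To see this, augment $\bar{\pi}^\qauth_{AB}$ by generating the additional $n$-bit string $z$ uniformly at random internally (which is an operation that the distinguisher cannot tell apart from drawing those bits from a larger shared key resource $\aK^{\nu+2m+2n}$, since $z$ does not appear anywhere at Eve's interface in the no-recycling setting). With this rewriting, Lemma \ref{lem:ete} gives distance $0$ between the resulting real systems, for both the correctness and security distinguishing experiments.

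Given this, both conditions of Definition \ref{def:security} transfer from $\pi^\qauth_{AB}$ to $\bar{\pi}^\qauth_{AB}$ verbatim with the same error $\eps^\qauth = \max\{\eps, 2^{-n}\}$. The simulator required for condition (\ref{eq:def.sec.app}) is the same one used in the proof of Lemma \ref{lem:weak.auth} — namely, either the Bell-measurement-based simulator for the ``message first'' ordering, or the one that always forces a reject in the ``cipher first'' ordering. Neither simulator depends on the order of encryption and encoding inside the honest protocol, since it interacts only with the ciphertext interface.

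The main (and only) obstacle is a sanity check that the no-recycling setting is genuinely captured by Lemma \ref{lem:ete} even though that lemma was stated in a context where $z$ was part of the recycled key. One must verify that in the no-recycling case the extra $z$-bits never leak to Eve through any output interface, so that sampling them locally versus drawing them from $\aK^{\nu+2m+2n}$ really is statistically identical from the distinguisher's perspective. This is immediate here because in the absence of a backwards authentic channel and any key recycling, the only part of the honest parties' state exposed to Eve is the ciphertext $\sigma^{AS}$, on which the computation in the proof of Lemma \ref{lem:ete} already shows the two protocol variants produce identically distributed outputs.
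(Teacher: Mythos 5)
Your proposal is correct and follows essentially the same route as the paper, which derives Corollary~\ref{cor:weak.auth} directly from Lemma~\ref{lem:weak.auth} together with the protocol-equivalence Lemma~\ref{lem:ete}; your additional bookkeeping (padding the key resource with $n$ unused, never-leaked bits $z$ so that Lemma~\ref{lem:ete} applies, and checking that the lemma carries over to the no-recycling converters) only makes explicit what the paper leaves implicit.
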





\begin{thebibliography}{SBPC{\etalchar{+}}09}

\bibitem[ABE10]{ABE10}
Dorit Aharonov, Michael {Ben-Or}, and Elad Eban.
\newblock Interactive proofs for quantum computations.
\newblock In {\em Proceedings of Innovations in Computer Science, ICS 2010},
  pages 453--469. Tsinghua University Press, 2010.
\newblock [\epfmt{arxiv}{0810.5375}].

\bibitem[AM16]{AM16}
Gorjan Alagic and Christian Majenz.
\newblock Quantum non-malleability and authentication.
\newblock Eprint, 2016.
\newblock [\epfmt{arxiv}{1610.04214}].

\bibitem[BB84]{BB84}
Charles~H. Bennett and Gilles Brassard.
\newblock Quantum cryptography: Public key distribution and coin tossing.
\newblock In {\em Proceedings of IEEE International Conference on Computers,
  Systems, and Signal Processing}, pages 175--179, 1984.

\bibitem[BBB82]{BBB82}
Charles~H. Bennett, Gilles Brassard, and Seth Breidbart.
\newblock Quantum cryptography {II}: How to re-use a one-time pad safely even
  if {P=NP}.
\newblock Original unpublished manuscript uploaded to arXiv in 2014, 1982.
\newblock [\epfmt{arxiv}{1407.0451}].

\bibitem[BBCM95]{BBCM95}
Charles~H. Bennett, Gilles Brassard, Claude Cr{\'e}peau, and Ueli Maurer.
\newblock Generalized privacy amplification.
\newblock {\em IEEE Transaction on Information Theory}, 41(6):1915--1923,
  November 1995.
\newblock [\epfmtdoi{10.1109/18.476316}].

\bibitem[BCG{\etalchar{+}}02]{BCGST02}
Howard Barnum, Claude Cr{\'e}peau, Daniel Gottesman, Adam Smith, and Alain
  Tapp.
\newblock Authentication of quantum messages.
\newblock In {\em Proceedings of the 43rd Symposium on Foundations of Computer
  Science, FOCS~'02}, pages 449--458. IEEE, 2002.
\newblock [\epfmtdoi{10.1109/SFCS.2002.1181969},
  \epfmt{arxiv}{quant-ph/0205128}].

\bibitem[BCG{\etalchar{+}}06]{BCGHS06}
Michael {Ben-Or}, Claude Cr\'epeau, Daniel Gottesman, Avinatan Hassidim, and
  Adam Smith.
\newblock Secure multiparty quantum computation with (only) a strict honest
  majority.
\newblock In {\em Proceedings of the 47th Symposium on Foundations of Computer
  Science, FOCS~'06}, pages 249--260, 2006.
\newblock [\epfmtdoi{10.1109/FOCS.2006.68}, \epfmt{arxiv}{0801.1544}].

\bibitem[BGS13]{BGS13}
Anne Broadbent, Gus Gutoski, and Douglas Stebila.
\newblock Quantum one-time programs.
\newblock In {\em Advances in Cryptology -- CRYPTO 2013}, volume 8043 of {\em
  Lecture Notes in Computer Science}, pages 344--360. Springer, 2013.
\newblock [\epfmtdoi{10.1007/978-3-642-40084-1_20}, \epfmt{arxiv}{1211.1080}].

\bibitem[BW16]{BW16}
Anne Broadbent and Evelyn Wainewright.
\newblock Efficient simulation for quantum message authentication.
\newblock In {\em Proceedings of the 9th International Conference on
  Information Theoretic Security, ICITS 2016}, pages 72--91. Springer, 2016.
\newblock [\epfmtdoi{10.1007/978-3-319-49175-2_4}, \epfmt{arxiv}{1607.03075}].

\bibitem[Can01]{Can01}
Ran Canetti.
\newblock Universally composable security: A new paradigm for cryptographic
  protocols.
\newblock In {\em Proceedings of the 42nd Symposium on Foundations of Computer
  Science, FOCS~'01}, pages 136--145. IEEE, 2001.
\newblock [\epfmtdoi{10.1109/SFCS.2001.959888}].

\bibitem[Can13]{Can13}
Ran Canetti.
\newblock Universally composable security: A new paradigm for cryptographic
  protocols.
\newblock Cryptology ePrint Archive, Report 2000/067, 2013.
\newblock Updated version of~\cite{Can01}.
\newblock [\epfmt{cryptoeprint}{2000/067}].

\bibitem[CDP09]{CDP09}
Giulio Chiribella, Giacomo~Mauro D'Ariano, and Paolo Perinotti.
\newblock Theoretical framework for quantum networks.
\newblock {\em Physical Review A}, 80:022339, August 2009.
\newblock [\epfmtdoi{10.1103/PhysRevA.80.022339}, \epfmt{arxiv}{0904.4483}].

\bibitem[Cha05]{Cha05}
Hoi~Fung Chau.
\newblock Unconditionally secure key distribution in higher dimensions by
  depolarization.
\newblock {\em IEEE Transactions on Information Theory}, 51(4):1451--1468,
  April 2005.
\newblock [\epfmtdoi{10.1109/TIT.2005.844076},
  \epfmt{arxiv}{quant-ph/0405016}].

\bibitem[Dan05]{Dan05}
Christoph Dankert.
\newblock Efficient simulation of random quantum states and operators.
\newblock Master's thesis, University of Waterloo, 2005.
\newblock [\epfmt{arxiv}{quant-ph/0512217}].

\bibitem[DCEL09]{DCEL09}
Christoph Dankert, Richard Cleve, Joseph Emerson, and Etera Livine.
\newblock Exact and approximate unitary 2-designs and their application to
  fidelity estimation.
\newblock {\em Physical Review A}, 80:012304, July 2009.
\newblock [\epfmtdoi{10.1103/PhysRevA.80.012304},
  \epfmt{arxiv}{quant-ph/0606161}].

\bibitem[DNS12]{DNS12}
Fr{\'e}d{\'e}ric Dupuis, Jesper~Buus Nielsen, and Louis Salvail.
\newblock Actively secure two-party evaluation of any quantum operation.
\newblock In {\em Advances in Cryptology -- CRYPTO 2012}, volume 7417 of {\em
  Lecture Notes in Computer Science}, pages 794--811. Springer, 2012.
\newblock [\epfmtdoi{10.1007/978-3-642-32009-5_46},
  \epfmt{cryptoeprint}{2012/304}].

\bibitem[DPS05]{DPS05}
Ivan Damg{\aa}rd, Thomas~Brochmann Pedersen, and Louis Salvail.
\newblock A quantum cipher with near optimal key-recycling.
\newblock In {\em Advances in Cryptology -- CRYPTO 2005}, volume 3621 of {\em
  Lecture Notes in Computer Science}, pages 494--510. Springer, 2005.
\newblock [\epfmtdoi{10.1007/11535218_30}].

\bibitem[DPS14]{DPS14}
Ivan Damg{\aa}rd, Thomas~Brochmann Pedersen, and Louis Salvail.
\newblock How to re-use a one-time pad safely and almost optimally even if {P =
  NP}.
\newblock {\em Natural Computing}, 13(4):469--486, December 2014.
\newblock [\epfmtdoi{10.1007/s11047-014-9454-5}].

\bibitem[FS17]{FS17}
Serge Fehr and Louis Salvail.
\newblock Quantum authentication and encryption with key recycling.
\newblock To appear in the proceedings of EUROCRYPT, 2017.
\newblock [\epfmt{arxiv}{1610.05614}].

\bibitem[GAE07]{GAE07}
David Gross, Koenraad Audenaert, and Jens Eisert.
\newblock Evenly distributed unitaries: on the structure of unitary designs.
\newblock {\em Journal of Mathematical Physics}, 48:052104, 2007.
\newblock [\epfmtdoi{10.1063/1.2716992}, \epfmt{arxiv}{quant-ph/0611002}].

\bibitem[Got03]{Got03}
Daniel Gottesman.
\newblock Uncloneable encryption.
\newblock {\em Quantum Information and Computation}, 3:581, 2003.
\newblock [\epfmt{arxiv}{quant-ph/0210062}].

\bibitem[Gut12]{Gut12}
Gus Gutoski.
\newblock On a measure of distance for quantum strategies.
\newblock {\em Journal of Mathematical Physics}, 53(3):032202, 2012.
\newblock [\epfmtdoi{10.1063/1.3693621}, \epfmt{arxiv}{1008.4636}].

\bibitem[GW07]{GW07}
Gus Gutoski and John Watrous.
\newblock Toward a general theory of quantum games.
\newblock In {\em Proceedings of the 39th Symposium on Theory of Computing,
  STOC~'07}, pages 565--574. ACM, 2007.
\newblock [\epfmtdoi{10.1145/1250790.1250873},
  \epfmt{arxiv}{quant-ph/0611234}].

\bibitem[GYZ16]{GYZ16}
Sumegha Garg, Henry Yuen, and Mark Zhandry.
\newblock New security notions and feasibility results for authentication of
  quantum data.
\newblock Eprint, 2016.
\newblock [\epfmt{arxiv}{1607.07759}].

\bibitem[Har11]{Har11}
Lucien Hardy.
\newblock Reformulating and reconstructing quantum theory.
\newblock Eprint, 2011.
\newblock [\epfmt{arxiv}{1104.2066}].

\bibitem[Har12]{Har12}
Lucien Hardy.
\newblock The operator tensor formulation of quantum theory.
\newblock {\em Philosophical Transactions of the Royal Society of London A:
  Mathematical, Physical and Engineering Sciences}, 370(1971):3385--3417, 2012.
\newblock [\epfmtdoi{10.1098/rsta.2011.0326}, \epfmt{arxiv}{1201.4390}].

\bibitem[Har15]{Har15}
Lucien Hardy.
\newblock Quantum theory with bold operator tensors.
\newblock {\em Philosophical Transactions of the Royal Society of London A:
  Mathematical, Physical and Engineering Sciences}, 373(2047), 2015.
\newblock [\epfmtdoi{10.1098/rsta.2014.0239}].

\bibitem[HLM11]{HLM11}
Patrick Hayden, Debbie Leung, and Dominic Mayers.
\newblock The universal composable security of quantum message authentication
  with key recycling.
\newblock Eprint, presented at QCrypt 2011, 2011.
\newblock [\epfmt{arxiv}{1610.09434}].

\bibitem[Mau12]{Mau12}
Ueli Maurer.
\newblock Constructive cryptography---a new paradigm for security definitions
  and proofs.
\newblock In {\em Proceedings of Theory of Security and Applications, TOSCA
  2011}, volume 6993 of {\em Lecture Notes in Computer Science}, pages 33--56.
  Springer, 2012.
\newblock [\epfmtdoi{10.1007/978-3-642-27375-9_3}].

\bibitem[MR11]{MR11}
Ueli Maurer and Renato Renner.
\newblock Abstract cryptography.
\newblock In {\em Proceedings of Innovations in Computer Science, ICS 2011},
  pages 1--21. Tsinghua University Press, 2011.

\bibitem[MR16]{MR16}
Ueli Maurer and Renato Renner.
\newblock From indifferentiability to constructive cryptography (and back).
\newblock In {\em Theory of Cryptography, Proceedings of {TCC} 2016-B, Part
  {I}}, volume 9985 of {\em Lecture Notes in Computer Science}, pages 3--24.
  Springer, 2016.
\newblock [\epfmtdoi{10.1007/978-3-662-53641-4_1},
  \epfmt{cryptoeprint}{2016/903}].

\bibitem[OH05]{OH05}
Jonathan Oppenheim and Micha\l{} Horodecki.
\newblock How to reuse a one-time pad and other notes on authentication,
  encryption, and protection of quantum information.
\newblock {\em Physical Review A}, 72:042309, October 2005.
\newblock [\epfmtdoi{10.1103/PhysRevA.72.042309},
  \epfmt{arxiv}{quant-ph/0306161}].

\bibitem[PMM{\etalchar{+}}17]{PMMRT17}
Christopher Portmann, Christian Matt, Ueli Maurer, Renato Renner, and Bj\"orn
  Tackmann.
\newblock Causal boxes: Quantum information-processing systems closed under
  composition.
\newblock To appear in IEEE Trans.\ Inf.\ Theory, 2017.
\newblock [\epfmt{arxiv}{1512.02240}].

\bibitem[Por14]{Por14}
Christopher Portmann.
\newblock Key recycling in authentication.
\newblock {\em IEEE Transactions on Information Theory}, 60(7):4383--4396, July
  2014.
\newblock [\epfmtdoi{10.1109/TIT.2014.2317312}, \epfmt{arxiv}{1202.1229}].

\bibitem[PR14]{PR14}
Christopher Portmann and Renato Renner.
\newblock Cryptographic security of quantum key distribution.
\newblock Eprint, 2014.
\newblock [\epfmt{arxiv}{1409.3525}].

\bibitem[Ren05]{Ren05}
Renato Renner.
\newblock {\em Security of Quantum Key Distribution}.
\newblock PhD thesis, Swiss Federal Institute of Technology (ETH) Zurich,
  September 2005.
\newblock [\epfmt{arxiv}{quant-ph/0512258}].

\bibitem[RK05]{RK05}
Renato Renner and Robert K\"onig.
\newblock Universally composable privacy amplification against quantum
  adversaries.
\newblock In Joe Kilian, editor, {\em Theory of Cryptography, Proceedings of
  TCC 2005}, volume 3378 of {\em Lecture Notes in Computer Science}, pages
  407--425. Springer, 2005.
\newblock [\epfmtdoi{10.1007/978-3-540-30576-7_22},
  \epfmt{arxiv}{quant-ph/0403133}].

\bibitem[SBPC{\etalchar{+}}09]{SBCDLP09}
Valerio Scarani, Helle Bechmann-Pasquinucci, Nicolas~J. Cerf, Miloslav
  Du\ifmmode~\check{s}\else \v{s}\fi{}ek, Norbert L\"utkenhaus, and Momtchil
  Peev.
\newblock The security of practical quantum key distribution.
\newblock {\em Reviews of Modern Physics}, 81:1301--1350, September 2009.
\newblock [\epfmtdoi{10.1103/RevModPhys.81.1301}, \epfmt{arxiv}{0802.4155}].

\bibitem[Sha49]{Sha49}
Claude Shannon.
\newblock Communication theory of secrecy systems.
\newblock {\em Bell System Technical Journal}, 28(4):656--715, 1949.

\bibitem[Sim85]{Sim85}
Gustavus~J. Simmons.
\newblock Authentication theory/coding theory.
\newblock In {\em Advances in Cryptology -- CRYPTO~'84}, pages 411--431.
  Springer, 1985.
\newblock [\epfmtdoi{10.1007/3-540-39568-7_32}].

\bibitem[Sim88]{Sim88}
Gustavus~J. Simmons.
\newblock A survey of information authentication.
\newblock {\em Proceedings of the IEEE}, 76(5):603--620, May 1988.
\newblock [\epfmtdoi{10.1109/5.4445}].

\bibitem[SP00]{SP00}
Peter~W. Shor and John Preskill.
\newblock Simple proof of security of the {BB84} quantum key distribution
  protocol.
\newblock {\em Physical Review Letters}, 85:441--444, 2000.
\newblock [\epfmtdoi{10.1103/PhysRevLett.85.441},
  \epfmt{arxiv}{quant-ph/0003004}].

\bibitem[Sti90]{Sti90}
Douglas~R. Stinson.
\newblock The combinatorics of authentication and secrecy codes.
\newblock {\em Journal of Cryptology}, 2(1):23--49, 1990.
\newblock [\epfmtdoi{10.1007/BF02252868}].

\bibitem[Sti94]{Sti94}
Douglas~R. Stinson.
\newblock Universal hashing and authentication codes.
\newblock {\em Designs, Codes and Cryptography}, 4(3):369--380, 1994.
\newblock A preliminary version appeared at CRYPTO~'91.
\newblock [\epfmtdoi{10.1007/BF01388651}].

\bibitem[Unr10]{Unr10}
Dominique Unruh.
\newblock Universally composable quantum multi-party computation.
\newblock In {\em Advances in Cryptology -- EUROCRYPT 2010}, volume 6110 of
  {\em Lecture Notes in Computer Science}, pages 486--505. Springer, 2010.
\newblock [\epfmtdoi{10.1007/978-3-642-13190-5_25}, \epfmt{arxiv}{0910.2912}].

\bibitem[WC81]{WC81}
Mark~N. Wegman and Larry Carter.
\newblock New hash functions and their use in authentication and set equality.
\newblock {\em Journal of Computer and System Sciences}, 22(3):265--279, 1981.

\bibitem[Web15]{Web15}
Zak Webb.
\newblock The {Clifford} group forms a unitary 3-design.
\newblock {\em Quantum Information {\&} Computation}, 16(15{\&}16):1379--1400,
  2015.
\newblock [\epfmt{arxiv}{1510.02769}].

\bibitem[Zhu15]{Zhu15}
Huangjun Zhu.
\newblock Multiqubit {Clifford} groups are unitary 3-designs.
\newblock Eprint, 2015.
\newblock [\epfmt{arxiv}{1510.02619}].

\end{thebibliography}

\newcommand{\etalchar}[1]{$^{#1}$}
\providecommand{\bibhead}[1]{}
\expandafter\ifx\csname pdfbookmark\endcsname\relax%
  \providecommand{\tocrefpdfbookmark}{}
\else\providecommand{\tocrefpdfbookmark}{%
   \phantomsection%
   \addcontentsline{toc}{section}{\refname}}%
\fi

\tocrefpdfbookmark

\end{document}